\DeclareMathAlphabet{\mathbfi}{OML}{cmm}{b}{it}
\let\originalleft\left
\let\originalright\right
\renewcommand{\left}{\mathopen{}\mathclose\bgroup\originalleft}
\renewcommand{\right}{\aftergroup\egroup\originalright}
\newcommand{\vast}{\bBigg@{4}}
\newcommand{\Vast}{\bBigg@{5}}
\newenvironment{equations}[1][]{\subequations\ifx\relax#1\relax\else\label{#1}\fi\align\ignorespaces}{\endalign\ignorespacesafterend\endsubequations}
\def\@spliteq#1{\begin{equation}\begin{split}#1\end{split}\end{equation}}
\def\@spliteqstar#1{\begin{equation*}\begin{split}#1\end{split}\end{equation*}}
\def\splitequation{\collect@body\@spliteq}
\def\csname splitequation*\endcsname{\collect@body\@spliteqstar}
\def\csname endsplitequation*\endcsname{\ignorespacesafterend}
\def\normalthmheadings{\def\@spbegintheorem##1##2##3##4{\trivlist
                 \item[\hskip\labelsep{##3##1\ ##2\@thmcounterend}]##4\addcontentsline{toc}{subsection}{##1\ ##2\@thmcounterend}}
\def\@spopargbegintheorem##1##2##3##4##5{\trivlist
      \item[\hskip\labelsep{##4##1\ ##2}]{##4(##3)\@thmcounterend\ }\addcontentsline{toc}{subsection}{##1\ ##2\@thmcounterend}##5}}
\def\reversethmheadings{\def\@spbegintheorem##1##2##3##4{\trivlist
                 \item[\hskip\labelsep{##3##2\ ##1\@thmcounterend}]##4\addcontentsline{toc}{subsection}{##2\ ##1\@thmcounterend}}
\def\@spopargbegintheorem##1##2##3##4##5{\trivlist
      \item[\hskip\labelsep{##4##2\ ##1}]{##4(##3)\@thmcounterend\ }\addcontentsline{toc}{subsection}{##2\ ##1\@thmcounterend}##5}}
\spnewtheorem*{remark*}{Remark}{\itshape}{\rmfamily}
\spnewtheorem*{example*}{Example}{\itshape}{\rmfamily}
\let\oldendproof\endproof
\def\endproof{\hfill\squareforqed\oldendproof}
\def\blank{{\circ}}
\let\oldre\Re
\let\oldim\Im
\renewcommand{\Re}{\oldre\mathfrak{e}\,}
\renewcommand{\Im}{\oldim\mathfrak{m}\,}
\newcommand{\mathe}{\mathrm{e}}
\newcommand{\mathi}{\mathrm{i}}
\newcommand{\total}{\mathop{}\!\mathrm{d}}
\newcommand{\laplace}{\mathop{}\!\bigtriangleup}
\newcommand{\abs}[1]{{\left\lvert{#1}\right\rvert}}
\newcommand{\norm}[1]{{\left\lVert{#1}\right\rVert}}
\newcommand{\1}{\mathbbm{1}}
\newcommand{\eqend}[1]{\,#1}
\newcommand{\bigo}[1]{\mathcal{O}\left({#1}\right)}
\newcommand{\op}{\mathcal{O}}
\newcommand{\istest}{\in \mathcal{S}(\mathbb{R}^2)}
\newcommand{\expect}[1]{\left\langle{#1}\right\rangle}
\newcommand{\bessel}[3]{\mathop{}\!\mathrm{#1}_{#2}\left(#3\right)}
\DeclareMathOperator{\sgn}{sgn}
\journalname{Annales Henri Poincaré}
\begin{document}

\title{Local operators in the Sine-Gordon model: $\partial_\mu \phi \, \partial_\nu \phi$ and the stress tensor}
\titlerunning{Local operators in the Sine-Gordon model}

\author{Markus B. Fr{\"o}b \and Daniela Cadamuro}
\institute{Institut f{\"u}r Theoretische Physik, Universit{\"a}t Leipzig,\\ Br{\"u}derstra{\ss}e 16, 04103 Leipzig, Germany\\ \email{\href{mailto:mfroeb@itp.uni-leipzig.de}{mfroeb@itp.uni-leipzig.de}, \href{mailto:cadamuro@itp.uni-leipzig.de}{cadamuro@itp.uni-leipzig.de}}}
\authorrunning{M. B. Fr{\"o}b \and D. Cadamuro}

\date{23. October 2022, revised 18. October 2024 and 03. March 2025}

\maketitle
\begin{abstract}
We consider the simplest non-trivial local composite operators in the massless Sine-Gordon model, which are $\partial_\mu \phi \, \partial_\nu \phi$ and the stress tensor $T_{\mu\nu}$. We show that even in the finite regime $\beta^2 < 4 \pi$ of the theory, these operators need additional renormalisation (beyond the free-field normal-ordering) at each order in perturbation theory. We further prove convergence of the renormalised perturbative series for their expectation values, both in the Euclidean signature and in Minkowski space-time, and for the latter in an arbitrary Hadamard state. Lastly, we show that one must add a quantum correction (proportional to $\hbar$) to the renormalised stress tensor to obtain a conserved quantity.
\end{abstract}

\section{Introduction}
\label{sec_intro}

The Sine--Gordon model is a well-studied example of a two-dimensional interacting quantum field theory. Its classical Euclidean action is given by
\begin{equation}
S[\phi] = \int \left[ \frac{1}{2} \partial^\mu \phi \partial_\mu \phi + \frac{1}{2} m_0^2 \phi^2 - g ( V_\beta + V_{-\beta} ) \right] \total^2 x \eqend{,}
\end{equation}
where $V_{\pm \beta} \equiv \mathe^{\pm \mathi \beta \phi}$ are the vertex operators, $m_0 \geq 0$ is a mass parameter, $\beta > 0$ is the coupling constant and $g$ is the interaction cutoff. While a priori, one takes $g$ as a function of compact support or rapid decay (Schwartz function), ultimately one is interested in the adiabatic or infinite-volume limit $g \to \text{const}$. The quantisation of the Sine--Gordon model, in various ranges of the parameters and using diverse approaches, has been treated by many authors, with the earliest results in the framework of Euclidean Constructive Quantum Field Theory. It turns out that for $\beta^2 < 4 \pi$ (the finite regime), after Wick ordering of the vertex operators a convergent perturbation theory in $g$ is obtained. In this regime and for sufficiently large $\abs{ m_0/g }$, Fr{\"o}hlich and Seiler~\cite{froehlich1976,froehlichlectures,froehlichseiler1976} proved convergence of the perturbation expansion of the Euclidean correlation functions of the field $\phi$ and the vertex operators $V_{\pm \beta}$ in infinite volume. In their proof, the mass term was essential to regulate the infrared problems that appear for the massless free scalar field. They also showed the existence of single-particle states and non-trivial scattering.

Still in the finite regime $\beta^2 < 4 \pi$, the existence of the massless limit for the infinite-volume Euclidean correlation functions of vertex operators $V_{\pm \beta}$ and the derivative of the interacting field $\partial_\mu \phi$ was shown by Fr{\"o}hlich and Park~\cite{park1977,froehlichpark1977}, who also proved that the Osterwalder--Schrader axioms are satisfied. For $4 \pi \leq \beta^2 < 8 \pi$ (the super-renormalisable regime), Wick ordering is not sufficient anymore, and a new divergent term that needs to be renormalised appears in perturbation theory each time $\beta^2$ crosses a threshold $n/(n+1) 8 \pi$. In this regime, the ultraviolet stability of the Sine--Gordon model (i.e., the convergence of the renormalised partition function in finite volume together with exponential bounds) has been shown by Benfatto, Gallavotti, Nicol{\`o}, Renn and Steinmann~\cite{benfattogallavottinicolo1982,nicolo1983,nicolorennsteinmann1986} using cluster expansions and Dimock and Hurd~\cite{dimockhurd1993,dimockhurd2000} and Renn and Steinmann~\cite{rennsteinmann1986} using renormalisation group techniques, for both the massive and the massless case. For $\beta^2 < 16/3 \pi$ (the second threshold), Dimock~\cite{dimock1998} has also shown the existence of correlation functions of the vertex operators in finite volume. Using Hamilton--Jacobi-like (or Wilson--Polchinski) flow equations, Brydges and Kennedy~\cite{brydgeskennedy1987} showed convergence of the partition function in the massive case and infinite volume for $\beta^2 < 16/3 \pi$, and Bauerschmidt and Bodineau~\cite{bauerschmidtbodineau2020} extended their results to $\beta^2 < 6 \pi$.

Finally, for $\beta^2 = 8 \pi$ the Sine--Gordon model becomes strictly renormalisable. Here, only perturbative renormalisability has been proven by Nicol{\`o} and Perfetti~\cite{nicoloperfetti1989}, while the non-perturbative existence of the model is unknown.

In the full range $0 \leq \beta^2 \leq 8 \pi$, the Sine-Gordon model has been conjectured to be equivalent to the Thirring model, entailing a boson-fermion equivalence (Coleman's equivalence~\cite{coleman1975}, see also~\cite{schroertruong1977,morchiopierottistrocchi1992}). Under this equivalence, correlation functions of vertex operators and derivatives of $\phi$ in the Sine--Gordon model are equal to correlation functions of fermion bilinears and currents in the Thirring model if the parameters of both models are suitably identified; in particular, the Sine--Gordon coupling $g$ is identified with the mass of the fermion while $\beta$ is related to the current-current coupling $\lambda$ in the Thirring model. This equivalence has been proven for $\beta^2 < 4 \pi$ by Fr{\"o}hlich and Seiler~\cite{froehlichseiler1976} in the massive case, by Benfatto, Falco and Mastropietro~\cite{benfattofalcomastropietro2009} in the massless case and in finite volume, and by Dimock~\cite{dimock1998} for $\beta^2 = 4 \pi$ (where the Thirring model becomes free) also in finite volume. Only recently, Bauerschmidt and Webb~\cite{bauerschmidtwebb2020} achieved a proof of this correspondence for the massless case in infinite volume, also for $\beta^2 = 4\pi$.

On the other hand, the classical massless Sine--Gordon model is integrable and one expects that integrability survives quantisation, such that the S-matrix factorises into two-particle scattering functions. Their form for the Sine--Gordon model has been conjectured by Zamolodchikov and Zamolodchikov~\cite{zamolodchikov1979}, but the integrable structure and the factorisation of the S-Matrix are not visible in the previous constructions. The conjectured S-matrix of the massless Sine--Gordon model has been studied in the form factor programme by Babujian, Karowski and collaborators~\cite{babujianetal1999,babujiankarowski2002}. In this approach one computes Wightman $n$-point functions of interacting pointlike local fields in terms of certain matrix components (``form factors'') of these. However, one has to deal with infinite expansions whose convergence is difficult to control, and therefore the existence of the fields themselves is currently out of reach in this approach.

A rigorous construction of the massless Sine--Gordon model has been achieved directly in Minkowski spacetime in the framework of perturbative Algebraic Quantum Field Theory (pAQFT) by Bahns and Rejzner~\cite{bahnsrejzner2018}. In the finite regime, they proved that the perturbation series for the S-matrix with fixed interaction cutoff, as well as the derivative of the interacting field $\partial_\mu \phi$ and the vertex operators $V_{\pm\beta}$, which are given as formal power series both in the coupling $g$ and in $\hbar$, converge. However, the expected factorisation of the S-matrix has not been shown, which is probably only visible in the adiabatic (infinite volume) limit. In a later paper together with Fredenhagen~\cite{bahnsrejznerfredenhagen2021}, the authors also constructed a family of unitary operators (relative S-matrices) which generate the local algebras of observables (vertex operators and derivative of $\phi$) of the model, and discussed the equivalence with the massive Thirring model.

In the general framework of Algebraic Quantum Field Theory, an alternative new approach to the construction of integrable quantum field theories has been carried out by Lechner starting from an idea of Schroer~\cite{schroer1997}. In this approach, a model is characterized in terms of its $C^\ast$-algebras of local observables obeying certain consistency conditions (Haag-Kastler axioms). The factorized S-matrix is an input to the construction of the theory. This approach uses as its starting point observables localized in wedge regions (wedge commutativity) and shows existence of strictly local observables in a second step, using abstract methods based on the theory of von Neumann algebras. This leads to a fully rigorous construction of the theory for a large class of scalar S-matrices~\cite{lechner2008}. While this class does not include the massless Sine--Gordon model, there are recent steps towards the Sine--Gordon model by Cadamuro and Tanimoto~\cite{cadamurotanimoto2018}. A characterisation of local observables on the level of expansion coefficients into an infinite series of interacting creation and annihilation operators has been carried out for scalar S-matrix models by Bostelmann and Cadamuro~\cite{bostelmanncadamuro2015}; for the massive Ising model, this leads to a rigorous construction of local observables~\cite{bostelmanncadamuro2019}. However, outside these special cases, showing convergence of the series remains difficult.

The Sine--Gordon model has also been studied in the framework of stochastic quantisation, where one introduces a stochastic partial differential equation depending on an auxiliary (``stochastic'') time $\tau$. Computing equal-time stochastic expectation values of the solutions to this stochastic PDE, the Euclidean correlation functions in the quantum theory arise in the limit $\tau \to \infty$. Using Hairer's framework of regularity structures to solve the stochastic PDE for the Sine--Gordon model, Chandra, Hairer and Shen~\cite{hairershen2014,chandrahairershen2018} have shown the short-time existence of solutions in the finite and super-renormalisable regime $0 \leq \beta^2 < 8 \pi$, but the existence of the infinite-time limit is unproven so far. The measure of the massive Sine--Gordon model in finite volume and for $\beta^2 < 4 \pi$ has also been constructed using stochastic control techniques by Oh, Robert, Sosoe and Wang~\cite{ohrobertsosoewang2021} and by Barashkov~\cite{barashkov2022}.

In this paper, we consider the simplest local composite operators of the massless ($m_0 = 0$) Sine--Gordon model (beyond vertex operators and the derivative of the field), which are $\op_{\mu\nu} \equiv \partial_\mu \phi \, \partial_\nu \phi$ and the stress tensor $T_{\mu\nu} \equiv \op_{\mu\nu} - \frac{1}{2} \eta_{\mu\nu} \eta^{\rho\sigma} \op_{\rho\sigma} + g \eta_{\mu\nu} ( V_\beta + V_{-\beta} )$ (with $\eta_{\mu\nu}$ replaced by $\delta_{\mu\nu}$ in the Euclidean case). We consider both Euclidean and Minkowski signature and work in the finite regime ($\beta^2 < 4 \pi$) of the massless theory. To that end, we use the well-known Gell-Mann--Low formula for the perturbation series of interacting fields, and show convergence of the renormalized perturbation series with an adiabatic interaction cutoff $g$, after removal of the initial IR and UV cutoffs. While we do not attempt to remove the adiabatic cutoff, we show convergence for arbitrary Schwartz functions $g$, which in the Minkowski case represents a technical improvement over~\cite[Thm.~6]{bahnsrejzner2018}, whose results only hold if the support of $g$ is small enough.

In the Minkowski case, we use the general framework of pAQFT, whose main advantage is the clean separation of algebraic issues (including renormalisation) from the construction of a state. In contrast to the traditional theoretical physics approach to renormalisation, which involves the introduction of a UV cutoff and counterterms, in the pAQFT approach the renormalisation problem is reduced to the extension of certain distributions to coinciding points, and no explicit counterterms are needed. Nevertheless, it has been shown~\cite{popineaustora2016,pinter2001,duetschfredenhagen2004} that the freedom in the extension exactly corresponds to the usual finite renormalisation freedom, i.e., to the choice of the finite parts of counterterms. However, to show the similarities with the Euclidean case and connect to the traditional approach, we extend the established pAQFT approach and consider explicit cutoffs also in the Minkowski case. This has both advantages and disadvantages: with a finite UV cutoff, the appearing distributions can be extended to coinciding points simply by continuity, since the regularised distributions are smooth functions. Since in general the construction of the extension is quite involved~\cite{brunettifredenhagen2000,hollandswald2002,bahnswrochna2012}, this is a major simplification. However, the problem reappears in the limit of vanishing cutoff, where we have to show that suitable counterterms exist to cancel the resulting divergences. This is a disadvantage of our approach, which is not present in the usual pAQFT approach. Moreover, we have to show that these counterterms are local, i.e., proportional to Dirac $\delta$'s and their derivatives. This then also shows (for the distributions in question) that we obtain the same results as the conventional pAQFT approach, since the original distribution was already well-defined for non-coinciding points and only the extension to the diagonal must be performed. In particular, in Lemmas~\ref{lemma_euclid_distribution} and~\ref{lemma_mink_distribution} we explicitly determine the finite renormalisation freedom, which is the same for the Euclidean and Minkowski case. Requiring the extension (or renormalised distribution) to be Euclidean or Lorentz covariant and preserving the scaling degree, we find exactly the same result as in the pAQFT approach. We also note that a similar approach to ours was previously used to implement a version of dimensional regularisation in the pAQFT framework~\cite{duetschfredenhagenkellerrejzner2014}.

Our results are as follows: We start with the case of Euclidean signature. In view of the Gell-Mann--Low formula, we first show that the renormalised expectation values of $\op_{\rho\sigma}$ and $T_{\rho\sigma}$ are well-defined (in the sense of distributions) after removal of the IR and UV cutoffs:
\begin{theorem}[Renormalisation in Euclidean signature]
\label{thm_euclid_renorm}
Consider the massless Euclidean Sine-Gordon model in the finite regime $\beta^2 < 4\pi$ and with the free-field covariance $C^{\Lambda,\epsilon}$ with IR cutoff $\Lambda$ and UV cutoff $\epsilon$. There exists a choice of local counterterms (diverging logarithmically with the UV cutoff $\epsilon$) such that the renormalised expectation values
\begin{equation}
\label{eq:thm_euclid_renorm_expect}
\expect{ \mathcal{N}_\mu\left[ \op_{\rho\sigma}(z) \right] \prod_{j=1}^n \mathcal{N}_\mu\left[ V_{\sigma_j \beta}(x_j) \right] }^{\Lambda,\epsilon}_{0,\text{ren}}
\end{equation}
in the free theory, with $\mathcal{N}_\mu$ denoting normal ordering with respect to the covariance $C^{\mu,\epsilon}$ (where $\mu$ is a fixed scale) and $\sigma_j = \pm 1$, are well-defined distributions in the physical limit $\Lambda,\epsilon \to 0$. For the stress tensor, the physical limit
\begin{equation}
\lim_{\Lambda,\epsilon \to 0} \expect{ \mathcal{N}_\mu\left[ T_{\rho\sigma}(z) \right] \prod_{j=1}^n \mathcal{N}_\mu\left[ V_{\sigma_j \beta}(x_j) \right] }^{\Lambda,\epsilon}_{0,\text{ren}}
\end{equation}
is finite even without subtracting counterterms. The expectation values involving $\op_{\rho\sigma}$ vanish in the physical limit unless the neutrality condition $\sum_{j=1}^n \sigma_j = 0$ is fulfilled, while the ones involving the stress tensor vanish unless $\sum_{j=1}^n \sigma_j \in \{-1,0,1\}$.
\end{theorem}
\begin{remark*}
The required counterterms can be obtained by combining Eqs.~\eqref{eq:euclid_renorm_expecteps}, \eqref{eq:umunu_def} and~\eqref{eq:umunu_div}, and read (in the limit $\Lambda \to 0$)
\begin{equation}
- \frac{\beta^2}{4 \pi} \delta_{\rho\sigma} \ln(\mu \epsilon) \prod_{1 \leq j < k \leq n} \left[ \mu^2 (x_j-x_k)^2 \right]^{\sigma_j \sigma_k \frac{\beta^2}{4 \pi}} \sum_{k=1}^n \delta(x_k-z) \eqend{,}
\end{equation}
i.e., for each vertex operator there is one local counterterm. Since the counterterms are proportional to $\delta_{\rho\sigma}$, their contribution to the expectation value of the stress tensor cancels.
\end{remark*}

We then show convergence of the renormalised Gell-Mann--Low perturbation series with an adiabatic cutoff in the physical limit, i.e., for vanishing IR and UV cutoffs:
\begin{theorem}[Convergence of the renormalised perturbation series]
\label{thm_euclid_conv}
Under the same assumptions as in Theorem~\ref{thm_euclid_renorm} and with a non-negative adiabatic cutoff function $0 \leq g \istest$, the perturbative series for the (normalised, interacting) Gell-Mann--Low expectation value of $\op_{\rho\sigma}$
\begin{splitequation}
\label{eq:thm_euclid_conv_series}
&\expect{ \mathcal{N}_\mu\left[ \op_{\rho\sigma}(f) \right] }_\text{int,ren} \equiv \\
&\frac{ \sum_{n=0}^\infty \frac{1}{n!} \int\dotsi\int \sum_{(\sigma_1,\ldots,\sigma_n) \in \{\pm 1\}^n} \expect{ \mathcal{N}_\mu\left[ \op_{\rho\sigma}(f) \right] \prod_{j=1}^n \mathcal{N}_\mu\left[ V_{\sigma_j \beta}(x_j) \right] }^{0,0}_{0,\text{ren}} \prod_{i=1}^n g(x_i) \total^2 x_i }{ \sum_{n=0}^\infty \frac{1}{n!} \int\dotsi\int \sum_{(\sigma_1,\ldots,\sigma_n) \in \{\pm 1\}^n} \expect{ \prod_{j=1}^n \mathcal{N}_\mu\left[ V_{\sigma_j \beta}(x_j) \right] }^{0,0}_{0,\text{ren}} \prod_{i=1}^n g(x_i) \total^2 x_i }
\end{splitequation}
is convergent, where the operator $\op_{\rho\sigma}$ is smeared with a test function $f \istest$ and the physical limit $\Lambda, \epsilon \to 0$ is taken termwise. There exists a constant $K > 0$ (depending on $g$ and $\beta$) and a constant $C > 0$ (depending on $f$) such that the renormalised expectation value is bounded by
\begin{equation}
\label{eq:thm_euclid_conv_bound}
C \sum_{n=0}^\infty n^2 K^n \left.\begin{cases} (n!)^{-1} & \beta^2 < 2 \pi \\ (n!)^{\frac{\beta^2}{2 \pi} - 2} & 2 \pi \leq \beta^2 < 4 \pi \end{cases} \right\} < \infty \eqend{.}
\end{equation}
The same holds for the smeared stress tensor $T_{\rho\sigma}(f)$, with the constant $C$ also depending on $g$.
\end{theorem}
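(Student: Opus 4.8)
The plan is to prove convergence of the renormalised Gell-Mann--Low series for $\mathcal{O}_{\mu\nu}$ by reducing the problem to controlling, uniformly in $n$, the individual coefficients
\[
\expect{ \mathcal{N}_\mu\left[ \op_{\mu\nu}(f) \right] \prod_{j=1}^n \mathcal{N}_\mu\left[ V_{\sigma_j \beta}(x_j) \right] }^{0,0}_{0,\text{ren}} \eqend{,}
\]
and then showing that after integration against $\prod_i g(x_i)$ the resulting numerical series is dominated by the explicit bound \eqref{eq:thm_euclid_conv_bound}. First I would use Theorem~\ref{thm_euclid_renorm} to write each renormalised coefficient in closed form: by Wick's theorem the product of normal-ordered vertex operators contracts into a Coulomb-gas factor $\prod_{i<j} \abs{x_i - x_j}^{\sigma_i \sigma_j \beta^2 / (2\pi)}$ (the physical limit of the covariance), and the two derivative insertions in $\op_{\mu\nu}(z)=\partial_\mu\phi\,\partial_\nu\phi$ produce, after renormalisation, a sum of terms in which $\partial\phi$ is either contracted with a vertex operator (yielding factors $\partial_z \log\abs{z-x_j}$, i.e.\ a Coulomb field $\sum_j \sigma_j \beta (z-x_j)/\abs{z-x_j}^2$) or self-contracted (the piece removed by the counterterm). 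The neutrality constraint $\sum_j \sigma_j = 0$ from the theorem ensures the $z$-independent IR behaviour is controlled and kills the non-neutral sectors.

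Next I would smear in $z$ against $f$ and estimate. The key analytic input is the bound on the neutral Coulomb-gas partition function: writing $\rho_n$ for the sum over charges $\sigma_i$ of the integrated Coulomb factor, the Fröhlich--Park/Fröhlich--Seiler estimates in the finite regime give $\abs{\rho_n} \le K^n (n!)^{\max(0,\beta^2/(2\pi)-1)}$, and this is precisely the mechanism behind the factorial weights in \eqref{eq:thm_euclid_conv_bound}. The two extra Coulomb-field insertions from $\op_{\mu\nu}$ each contribute a sum over the $n$ charges, hence the overall prefactor $n^2$; after smearing against $f$ the short-distance singularities $\abs{z-x_j}^{-1}$ are integrable in two dimensions (for $\beta^2<4\pi$ the vertex--vertex singularities $\abs{x_i-x_j}^{-\beta^2/(2\pi)}$ are likewise locally integrable), so one extracts a finite constant $C$ depending only on $f$ (and on $g$ through the integration domain). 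The normalisation by the partition function in the denominator of \eqref{eq:thm_euclid_conv_series} is handled by the standard observation that it is itself a convergent series bounded below, so it does not spoil the estimate.

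The main obstacle I anticipate is obtaining the $n$-uniform factorial bound on the Coulomb-gas integrals \emph{with} the two extra derivative insertions, rather than for the bare partition function. The difficulty is twofold: one must show that inserting the singular Coulomb field $\sum_j \sigma_j \beta\,(z-x_j)\abs{z-x_j}^{-2}$ does not worsen the factorial growth beyond the harmless polynomial factor $n^2$, and one must control the near-coincidence region where $z$ approaches some $x_j$ uniformly in $n$. I expect to handle this by adapting the tree-graph / conditioning bounds used for the neutral Coulomb gas: one separates the charge nearest to $z$, integrates out the resulting integrable singularity against $f$ using the smearing, and bounds the remainder by the $(n-1)$-point partition function, paying only a factor of $n$ per insertion. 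The regime split in \eqref{eq:thm_euclid_conv_bound} at $\beta^2 = 2\pi$ reflects whether the combinatorial sum over charges beats the growth of the Coulomb factor; for $\beta^2 < 2\pi$ the $(n!)^{-1}$ from the symmetry factor wins outright, while for $2\pi \le \beta^2 < 4\pi$ the residual factorial $(n!)^{\beta^2/(2\pi)-2}$ still decays fast enough to secure convergence. The stress-tensor case then follows by the same argument applied to $T_{\mu\nu} = \op_{\mu\nu} - \frac12 \delta_{\mu\nu}\op_\rho{}^\rho + g\,\delta_{\mu\nu}(V_\beta + V_{-\beta})$, where the extra vertex-operator term merely shifts a charge and enlarges the neutrality window to $\sum_j \sigma_j \in \{-1,0,1\}$, introducing the additional $g$-dependence of $C$.
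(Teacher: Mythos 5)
Your overall architecture matches the paper's: treat numerator and denominator separately, use the closed-form renormalised coefficients from Theorem~\ref{thm_euclid_renorm}, observe that the denominator is a positive series bounded below by its first term, attribute the factor $n^2$ to the double sums over charges produced by the two $\partial\phi$ insertions, and reduce the stress-tensor case to a charge-shifted $(m+1)$-point configuration. The gap is in the one step you yourself flag as the main obstacle: you propose to control the Coulomb-gas integrals with the singular insertions by ``separating the charge nearest to $z$'' and ``bounding the remainder by the $(n-1)$-point partition function, paying only a factor of $n$ per insertion''. This does not work as stated. After you single out one charge $x_{j_0}$, the integrand still carries the full interaction $\prod_{k\neq j_0}\abs{x_{j_0}-x_k}^{\sigma_{j_0}\sigma_k\beta^2/(2\pi)}$ with every other charge; for opposite charges these factors are unbounded, so you cannot integrate out $x_{j_0}$ against $f$ and be left with the $(n-1)$-point partition function times an $n$-independent constant. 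Controlling that coupling uniformly in $n$ \emph{is} the content of the theorem, and the Fr\"ohlich--Seiler/Fr\"ohlich--Park bounds you invoke cover only the bare partition function (the denominator), not the correlation-type integrals with $u_{\mu\nu}^\text{ren}$ and the Coulomb-field factors inserted.

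The paper's resolution is concrete and different from your sketch: for a neutral configuration the Cauchy determinant formula rewrites $\prod_{i<j}\abs{x_i-x_j}^{\sigma_i\sigma_j\beta^2/(2\pi)}$ as $\abs{\det(1/(\chi_i-\upsilon_j))}^{\beta^2/(2\pi)}$ in complexified coordinates; estimating the determinant by the permanent and using $\ell^p$-subadditivity yields $(m!)^{\max(\beta^2/(2\pi)-1,0)}\sum_\pi\prod_j\abs{x_j-y_{\pi(j)}}^{-\beta^2/(2\pi)}$, i.e.\ a sum over a \emph{single pairing} of two-body factors. Only after this factorisation do the insertions $1/\abs{x_j-z}$ and the logarithms in $u_{\mu\nu}^\text{ren}$ attach to at most one or two pairs each, so that they can be integrated out with H\"older's and Young's inequalities (with exponents chosen so that $(\mu\abs{x})^{-\beta^2 q/(2\pi)}$ is locally integrable), producing the constants $K$ and $C$ and the announced factorial weights. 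A second, smaller inaccuracy: the counterterm of Theorem~\ref{thm_euclid_renorm} is not for the self-contraction of $\partial_\mu\phi\,\partial_\nu\phi$ (that is removed by normal ordering) but for the diagonal terms $j=k$ of the double sum, $\partial_\mu C(x_j,z)\,\partial_\nu C(x_j,z)\sim\abs{x_j-z}^{-2}$, which is \emph{not} locally integrable in two dimensions; your blanket claim that the short-distance singularities $\abs{z-x_j}^{-1}$ are integrable applies only to the off-diagonal terms, while the diagonal ones must be replaced by $u_{\mu\nu}^\text{ren}$, whose logarithmic kernels then also have to be carried through the uniform-in-$n$ estimates.
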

\begin{remark*}
While we take the limit $\Lambda, \epsilon \to 0$ termwise, we believe that it would also be possible to show convergence with finite cutoffs and take the limit after resumming the series. The main issue here is a technical one: to prove the required bounds, we use a well-known formula~\eqref{eq:cauchy_determinant} to transform the renormalised expectation value at each order into the determinant of a Cauchy matrix, which is necessary to show absolute convergence of the series. For finite cutoffs, it is still possible to write the renormalised expectation value at each order as the determinant of a matrix (see for example~\cite{bahnsrejzner2018}), which is however much more complicated. We do not see any fundamental difficulty to prove convergence of the perturbative series for the Gell-Mann--Low expectation value of $\op_{\rho\sigma}$ also in this case, such that taking the limit term by term is justified. However, the computations become much more involved, such that we restrict to the termwise limit.
\end{remark*}

Lastly, we show that a modified stress tensor, obtained by a rescaling of the coupling constant, fulfills the continuity equation in the quantum theory:
\begin{theorem}[Conservation of the stress tensor]
\label{thm_euclid_cons}
Under the same assumptions as in Theorem~\ref{thm_euclid_conv}, a modified stress tensor $\hat{T}_{\rho\sigma}$ is conserved in the quantum theory in expectation: we have
\begin{equation}
\label{eq:thm_euclid_cons_eq}
\expect{ \mathcal{N}_\mu\left[ \hat{T}_{\rho\sigma}(\partial^\rho f) \right] }_\text{int,ren} = 0
\end{equation}
for all $f \istest$ such that $g$ is constant on the support of $f$. The required modification is a rescaling of the coupling $g$:
\begin{equation}
\label{eq:thm_euclid_cons_stresstensor}
\hat{T}_{\mu\nu} = \op_{\mu\nu} - \frac{1}{2} \delta_{\mu\nu} \op_\rho{}^\rho + g \left( 1 - \frac{\beta^2}{8 \pi} \right) \delta_{\mu\nu} ( V_\beta + V_{-\beta} ) \eqend{.}
\end{equation}
\end{theorem}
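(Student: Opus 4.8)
The plan is to prove the continuity equation order by order in the Gell-Mann--Low expansion of the numerator in~\eqref{eq:thm_euclid_conv_series}: since the denominator is a nonzero convergent series and the full expansion converges absolutely by Theorem~\ref{thm_euclid_conv}, it suffices to show that the numerator of $\expect{\mathcal{N}_\mu[\hat T_{\mu\nu}(\partial^\mu f)]}_\text{int,ren}$ vanishes, and that same convergence legitimises the rearrangements below. The classical input I would record first is the free-field operator identity $\partial^\mu\bigl(\op_{\mu\nu} - \tfrac12\delta_{\mu\nu}\op_\rho{}^\rho\bigr) = \mathcal{N}_\mu\bigl[(\laplace\phi)\,\partial_\nu\phi\bigr]$, where the two $\mathcal{N}_\mu[\partial_\mu\phi\,\partial^\mu\partial_\nu\phi]$ contributions cancel between the divergence and the trace term. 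Integrating by parts against $\partial^\mu f$ (which costs a sign), the derivative part of $\hat T_{\mu\nu}(\partial^\mu f)$ reduces, inside any free correlator with vertices, to $-\int f(z)\,\expect{\mathcal{N}_\mu[(\laplace\phi)\,\partial_\nu\phi](z)\,\prod_j \mathcal{N}_\mu[V_{\sigma_j\beta}(x_j)]}\,\total^2 z$.

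Next I would expand this correlator with Wick's theorem. Both $\laplace\phi(z)$ and $\partial_\nu\phi(z)$ contract with the vertices, and with $\laplace C^{\Lambda,\epsilon} = -\delta_\epsilon$ one obtains $\beta^2 \sum_{j,k}\sigma_j\sigma_k\,\delta_\epsilon(z-x_j)\,\partial_\nu C^{\Lambda,\epsilon}(z-x_k)$ times the vertex correlator (the residual self-contraction of the two fields at $z$ vanishes by parity, being the gradient of an even function at coincidence). I would split this into an off-diagonal part ($j\neq k$) and a diagonal part ($j=k$). For $j\neq k$ the factor $\delta_\epsilon$ localises $z=x_j$, and the Schwinger--Dyson identity $\beta^2\sum_{k\neq j}\sigma_j\sigma_k\,\partial_\nu C(x_j-x_k)\,\expect{\prod_l V_{\sigma_l\beta}} = -\partial_\nu^{x_j}\expect{\prod_l V_{\sigma_l\beta}}$ (the quantum equation of motion for the vertices) turns the off-diagonal piece into a total derivative in $x_j$; integrating by parts and using that $g$ is constant on $\supp f$, so that $\partial_\nu(fg) = g\,\partial_\nu f$, this produces at order $n$ an $n$-vertex correlator $I_n$ with exactly one vertex smeared by $g\,\partial_\nu f$.

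The step I expect to be the main obstacle is the diagonal $j=k$ term, which carries the genuine quantum correction. Here both fields contract with the same vertex, giving $\int f(z)\,\delta_\epsilon(z-x_j)\,\partial_\nu C^{\Lambda,\epsilon}(z-x_j)\,\total^2 z$; by parity only the gradient of $f$ survives, leaving $\partial_\rho f(x_j)\int w^\rho\,\delta_\epsilon(w)\,\partial_\nu C^{\Lambda,\epsilon}(w)\,\total^2 w$. The crucial point is that this integral must be evaluated with the \emph{consistent} regularisation $\delta_\epsilon = -\laplace C^{\Lambda,\epsilon}$, because it is concentrated on the scale $\abs{w}\sim\epsilon$ where $C^{\Lambda,\epsilon}$ departs from the logarithm; a computation with the unregularised $\partial_\nu C \sim -w_\nu/(2\pi w^2)$ gives the \emph{wrong} value $-\tfrac{1}{4\pi}\delta^\rho_\nu$, whereas the correct evaluation (e.g.\ $C^{\Lambda,\epsilon}(w)\sim-\tfrac{1}{4\pi}\ln(w^2+\epsilon^2)$, so that $\delta_\epsilon(w)=\epsilon^2/[\pi(w^2+\epsilon^2)^2]$) yields the scheme-independent $-\tfrac{1}{8\pi}\delta^\rho_\nu$, reproducing the conformal weight $\beta^2/(8\pi)$ of the vertex operators. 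Hence the diagonal term contributes $-\tfrac{\beta^2}{8\pi}I_n$, and the derivative part of $\hat T_{\mu\nu}(\partial^\mu f)$ at order $n$ equals $-\bigl(1-\tfrac{\beta^2}{8\pi}\bigr)I_n$.

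Finally I would handle the vertex-operator part $g\,\kappa\,\delta_{\mu\nu}(V_\beta+V_{-\beta})$ of $\hat T_{\mu\nu}$: smeared against $\partial^\mu f$ it inserts one additional vertex weighted by $g\,\partial_\nu f$, which at order $n$ is precisely $\kappa\,I_{n+1}$. Summing the two families over all orders makes them telescope into $\bigl[\kappa-(1-\tfrac{\beta^2}{8\pi})\bigr]\sum_{n\geq 1}I_n$, which vanishes for every admissible $f$ if and only if $\kappa = 1-\tfrac{\beta^2}{8\pi}$, i.e.\ exactly the rescaling~\eqref{eq:thm_euclid_cons_stresstensor}. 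It then remains to check the bookkeeping: that the termwise physical limit $\Lambda,\epsilon\to0$ commutes with these finite manipulations (the anomaly integral is taken at finite $\epsilon$ and has a finite limit), that the stress tensor needs no counterterms by Theorem~\ref{thm_euclid_renorm} so that the finite anomaly is unambiguous, and that the absolute convergence of Theorem~\ref{thm_euclid_conv} justifies the rearrangement and telescoping of the series.
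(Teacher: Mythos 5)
Your proposal is correct and follows essentially the same route as the paper: the same split into off-diagonal contractions (which, via the equation of motion for the vertex correlator and an integration by parts using that $g$ is constant on $\supp f$, exactly cancel the classical vertex-operator term of $T_{\mu\nu}$) and diagonal contractions (which produce the $\beta^2/(8\pi)$ quantum correction). The only presentational difference is that you extract the local anomaly from the finite-$\epsilon$ concentrated integral $\int w^\rho\,\delta_\epsilon(w)\,\partial_\nu C^{\Lambda,\epsilon}(w)\,\total^2 w=-\tfrac{1}{8\pi}\delta^\rho_\nu$, whereas the paper obtains the same net $-\tfrac{\pi}{2}\partial_\nu f(x)$ (after the trace subtraction) directly from its renormalised distribution $u_{\mu\nu}^\text{ren}$ and the identity $\laplace\ln\left[\mu^2(x-z)^2\right]=4\pi\delta^2(x-z)$; both yield the identical coefficient and hence the same modified stress tensor.
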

\begin{remark*}
In general, we expect that the modified stress tensor fulfills appropriate Ward identities associated to the conservation, of the form
\begin{splitequation}
\label{eq:euclid_cons_ward}
&\expect{ \mathcal{N}_\mu\left[ \hat{T}_{\rho\sigma}(\partial^\rho f) \right] \phi(x_1) \cdots \phi(x_n) }_\text{int,ren} \\
&\quad= - \sum_j \partial_\sigma f(x_j) \expect{ \phi(x_1) \cdots \phi(x_{j-1}) \phi(x_{j+1}) \cdots \phi(x_n) }_\text{int,ren} \eqend{.}
\end{splitequation}
The result~\eqref{eq:thm_euclid_cons_eq} is then the case $n = 0$, and we believe that it is possible to show the identities~\eqref{eq:euclid_cons_ward} using the methods that we are using. However, the computations become much more involved, and we leave a verification of the Ward identities to future work.
\end{remark*}

Similar results apply to the case of Minkowski signature. Namely, we show that the renormalised expectation values of time-ordered products involving $\op_{\mu\nu}$ and $T_{\mu\nu}$ in any quasi-free Hadamard state, regularized with IR and UV cutoffs, are well-defined in the sense of distributions when the cutoffs are removed. We prove:
\begin{theorem}[Renormalisation in Minkowski space-time]
\label{thm_mink_renorm}
Consider the massless Lorentzian Sine-Gordon model in the finite regime $\beta^2 < 4\pi$ and a quasi-free state $\omega^{\Lambda,\epsilon}$ in the vacuum sector whose two-point function has an IR cutoff $\Lambda$ and UV cutoff $\epsilon$. There exists a choice of local counterterms (diverging logarithmically with the UV cutoff $\epsilon$) such that the renormalised expectation values of time-ordered products
\begin{equation}
\omega^{\Lambda,\epsilon}\left( \mathcal{T}\left[ \op_{\mu\nu}(z) \otimes \bigotimes_{j=1}^n V_{\sigma_j\beta}(x_j) \right] \right)
\end{equation}
with $\sigma_j = \pm 1$ in the free theory are well-defined distributions in the physical limit $\Lambda,\epsilon \to 0$. For the stress tensor, the physical limit
\begin{equation}
\lim_{\Lambda,\epsilon \to 0} \omega^{\Lambda,\epsilon}\left( \mathcal{T}\left[ T_{\mu\nu}(z) \otimes \bigotimes_{j=1}^n V_{\sigma_j \beta}(x_j) \right] \right)
\end{equation}
is finite even without subtracting counterterms. The expectation values involving $\op_{\mu\nu}$ vanish in the physical limit unless the neutrality condition $\sum_{j=1}^n \sigma_j = 0$ is fulfilled, while the ones involving the stress tensor vanish unless $\sum_{j=1}^n \sigma_j \in \{-1,0,1\}$.
\end{theorem}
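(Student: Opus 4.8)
The plan is to reduce the statement to the Euclidean analysis of Theorem~\ref{thm_euclid_renorm}, exploiting the fact that after applying Wick's theorem for the quasi-free state $\omega^{\Lambda,\epsilon}$ the time-ordered expectation values acquire exactly the same combinatorial form as the Euclidean correlators, with the Euclidean covariance $C^{\Lambda,\epsilon}$ replaced everywhere by the time-ordered (Feynman) two-point function $W_{\mathrm F}^{\Lambda,\epsilon}$ of the state. First I would write out this formula explicitly: the time-ordered product of the (normal-ordered) vertex operators contributes a factor $\exp\bigl(-\beta^2\sum_{i<j}\sigma_i\sigma_j W_{\mathrm F}^{\Lambda,\epsilon}(x_i,x_j)\bigr)$ together with finite self-energy factors fixed by the normal ordering w.r.t.\ $C^{\mu,\epsilon}$, while the insertion of $\op_{\mu\nu}(z)$ produces, via the remaining Wick contractions, a ``self-energy'' term $\partial_\mu\partial_{\nu'}(W_{\mathrm F}^{\Lambda,\epsilon}-C^{\mu,\epsilon})(z,z')\big|_{z'=z}$ and a double sum of factors $(\mathi\sigma_j\beta)\,\partial_\mu W_{\mathrm F}^{\Lambda,\epsilon}(z,x_j)$ contracting the two fields at $z$ with the vertex operators. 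All potential divergences of the UV limit $\epsilon\to0$ are thereby concentrated in these derivative two-point functions at or near coincident points.

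The decisive input is the Hadamard condition on $\omega^{\Lambda,\epsilon}$, which guarantees that $W_{\mathrm F}^{\Lambda,\epsilon}$ decomposes into a universal, state-independent singular part (the two-dimensional logarithmic Hadamard parametrix, carrying the Feynman $\mathi\varepsilon$-prescription) plus a remainder that stays smooth in the coincidence limit. Since this short-distance singularity coincides with that of the Euclidean covariance up to the prescription, the \emph{same} local counterterms --- logarithmically divergent in $\epsilon$ and depending only on the singular part --- absorb the divergence of the self-energy term; this is the additional renormalisation beyond normal ordering, and its locality is precisely what the Hadamard property secures. For the collisions $z\to x_j$ I would invoke the microlocal characterisation of the time-ordered products: by H{\"o}rmander's criterion the relevant products of $W_{\mathrm F}^{\Lambda,\epsilon}$ and its derivatives are well-defined distributions off the total diagonal, and their extension across it is the Epstein--Glaser procedure, with admissible counterterms fixed by the same power counting as in the Euclidean proof.

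With the counterterms in place I would take the physical limit $\Lambda,\epsilon\to0$. The UV limit is controlled by the renormalisation above; the IR limit $\Lambda\to0$ is governed by the logarithmic growth of $W_{\mathrm F}^{\Lambda,\epsilon}$ at large separation, which after the $\mu$-normal-ordering produces an overall factor behaving like $\Lambda^{\,c\,(\sum_j\sigma_j)^2}$ with $c>0$, exactly as in the Euclidean computation. This vanishes as $\Lambda\to0$ unless the neutrality condition $\sum_j\sigma_j=0$ holds, giving the stated selection rule for $\op_{\mu\nu}$. For the stress tensor the combination $T_{\mu\nu}=\op_{\mu\nu}-\tfrac12\eta_{\mu\nu}\op_\rho{}^\rho+g\eta_{\mu\nu}(V_\beta+V_{-\beta})$ is arranged so that the logarithmic self-energy divergence of $\op_{\mu\nu}$ cancels against the one generated by the extra vertex insertion --- the identical algebraic cancellation already verified in the Euclidean setting --- so that no counterterm is needed; the additional unit of charge carried by $V_{\pm\beta}$ shifts the admissible total charge to $\sum_j\sigma_j\in\{-1,0,1\}$.

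The main obstacle I expect lies not in the renormalisation bookkeeping, which transcribes almost verbatim from Theorem~\ref{thm_euclid_renorm}, but in the rigorous control of the time-ordering: one must check that every product of Feynman two-point functions appearing in the Wick expansion satisfies the wavefront-set conditions needed for its pointwise product to exist, and that the Epstein--Glaser extension remains compatible with the physical limit uniformly in the cutoffs. The $\mathi\varepsilon$-prescription requires particular care where $z$ collides with the $x_j$, since the Minkowski two-point function is not sign-definite as the Euclidean covariance is, and one must ensure that the smooth, state-dependent remainder does not interfere with the infrared power counting that produces the neutrality rule.
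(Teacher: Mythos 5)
Your overall architecture matches the paper's: Wick-type expansion of the time-ordered expectation value, Hadamard splitting $G^\text{F}=H^\text{F}+(\text{smooth})$, a logarithmically divergent local counterterm, the trace combination for the stress tensor, and the IR factor $\Lambda^{c(\sum_j\sigma_j)^2}$ for the neutrality rule. But two steps are misplaced in a way that matters. First, you locate the additional renormalisation ``beyond normal ordering'' in the coincidence-limit self-energy $\partial_\mu\partial_{\nu'}(W^{\Lambda,\epsilon}_\text{F}-C^{\mu,\epsilon})(z,z')|_{z'=z}$. After Hadamard normal ordering this term is $\lim_{z'\to z}\partial_\mu\partial_\nu W(z,z')$ with $W$ the smooth state-dependent part, hence finite: it needs no counterterm at all. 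The genuine divergence sits in the diagonal ($j=k$) terms of the double sum, i.e.\ in the product $\partial_\mu H^\text{F}(z,x_j)\,\partial_\nu H^\text{F}(z,x_j)$ viewed as a distribution near $z=x_j$, whose scaling degree equals $2=\dim\mathbb{R}^2$. You do mention the collision $z\to x_j$ as an Epstein--Glaser extension problem, but you never extract the divergent local term there, and this is where the content of the theorem lives: the paper computes $H_\mu H_\nu$ explicitly in light-cone coordinates and finds that the $\ln\epsilon$ divergence appears \emph{only} in the cross component $H_uH_v$, i.e.\ the divergent local term is $4\pi\mathi\,\eta_{\mu\nu}\ln(2\mu\epsilon)\,\delta^2(z-x_j)$, absorbed by the redefinition $\delta\mathcal{T}[\op_{\mu\nu}(z)\otimes V_\alpha(x)]=c_\alpha\eta_{\mu\nu}\delta^2(z-x)\mathcal{T}[V_\alpha(x)]$.

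Second, and as a direct consequence, your stated mechanism for the finiteness of $T_{\mu\nu}$ is wrong: you claim the divergence of $\op_{\mu\nu}$ ``cancels against the one generated by the extra vertex insertion'', but $g\eta_{\mu\nu}(V_\beta+V_{-\beta})$ produces no UV divergence anywhere in the regime $\beta^2<4\pi$ (its collision singularities are integrable); it enters only in the conservation proof. The cancellation is between $\op_{\mu\nu}$ and the trace subtraction $-\tfrac12\eta_{\mu\nu}\op_\rho{}^\rho$, which works precisely because the divergent local term is proportional to $\eta_{\mu\nu}$ --- the fact you did not establish. A smaller methodological point: the Wick-type formula you take as your starting point is not an input in pAQFT but must be proven (the paper's Lemma~\ref{lemma_mink_exponential_phi}, by induction on the number of factors via causal factorisation), and the paper controls all products by explicit $\epsilon$-regularised boundary values in light-cone coordinates rather than by H{\"o}rmander wavefront-set criteria, which are delicate for the massless two-dimensional field you are implicitly invoking them for.
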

\begin{remark*}
While there are no explicit counterterms in the conventional pAQFT treatment and renormalisation is performed by extending certain distributions to the diagonal, here we have chosen to connect pAQFT with the traditional approach using cutoffs and counterterms. We show that the pAQFT methods can be extended to this case, and that the divergent terms can be removed by a local redefinition of time-ordered products in the spirit of pAQFT. This redefinition furnishes the required counterterms, which are given by Eqs.~\eqref{eq:mink_renorm_redef}, \eqref{eq:mink_renorm_redef_2} and~\eqref{eq:min_renorm_divconst}, and which read explicitly
\begin{equation*}
\delta \mathcal{T}\left[ \op_{\mu\nu}(z) \otimes V_\alpha(x) \right] = \frac{\beta^2}{(4 \pi)^2} \, H^\text{div}_{\mu\nu}(z,x) \, \mathcal{T}\left[ V_\alpha(x) \right] \eqend{.}
\end{equation*}
The distribution $H^\text{div}_{\mu\nu}$ is defined by equation~\eqref{eq:hmunu_div}, and is proportional to $\delta(x-z)$ and thus local as required.
\end{remark*}

In analogy to the Euclidean case, we then show convergence of the renormalised perturbation series given by the Gell-Mann--Low formula for $\op_{\mu\nu}$ and $T_{\mu\nu}$ with interaction cutoff, and under an additional assumption on the state-dependent part of the two-point function of the quasi-free state.
\begin{theorem}[Convergence of the renormalised perturbation series]
\label{thm_mink_conv}
We make the same assumptions as in Theorem~\ref{thm_mink_renorm}, and in addition require that the state-dependent part $W$ of the two-point function of the state $\omega$ satisfies:
\begin{enumerate}
\item $W(x,y)$ and its first and second derivatives grow at most polynomially,
\item $\sum_{i,j=1}^n [ W(x_i,x_j) - W(y_i,x_j) - W(x_i,y_j) + W(y_i,y_j) ] \geq 0$ for any configuration of points $x_i$ and $y_i$ and any $n \in \mathbb{N}$.
\end{enumerate}
Then for any adiabatic cutoff function $g \istest$, the perturbative series for the (normalised, interacting) Gell-Mann--Low expectation value
\begin{splitequation}
\label{eq:thm_mink_conv_series}
&\omega_\text{int}\left( \op_{\mu\nu}(f) \right) \equiv \\
&\frac{ \sum_{n=0}^\infty \frac{(-\mathi)^n}{n!} \int\dotsi\int \sum_{(\sigma_1,\ldots,\sigma_n) \in \{\pm 1\}^n} \omega^{0,0}\left( \mathcal{T}\left[ \op_{\mu\nu}(f) \otimes \bigotimes_{j=1}^n V_{\sigma_j \beta}(x_j) \right] \right) \prod_{i=1}^n g(x_i) \total^2 x_i }{ \sum_{n=0}^\infty \frac{(-\mathi)^n}{n!} \int\dotsi\int \sum_{(\sigma_1,\ldots,\sigma_n) \in \{\pm 1\}^n} \omega^{0,0}\left( \mathcal{T}\left[ \bigotimes_{j=1}^n V_{\sigma_j \beta}(x_j) \right] \right) \prod_{i=1}^n g(x_i) \total^2 x_i }
\end{splitequation}
is convergent, where the operator $\op_{\mu\nu}$ is smeared with a test function $f \istest$ and the physical limit $\Lambda, \epsilon \to 0$ is taken termwise. There exists a constant $K > 0$ (depending on $g$, $\beta$ and $W$) and a constant $C > 0$ (depending on $f$, $\beta$ and $W$) such that the renormalised expectation value is bounded by
\begin{equation}
\label{eq:thm_mink_conv_bound}
C \sum_{n=0}^\infty n^2 K^n (n!)^{\frac{\beta^2}{4 \pi} - 1} < \infty
\end{equation}
if $\norm{ g }_\infty$ is small enough (depending on $g$ and $\beta$). The same holds for the smeared stress tensor $T_{\mu\nu}(f)$, with the constant $C$ also depending on $g$.
\end{theorem}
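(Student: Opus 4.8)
The plan is to mirror the proof of the Euclidean bound in Theorem~\ref{thm_euclid_conv}: I would estimate the numerator and the denominator of the Gell-Mann--Low ratio \eqref{eq:thm_mink_conv_series} separately, term by term in $n$, and exhibit a summable majorant of the form \eqref{eq:thm_mink_conv_bound}. By Theorem~\ref{thm_mink_renorm} the renormalised coefficient $\omega^{0,0}(\mathcal{T}[\op_{\mu\nu}(f)\otimes\bigotimes_{j}V_{\sigma_j\beta}(x_j)])$ already exists as a distribution in the physical limit and vanishes unless $\sum_j\sigma_j\in\{-1,0,1\}$, so the limit $\Lambda,\epsilon\to0$ may be taken inside each term; it then remains to bound these limiting coefficients uniformly enough in $n$ that the two series converge and the quotient is well defined.

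First I would make the $n$-th coefficient explicit. Splitting the time-ordered two-point function as $G^F = H + W$ into its universal (Feynman/Hadamard) part $H$ and the smooth, real, state-dependent part $W$, the free correlator of vertex operators factorises into the Coulomb-gas product $\prod_{i<j}\exp(-\beta^2\sigma_i\sigma_j G^F(x_i,x_j))$, and the insertion of $\op_{\mu\nu}(f)=\partial_\mu\phi\,\partial_\nu\phi$ smeared with $f$ produces, after Wick contraction, a double sum over the two vertices $x_j,x_k$ to which the derivatives attach, each summand carrying $\int \partial_\mu G^F(z,x_j)\,\partial_\nu G^F(z,x_k)\,f(z)\,\total^2 z$. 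Taking moduli, $H$ contributes the power-law kernel $\prod_{i<j}\abs{\mu^2(x_i-x_j)^2}^{\sigma_i\sigma_j\beta^2/(4\pi)}$ in the Lorentzian interval, while $W$ contributes $\exp(-\beta^2\sum_{i<j}\sigma_i\sigma_j W(x_i,x_j))$.

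Here the two hypotheses on $W$ are decisive. Applied to a neutral charge configuration, which is legitimate because the surviving terms obey the neutrality constraint, the positivity assumption~(2) gives $\sum_{i,j}\sigma_i\sigma_j W(x_i,x_j)\ge0$, whence the state-dependent exponential is bounded by $1$ \emph{uniformly in $n$}, up to coincidence-point factors removed by the normal ordering $\mathcal{N}_\mu$; without this input the $W$-contribution could grow exponentially in the number of vertices and spoil convergence. The polynomial-growth assumption~(1) ensures that the derivative factors $\partial_\mu G^F(z,x_j)$ generated by the insertion stay integrable against $f$, yielding the $f$-dependent constant $C$, while the double vertex sum contributes the prefactor $n^2$.

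The core of the argument, and the step I expect to be the main obstacle, is then the Coulomb-gas integration estimate
\[
\sum_{\sigma_i=\pm1}\int\dotsi\int\prod_{i<j}\abs{\mu^2(x_i-x_j)^2}^{\sigma_i\sigma_j\beta^2/(4\pi)}\prod_i\abs{g(x_i)}\,\total^2 x_i \le K^n (n!)^{\frac{\beta^2}{4\pi}-1} \eqend{.}
\]
Unlike the Euclidean setting, where genuine reflection positivity of the gas supplies charge screening and the sharper regime $(n!)^{-1}$ for $\beta^2<2\pi$, here one must work directly with the modulus of the complex, oscillatory Feynman kernel and loses this screening, which is why only the uniform exponent $\frac{\beta^2}{4\pi}-1$ is available throughout the finite regime. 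I would establish this bound by the standard chain for such singular integrals: a H\"older/rearrangement step isolating the most singular opposite-charge pairings, followed by successive integrations exploiting the decay of $g$, each contributing a factor proportional to $\norm{g}_\infty$, with the singular kernel producing the factorial power. Since $\frac{\beta^2}{4\pi}-1<0$, the resulting majorant $C\sum_n n^2 K^n(n!)^{\beta^2/(4\pi)-1}$ converges. Finally, because neither $g\ge0$ nor Coulomb-gas positivity is available in Minkowski signature, I would use the smallness of $\norm{g}_\infty$ to control the denominator: for $\norm{g}_\infty$ small the denominator series is a small perturbation of its leading term $1$, hence bounded away from zero, so the quotient inherits the bound. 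The stress-tensor case is handled identically, the extra term $g(V_\beta+V_{-\beta})$ merely shifting one charge and making $C$ depend on $g$ as well.
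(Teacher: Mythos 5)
Your global strategy coincides with the paper's: bound numerator and denominator termwise, use hypothesis~(2) to estimate the $W$-exponential by $1$ on neutral configurations, and use smallness of $\norm{g}_\infty$ to bound the denominator away from zero (this last point, and the observation that positivity of $g$ is unavailable here, is exactly how the paper proceeds). However, there are two genuine gaps at the places where the Lorentzian signature actually bites. First, your central Coulomb-gas estimate is asserted rather than derived, and the mechanism is not the one you describe: the paper does not lose a ``screening'' effect present in the Euclidean proof (which uses no reflection positivity, only the two-dimensional Cauchy determinant); rather, it passes to light cone coordinates, where $\abs{(x_i-x_j)^2} = \abs{u_{ij}}\,\abs{v_{ij}}$ factorises the kernel into \emph{two one-dimensional} Cauchy determinants, applies H\"older's inequality with an exponent $\rho = 8\pi/(4\pi+\beta^2)$ chosen so that $\rho\beta^2/(4\pi)<1$, and only then sums over permutations; the two resulting factors of $(m!)^{1/\rho}$ are precisely the origin of the exponent $\beta^2/(4\pi)-1$. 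Without this factorisation the singular set (a union of light cones rather than isolated points) is not controlled by the pairing argument you sketch.

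Second, and more seriously, your claim that the insertion factors $\partial_\mu G^\text{F}(z,x_j)$ ``stay integrable against $f$'' by the polynomial-growth hypothesis is false for the Hadamard part: $\partial_\mu H^\text{F}(z,x)\sim 1/u(z,x)$ (or $1/v$), and the off-diagonal products $H_\mu(z,x_i) H_\nu(z,x_j)$ with $i\neq j$ are \emph{not} absolutely integrable in $z$ --- taking moduli, as you propose, produces logarithmically divergent $z$-integrals along each light cone. The paper must instead write these products as total derivatives (using that $H^\text{F}$ is a fundamental solution, plus a mean-value-theorem identity for the $uu$ and $vv$ components), integrate by parts onto $f$, and track the resulting logarithms, which are then absorbed into the Schwartz functions $g$ before rerunning the denominator estimates; the diagonal products additionally require the renormalised $\left[ H_\mu H_\nu \right]^\text{ren}$ from Theorem~\ref{thm_mink_renorm}. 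Hypothesis~(1) only controls the smooth $W$-contributions. As written, your numerator estimate would therefore fail at exactly the terms that distinguish the Minkowski proof from the Euclidean one.
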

\begin{remark*}
The same remark on the termwise limit $\Lambda, \epsilon \to 0$ given after Thm.~\ref{thm_euclid_conv} applies here. It is difficult to make the condition on $\norm{ g }_\infty$ explicit, since it involves a \emph{lower} bound on the series in the denominator which is not easy to obtain. In contrast to the Euclidean case, where the condition $g \geq 0$ is enough to give a lower bound of $1$ for the denominator (from the $n = 0$ term), in the Lorentzian case the integrand does not have a definite sign for any choice of $g$, and cancellations may occur. However, since the series is convergent, we may always rescale $g \to \hat{g} = \lambda g$ with a sufficiently small $\lambda > 0$ to obtain a lower bound of (say) $\frac{1}{2}$ for the denominator. This results in a small $\norm{ \hat{g} }_\infty = \lambda \norm{ g }_\infty$, but the choice of $\lambda$ heavily depends on the concrete $g$.

While the first condition on the state-dependent part $W$ of the two-point function is very reasonable, the second one might seem strange at first sight. It is however necessary for convergence of the perturbative series, and one can easily construct a wide range of Hadamard states that fulfill it. For example, any state for which $W$ can be written as
\begin{equation}
W(x,y) = \int \mathe^{\mathi p (x-y)} \total \mu_W(p)
\end{equation}
with a positive measure $\total \mu_W(p)$ fulfills the condition, since then
\begin{splitequation}
&\sum_{i,j=1}^n [ W(x_i,x_j) - W(y_i,x_j) - W(x_i,y_j) + W(y_i,y_j) ] \\
&\quad= \int \abs{ \sum_{i=1}^n \left( \mathe^{\mathi p x_i} - \mathe^{\mathi p y_i} \right) }^2 \total \mu_W(p) \geq 0 \eqend{.}  
\end{splitequation}
Examples of such states are states that resemble a thermal state in a wide range of energies between $E_0$ and $E_1$, where
\begin{equation}
\total \mu_W(p) = \Theta\left( \abs{p^0} \in [E_0, E_1] \right) \frac{\pi [ \delta(p^1+p^0) + \delta(p^1-p^0) ]}{\abs{p^0} \left[ \exp\left( \frac{\abs{p^0}}{k_\text{B} T} \right) - 1 \right]} \frac{\total^2 p}{(2\pi)^2}
\end{equation}
with $k_\text{B}$ the Boltzmann constant and $T$ the temperature, which is clearly a positive measure. Here the restriction on the energy range is necessary, since we start with a free massless theory where thermal states do not exist in a strict sense, which can be seen from the singularity of the prefactor for small $\abs{p^0}$. On the other hand, for the massive Sine--Gordon model true thermal states have recently been constructed in~\cite{bahnspinamontirejzner2021}. While it is expected that the interacting theory acquires a mass even if $m_0 = 0$ (and in fact explicit expressions for the mass exist~\cite{zamolodchikov1995}), such that thermal states would exist also for the interacting massless Sine--Gordon model, this mass generation has only been proven very recently for the special value $\beta^2 = 4 \pi$~\cite{bauerschmidtwebb2020}.
\end{remark*}

Finally, we show that analogous to the Euclidean case we need to modify the stress tensor by a rescaling of the coupling constant, such that it fulfills the continuity equation also in the Minkowski signature. This modification agrees with the one proposed in the form factor programme~\cite{babujiankarowski2002}, and thus proves its correctness. At one-loop order, the conservation of the modified stress tensor had been shown previously by Alberti, Schlesier and Zahn~\cite{albertietal2023}, and our result extends conservation to all orders.
\begin{theorem}[Conservation of the stress tensor]
\label{thm_mink_cons}
Under the same assumptions as in Theorem~\ref{thm_mink_conv}, the expectation value of the modified stress tensor
\begin{equation}
\label{eq:thm_mink_cons_stresstensor}
\hat{T}_{\mu\nu} \equiv \op_{\mu\nu} - \frac{1}{2} \eta_{\mu\nu} \op_\rho{}^\rho + g \left( 1 - \frac{\beta^2}{8 \pi} \right) \eta_{\mu\nu} ( V_\beta + V_{-\beta} )
\end{equation}
is conserved in the quantum theory:
\begin{equation}
\label{eq:thm_mink_cons_eq}
\omega_\text{int}\left( \hat{T}_{\mu\nu}(\partial^\mu f) \right) = 0
\end{equation}
for all $f \istest$ such that $g$ is constant on the support of $f$.
\end{theorem}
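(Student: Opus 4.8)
The plan is to mirror the proof of Theorem~\ref{thm_euclid_cons}, replacing the Euclidean covariance by the Feynman (time-ordered) two-point function of $\omega^{\Lambda,\epsilon}$ and carefully tracking the extra contact terms produced by the time-ordering. First I would insert the modified stress tensor \eqref{eq:thm_mink_cons_stresstensor} into the numerator of the Gell-Mann--Low series \eqref{eq:thm_mink_conv_series}; since $\hat{T}_{\mu\nu}$ differs from $T_{\mu\nu}$ only by a multiple of the vertex term, Theorem~\ref{thm_mink_conv} guarantees that this series converges, so the divergence $\partial^\mu$ and the adiabatic integrals may be interchanged and the computation carried out termwise on the renormalised distributions $\omega^{0,0}(\mathcal{T}[\op_{\mu\nu}(z)\otimes\bigotimes_j V_{\sigma_j\beta}(x_j)])$ furnished by Theorem~\ref{thm_mink_renorm}. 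Smearing with $\partial^\mu f$ and integrating by parts moves the derivative onto the correlator; because $g$ is constant on $\supp f$, the normalising denominator and every term in which the derivative acts on the cutoff drop out, and it remains to show that the termwise divergence vanishes in the physical limit.

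Classically one has $\partial^\mu(\op_{\mu\nu}-\tfrac12\eta_{\mu\nu}\op_\rho{}^\rho)=(\partial^\rho\partial_\rho\phi)\,\partial_\nu\phi$, and together with the Sine--Gordon field equation $\partial^\rho\partial_\rho\phi=-\mathi\beta g(V_\beta-V_{-\beta})$ and $\partial_\nu(V_\beta+V_{-\beta})=\mathi\beta(\partial_\nu\phi)(V_\beta-V_{-\beta})$ this gives $\partial^\mu T_{\mu\nu}=(\partial_\nu g)(V_\beta+V_{-\beta})$, which vanishes on $\supp f$. I would therefore compute $\partial^\mu$ of the explicit distributions of Theorem~\ref{thm_mink_renorm} and show that in the physical limit the bulk terms reproduce exactly this classical cancellation, while the renormalisation required to define the two fields of $\op_{\mu\nu}(z)$ at the same point leaves behind a purely local remainder supported on the coincidences $z=x_j$. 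Dimensional analysis and translation invariance force this anomaly to be proportional to $\delta^2(z-x_j)$ times a derivative of the vertex operator; this is precisely the admissible renormalisation freedom $\delta\mathcal{T}[\op_{\mu\nu}(z)\otimes V_{\sigma_j\beta}(x_j)]\propto\delta^2(z-x_j)\mathcal{T}[V_{\sigma_j\beta}(x_j)]$, and fixing its finite coefficient is what determines both the redefinition and the rescaling $1-\tfrac{\beta^2}{8\pi}$.

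The core of the argument is the evaluation of this coefficient. It arises from the coincidence limit $z\to x_j$ of $\partial^\mu\omega^{0,0}(\mathcal{T}[\op_{\mu\nu}(z)\otimes\cdots])$, in which the two fields of $\op_{\mu\nu}(z)$ are both contracted against the field inside $V_{\sigma_j\beta}(x_j)$, producing a factor $(\mathi\sigma_j\beta)^2=-\beta^2$ times a product of derivatives of the Feynman propagator whose divergence, after renormalisation, yields a delta function with coefficient $\tfrac{1}{8\pi}$. Matching the $g\tfrac{\beta^2}{8\pi}\eta_{\mu\nu}(V_\beta+V_{-\beta})$ piece of $\hat{T}_{\mu\nu}$ against this remainder then cancels the anomaly order by order and establishes \eqref{eq:thm_mink_cons_eq}. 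I expect this coefficient computation to be the main obstacle: relative to the Euclidean case one must control the $\mathi\epsilon$ prescription, the additional contact terms that appear when $\partial^\mu$ acts on the time-ordering, and --- crucially --- verify that the smooth state-dependent part $W$ of the two-point function (which satisfies the growth and positivity hypotheses of Theorem~\ref{thm_mink_conv}) contributes no anomaly, so that the coefficient is state-independent and coincides with the Euclidean value. The resulting rescaling reproduces the one conjectured in the form factor programme~\cite{babujiankarowski2002}.
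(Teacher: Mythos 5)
Your overall strategy is the one the paper follows: work termwise on the renormalised correlators of Theorem~\ref{thm_mink_renorm}, smear with $\partial^\mu f$, integrate by parts using that $g$ is constant on $\supp f$, show that the bulk terms reorganise into a total derivative of the Cauchy-determinant-type product which cancels against the vertex-operator contribution after an index shift, and identify the leftover local terms at $z=x_j$. Your points about checking that the smooth part $W$ produces no anomaly (in the paper this follows from the bisolution property and Synge's rule) and about the $1/(8\pi)$ arising from the renormalised coincidence limit of two propagator derivatives are likewise in line with the paper.

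The gap is in your treatment of the local remainder. You treat it as a single local term whose one coefficient ``determines both the redefinition and the rescaling $1-\beta^2/(8\pi)$''. In fact, after contraction with $\partial^\mu f$ and subtraction of the trace, the renormalised products $\left[H_\mu(z,x)H_\nu(z,x)\right]^\text{ren}$ produce \emph{two independent} local structures: one proportional to $\partial_\nu f(x_j)$, and one of the form $\delta_\nu^u\partial_v f(x_j)+\delta_\nu^v\partial_u f(x_j)$, the latter coming from the extra $\mathi\pi\,\delta^2(z-x)$ contact terms in $\left[H_uH_u\right]^\text{ren}$ and $\left[H_vH_v\right]^\text{ren}$ which have no Euclidean counterpart. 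Only the second structure can be removed by an admissible redefinition $\delta\mathcal{T}\left[\op_{\mu\nu}\otimes V_\alpha\right]\propto\delta^2(z-x)\,\mathcal{T}\left[V_\alpha\right]$; the paper uses the traceless ansatz $(c^u_\alpha\delta^u_\mu\delta^u_\nu+c^v_\alpha\delta^v_\mu\delta^v_\nu)\,\delta^2(z-x)$ with $c^u_{\pm\beta}=c^v_{\pm\beta}=-\mathi\beta^2/(16\pi)$. The first structure is proportional to $\eta_{\mu\nu}$ and therefore drops out of any redefinition of $\op_{\mu\nu}$ inserted into $T_{\mu\nu}=\op_{\mu\nu}-\frac12\eta_{\mu\nu}\op_\rho{}^\rho+\dots$; it is precisely this non-removable piece that forces the modification of the potential term by $-\frac{\beta^2}{8\pi}\,g\,\eta_{\mu\nu}(V_\beta+V_{-\beta})$. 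Dimensional analysis and translation invariance alone cannot separate these two structures or fix their coefficients; one needs the explicit limits $H_uH_u\to 4\pi\mathi\,\partial_u^2H^\text{F}+\mathi\pi\,\delta^2$ and $H_uH_v\to-8\pi^2\partial_u\partial_v[H^\text{F}]^2-2\pi\mathi\ln(2\mu\epsilon)\,\delta^2$. As written, your plan would close the $\eta_{\mu\nu}$-proportional part of the anomaly but leave the off-diagonal light-cone terms uncancelled, so the conservation equation~\eqref{eq:thm_mink_cons_eq} would not follow without the additional traceless redefinition that the theorem statement alludes to.
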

\begin{remark*}
We expect that Ward identities analogous to the Euclidean case~\eqref{eq:euclid_cons_ward} hold. However, their verification becomes quite involved, and we leave them to future work.
\end{remark*}
The remaining sections are dedicated to the proof of these theorems.

Our conventions are as follows: We set $\hbar = c = 1$. Note that if we would keep $\hbar$ explicit, it would only arise in the combination $\hbar \beta^2$, so that for example the modifications of the stress tensor $\hat{T}_{\mu\nu}$ are really quantum modifications. Interestingly, they are one-loop exact, i.e., no terms of order $\hbar^2$ or higher arise. For norms of functions, we also use the notation
\begin{equation}
\norm{ f(\blank) }_p \equiv \left[ \int \abs{f(x)}^p \total^d x \right]^\frac{1}{p}
\end{equation}
with $d \in \{1,2\}$ (which will always be clear from the context), $p \geq 1$ and the usual modification for $p = \infty$.

An obvious extension of our work would be the generalisation of the results to the higher conserved currents of the Sine--Gordon model which exist classically as a consequence of integrability, as well as in perturbation theory~\cite{flume1975,kulishnissimov1976,nissimov1977,lowensteinspeer1978}. Their classical expression is obtained from an explicit recurrence relation, and in the form factor programme it has been proposed that --- in contrast to the stress tensor --- they do not receive quantum corrections~\cite{babujiankarowski2002}. While renormalisability of the interacting higher currents has been shown~\cite{zanello2023}, both the renormalisation of correlation functions involving them and the proof of convergence of the perturbative series are still missing. Another important step that is still unclear is the adiabatic (infinite-volume) limit $g \to \text{const}$, which already for correlation functions of the vertex operators is a difficult task. To show the existence of this limit, one probably has to first show the non-perturbative generation of a finite mass (Debye screening), which is known to arise at least for small $\abs{\beta g}$~\cite{brydgesfederbush1980,yang1987} and in the case $\beta^2 = 4 \pi$~\cite{bauerschmidtwebb2020}. Also the extension of our results to the super-renormalisable range $4 \pi \leq \beta^2 < 8 \pi$ is a worthwhile task to achieve in the future. Lastly, while the equivalence with the Thirring model has been proven for correlation functions of the vertex operators and derivatives of the field, their equality for correlation functions involving the stress tensor or higher-order conserved currents (which exist in the Thirring model~\cite{nissimov1977,lowensteinspeer1978}) is so far unknown, and needs to be studied.

\section{Euclidean case}
\label{sec_euclid}

In the construction of the theory in Euclidean signature, we follow the well-established treatment of Euclidean quantum field theories via functional integrals. An introduction can be found in~\cite{salmhofer1999}, from which we take formulas and results without specifying their source explicitly.

\subsection{Preliminaries}
\label{sec_euclid_pre}

We consider a centred Gaussian measure $\total \mu^{\Lambda,\epsilon}(\phi)$ with covariance $C^{\Lambda,\epsilon}(x,y)$ depending on an IR cutoff $\Lambda$ and a UV cutoff $\epsilon$. Centred and Gaussian means that
\begin{equation}
\label{eq:gaussian_measure}
\int \phi(x_1) \cdots \phi(x_n) \total \mu^{\Lambda,\epsilon}(\phi) = \sum_{\pi} \prod_{\{i,j\} \in \pi} C^{\Lambda,\epsilon}(x_i,x_j) \eqend{,}
\end{equation}
where $x_i \in \mathbb{R}^2$, $i \in \{1,\ldots,n\}$, and the sum runs over all partitions $\pi$ of the set $\{ 1,\ldots,n \}$ into unordered pairs $\{i,j\}$. Therefore, the right-hand side vanishes if $n$ is odd. In general, we have the characteristic function
\begin{equation}
\label{eq:gaussian_characteristic}
\int \mathe^{\mathi (J,\phi)} \total \mu^{\Lambda,\epsilon}(\phi) = \mathe^{- \frac{1}{2} (J, C^{\Lambda,\epsilon} \ast J)}
\end{equation}
where we introduced the scalar product
\begin{equation}
\label{eq:scalar_product}
\left( f, g \right) \equiv \int f(x) g(x) \total^2 x
\end{equation}
and the convolution
\begin{equation}
\label{eq:convolution}
\left( C \ast g \right)(x) \equiv \int C(x,y) g(y) \total^2 y \eqend{,}
\end{equation}
from which the expression~\eqref{eq:gaussian_measure} follows by functional differentiation with respect to $J \istest$. For finite cutoffs, the covariance is assumed to be a smooth function, and hence the Gaussian measure is supported on smooth functions $\phi \istest$. In general, and in particular as the cutoffs are removed, the measure is supported on distributions. However, the combinatorics are unaffected by the support properties of the measure. Obviously, as the cutoffs are removed in the physical limit $\Lambda,\epsilon \to 0$, the covariance must turn into the free propagator. Regulated expectation values in the free theory are defined by the Gaussian integral~\eqref{eq:gaussian_measure}
\begin{equation}
\expect{ A_1[\phi] \cdots A_n[\phi] }^{\Lambda,\epsilon}_0 \equiv \int A_1[\phi] \cdots A_n[\phi] \total \mu^{\Lambda,\epsilon}(\phi) \eqend{,}
\end{equation}
where the $A_i$ are local functionals of the field, and the interacting expectation values are computed from the Gell-Mann--Low formula
\begin{equation}
\expect{ A_1[\phi] \cdots A_n[\phi] }^{\Lambda,\epsilon}_\text{int} \equiv \frac{\expect{ A_1[\phi] \cdots A_n[\phi] \mathe^{- S_\text{int}[\phi]} }^{\Lambda,\epsilon}_0}{\expect{ \mathe^{- S_\text{int}[\phi]} }^{\Lambda,\epsilon}_0} \eqend{,}
\end{equation}
where $S_\text{int}$ is the interaction which is assumed to be bounded from below.

It is well known that the massless scalar field in two dimensions is not well defined because of infrared problems~\cite{schroer1963,wightmanlectures}. In our case, this manifests as a logarithmic divergence of the regulated covariance
\begin{equation}
\label{eq:covariance}
C^{\Lambda,\epsilon}(x,y) \equiv - \frac{1}{4 \pi} \ln \left[ \Lambda^2 \left[ (x-y)^2 + \epsilon^2 \right] \right]
\end{equation}
as the IR cutoff $\Lambda$ is removed. To verify that~\eqref{eq:covariance} is a suitable regularisation, we note that the UV cutoff $\epsilon$ obviously regulates the UV singular behaviour as $x \to y$, while the IR cutoff $\Lambda$ arises from the small-mass limit of the massive propagator:
\begin{splitequation}
\label{eq:massive_covariance}
\int \frac{\mathe^{\mathi p (x-y)}}{p^2 + m^2} \frac{\total^2 p}{(2\pi)^2} &= \frac{1}{2\pi} \int_0^\infty \frac{q}{q^2 + m^2} \bessel{J}{0}{q \abs{x-y}} \total q \\
&= \frac{1}{2\pi} \bessel{K}{0}{m \abs{x-y}} = - \frac{1}{4\pi} \ln\left[ \frac{m^2 \mathe^{2\gamma}}{4} (x-y)^2 \right] + \bigo{m} \eqend{,}
\end{splitequation}
where we used the integrals~\cite[Eqs.~(10.9.1) and (10.22.43)]{dlmf} and the known expansion~\cite[Eq.~(10.31.2)]{dlmf} of the modified Bessel function $\mathrm{K}_0$ for small argument. We will see later on that this divergence is responsible for the super-selection sectors of the theory, and for the vacuum sector which we consider results in a neutrality condition~\cite{swieca1977}. However, the regulated covariance~\eqref{eq:covariance} is not positive definite, and thus does not define a Gaussian measure. This issue ultimately stems from the fact that the limit $m \to 0$ of the massive covariance~\eqref{eq:massive_covariance} does not exist, since the massive covariance is positive definite for all $m > 0$. One could solve the problem by use a covariance with a different IR regulator (for example, keeping a finite mass), but this would complicate the subsequent formulas. Moreover, it is unnecessary for our purposes, since we are only interested in the limit $\Lambda \to 0$ and correlation functions for which this limit exists, which from the above results is the same as the limit $m \to 0$. We can therefore develop the theory assuming a positive covariance, but for concrete computations nevertheless use the covariance~\eqref{eq:covariance}.

In the range $\beta^2 < 4 \pi$, one only needs to normal-order the interaction to obtain finite correlation functions of the basic field $\phi$. Normal-ordering $\mathcal{N}$ is a linear operation that can be defined with respect to any given covariance, but we only need it with respect to the covariance $C^{\mu,\epsilon}$~\eqref{eq:covariance} for a fixed IR cutoff $\mu$. As is well known, we have to keep $\mu$ fixed and different from the true IR cutoff $\Lambda$ of the theory (which we will later send to zero) to obtain a finite correlation function in the physical limit. We have the explicit formula for exponentials
\begin{equation}
\label{eq:normal_ordering_exponential_general}
\mathcal{N}_\mu\left[ \mathe^{\mathi (J,\phi)} \right] = \mathe^{\frac{1}{2} (J, C^{\mu,\epsilon} \ast J)} \mathe^{\mathi (J,\phi)} \eqend{,}
\end{equation}
from which normal-ordering of monomials can be obtained by functional differentiation with respect to $J$. Moreover, for $J(y) = \pm \beta \delta(y-x)$ we obtain the normal-ordering of the interaction of the Sine-Gordon theory:
\begin{equations}[eq:normal_ordering_exponential]
\mathcal{N}_\mu\left[ 2 \cos(\beta \phi(x)) \right] &= \mathcal{N}_\mu\left[ \mathe^{\mathi \beta \phi(x)} \right] + \mathcal{N}_\mu\left[ \mathe^{- \mathi \beta \phi(x)} \right] \eqend{,} \\
\mathcal{N}_\mu\left[ \mathe^{\mathi \alpha \phi(x)} \right] &= \mathe^{\frac{1}{2} \alpha^2 C^{\mu,\epsilon}(x,x)} \mathe^{\mathi \alpha \phi(x)} = ( \mu \epsilon )^{- \frac{\alpha^2}{4 \pi}} \mathe^{\mathi \alpha \phi(x)} \eqend{.}
\end{equations}

Normal-ordering has the property that
\begin{equation}
\label{eq:expectation_normal_ordered}
\expect{ \mathcal{N}_\Lambda\left[ (J_1,\phi) \cdots (J_n,\phi) \right] }^{\Lambda,\epsilon}_0 = \int \mathcal{N}_\Lambda\left[ (J_1,\phi) \cdots (J_n,\phi) \right] \total \mu^{\Lambda,\epsilon}(\phi) = \delta_{n,0} \eqend{,}
\end{equation}
which makes evaluation easy, and is the reason for the name ``normal-ordering''. For a normal ordering with respect to a different covariance, one first has to change the normal-ordering according to
\begin{splitequation}
&\mathcal{N}_\nu\left[ (J_1,\phi) \cdots (J_n,\phi) \right] \\
&\quad= \exp\left[ - \frac{1}{2} \left( \frac{\delta}{\delta \phi}, \left( C^{\nu,\epsilon} - C^{\mu,\epsilon} \right) \ast \frac{\delta}{\delta \phi} \right) \right] \mathcal{N}_\mu\left[ (J_1,\phi) \cdots (J_n,\phi) \right] \eqend{,}
\end{splitequation}
which is proven as in~\cite[Thm.~2.4]{salmhofer1999}. We note that for unsmeared exponentials and the covariance~\eqref{eq:covariance}, there is a particularly simple relation:
\begin{equation}
\label{eq:change_normal_ordering_exponential}
\mathcal{N}_\mu\left[ \mathe^{\pm \mathi \beta \phi(x)} \right] = \left( \frac{\Lambda}{\mu} \right)^\frac{\beta^2}{4 \pi} \mathcal{N}_\Lambda\left[ \mathe^{\pm \mathi \beta \phi(x)} \right] \eqend{,}
\end{equation}
and since the UV cutoffs $\epsilon$ on both sides are the same, this relation holds also in the limit $\epsilon \to 0$.

In the proofs in the following sections, we also need a formula which (to our knowledge) first appeared in~\cite[Lemma~2.6]{bauerschmidtwebb2020}, and which we generalise and formulate as a Lemma:
\begin{lemma}
\label{lemma_euclid_exponential_phi}
The expectation value of a product of exponentials and basic fields can be decomposed as
\begin{splitequation}
\label{eq:correlator_exponential_nphi}
&\expect{ \prod_{j=1}^m \mathe^{\mathi (f_j,\phi)} \prod_{k=1}^n (g_k,\phi) }^{\Lambda,\epsilon}_0 = \expect{ \prod_{j=1}^m \mathe^{\mathi \left( f_j,\phi \right)} }^{\Lambda,\epsilon}_0 \\
&\quad\times \left. \prod_{i=1}^n \frac{\partial}{\partial \sigma_{i}} \exp\left[ \mathi \sum_{j=1}^m \sum_{k=1}^n \sigma_k \left( f_j, C^{\Lambda,\epsilon} \ast g_k \right) + \sum_{\mathclap{1 \leq k < \ell \leq n}} \sigma_k \sigma_\ell \left( g_k, C^{\Lambda,\epsilon} \ast g_\ell \right) \right] \right\rvert_{\sigma_i = 0} \eqend{.}
\end{splitequation}
In particular, for $n = 2$ we obtain
\begin{splitequation}
\label{eq:correlator_exponential_twophi}
&\expect{ (g_1,\phi) (g_2,\phi) \prod_{j=1}^m \mathe^{\mathi (f_j,\phi)} }^{\Lambda,\epsilon}_0 = \expect{ \prod_{j=1}^m \mathe^{\mathi (f_j,\phi)} }^{\Lambda,\epsilon}_0 \\
&\quad\times \left[ \left( g_1, C^{\Lambda,\epsilon} \ast g_2 \right) - \sum_{j,k=1}^m \left( f_j, C^{\Lambda,\epsilon} \ast g_1 \right) \left( f_k, C^{\Lambda,\epsilon} \ast g_2 \right) \right] \eqend{.}
\end{splitequation}
\end{lemma}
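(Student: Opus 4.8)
The plan is to trade the $n$ field insertions $\prod_{k=1}^n (\beta_k,\phi)$ for derivatives of a single exponential, and then to reduce everything to the characteristic function~\eqref{eq:gaussian_characteristic}. Since $(\beta_k,\phi)$ is linear in $\phi$, I would start from the identity
\begin{equation}
\prod_{k=1}^n (\beta_k,\phi) = \left. \prod_{k=1}^n \frac{\partial}{\partial \sigma_k} \exp\left[ \sum_{\ell=1}^n \sigma_\ell (\beta_\ell,\phi) \right] \right\rvert_{\sigma_i = 0} \eqend{,}
\end{equation}
insert it under the expectation value on the left-hand side of~\eqref{eq:correlator_exponential_nphi}, and pull the $\sigma$-derivatives and the evaluation at $\sigma_i = 0$ outside. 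For finite cutoffs the covariance is smooth and the measure is supported on Schwartz functions, so the integrand is a polynomial times a bounded exponential with all moments finite; differentiation in the auxiliary parameters $\sigma_k$ therefore commutes with the Gaussian integral, which is the only analytic point in the argument.

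What remains inside is the expectation of a pure exponential $\expect{ \mathe^{\mathi(J,\phi)} }^{\Lambda,\epsilon}_0$ with the complex source $J = \sum_{j=1}^m \alpha_j - \mathi \sum_{k=1}^n \sigma_k \beta_k$. Applying~\eqref{eq:gaussian_characteristic} and expanding $(J, C^{\Lambda,\epsilon}\ast J)$ by bilinearity, using the symmetry $(\alpha,C^{\Lambda,\epsilon}\ast\beta) = (\beta,C^{\Lambda,\epsilon}\ast\alpha)$ of the covariance, splits the exponent $-\frac{1}{2}(J, C^{\Lambda,\epsilon}\ast J)$ into three pieces: the $\sigma$-independent part $-\frac{1}{2}\sum_{j,j'=1}^m (\alpha_j, C^{\Lambda,\epsilon}\ast\alpha_{j'})$, which is exactly the exponent of $\expect{ \prod_{j=1}^m \mathe^{\mathi(\alpha_j,\phi)} }^{\Lambda,\epsilon}_0$ and which I would factor out; a part linear in $\sigma$ equal to $\mathi \sum_{j=1}^m\sum_{k=1}^n \sigma_k (\alpha_j, C^{\Lambda,\epsilon}\ast\beta_k)$; and a quadratic part $\frac{1}{2}\sum_{k,\ell=1}^n \sigma_k\sigma_\ell (\beta_k, C^{\Lambda,\epsilon}\ast\beta_\ell)$. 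Tracking the factors of $\mathi$ and the symmetric pairing here is the closest thing to a difficulty, and it is routine.

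The only genuine bookkeeping subtlety is the passage from the full quadratic sum $\frac{1}{2}\sum_{k,\ell}$ to the ordered sum $\sum_{k<\ell}$ appearing in the statement. The operator $\prod_k \partial_{\sigma_k}(\cdot)\rvert_{\sigma=0}$ extracts precisely the coefficient of the multilinear monomial $\sigma_1\cdots\sigma_n$ in the Taylor series; the diagonal terms $\frac{1}{2}\sigma_k^2(\beta_k, C^{\Lambda,\epsilon}\ast\beta_k)$ are quadratic in a single $\sigma_k$ and can only contribute to monomials containing $\sigma_k^2$, so they never reach this coefficient and may be discarded, leaving the symmetric off-diagonal sum $\sum_{k<\ell}$ and hence~\eqref{eq:correlator_exponential_nphi}. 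For the special case $n=2$ I would instead apply $\partial_{\sigma_1}\partial_{\sigma_2}$ directly: writing the exponent as $P(\sigma)$ with $P(0)=0$, the product rule gives $\bigl[(\partial_{\sigma_1}P)(\partial_{\sigma_2}P) + \partial_{\sigma_1}\partial_{\sigma_2}P\bigr]_{\sigma=0}$, where the two single derivatives produce $\mathi\sum_j(\alpha_j,C^{\Lambda,\epsilon}\ast\beta_1)$ and $\mathi\sum_k(\alpha_k,C^{\Lambda,\epsilon}\ast\beta_2)$ — whose product is the double sum with an overall minus sign — while the mixed derivative yields the single term $(\beta_1, C^{\Lambda,\epsilon}\ast\beta_2)$, reproducing~\eqref{eq:correlator_exponential_twophi}. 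I do not expect any real obstacle: the content is just the Gaussian characteristic function dressed with a generating parameter, and since the combinatorial identities~\eqref{eq:gaussian_measure}–\eqref{eq:gaussian_characteristic} are independent of the support of the measure, the formula persists as the regulators are removed.
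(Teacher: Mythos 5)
Your proposal is correct and follows essentially the same route as the paper: both replace $\prod_k(\beta_k,\phi)$ by $\sigma$-derivatives of the generating exponential $\mathe^{(h,\phi)}$ with $h=\sum_k\sigma_k\beta_k$, evaluate the resulting Gaussian expectation of a single exponential, and discard the diagonal $\sigma_k^2$ terms for exactly the reason you give. The only difference is that the paper computes $\expect{\prod_j\mathe^{\mathi(\alpha_j,\phi)}\,\mathe^{(h,\phi)}}^{\Lambda,\epsilon}_0$ by shifting the Gaussian measure, whereas you insert the complex source $J=\sum_j\alpha_j-\mathi\sum_k\sigma_k\beta_k$ into~\eqref{eq:gaussian_characteristic}; since that formula is stated for real sources, its extension to complex $J$ (equivalently, the integrability of the unbounded factor $\mathe^{(h,\phi)}$ against the Gaussian measure) is precisely what the measure-shift computation supplies, so the two arguments coincide.
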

\begin{proof}
We essentially transcribe the proof of~\cite{bauerschmidtwebb2020} from Gaussian random variables to functional integrals. We first recall the formula for a shifted Gaussian measure
\begin{equation}
\label{eq:gaussian_measure_shift}
\total \mu^{\Lambda,\epsilon}\left( \phi + C^{\Lambda,\epsilon} \ast h \right) = \mathe^{- \frac{1}{2} (h,C^{\Lambda,\epsilon} \ast h)} \mathe^{- (\phi,h)} \total \mu^{\Lambda,\epsilon}(\phi)
\end{equation}
with $h \istest$, which is easily proven by showing that the characteristic functions~\eqref{eq:gaussian_characteristic} of both measures are the same. 
We then compute
\begin{splitequation}
\label{eq:expectation_exponential_shift}
&\expect{ \prod_{j=1}^m \mathe^{\mathi (f_j,\phi)} \mathe^{(h,\phi)} }^{\Lambda,\epsilon}_0 = \int \prod_{j=1}^m \mathe^{\mathi (f_j,\phi)} \mathe^{(h,\phi)} \total \mu^{\Lambda,\epsilon}(\phi) \\
&= \int \prod_{j=1}^m \mathe^{\mathi (f_j,\phi + C^{\Lambda,\epsilon} \ast h)} \exp\left[ \left( h, \phi + C^{\Lambda,\epsilon} \ast h \right) \right] \total \mu^{\Lambda,\epsilon}\left( \phi + C^{\Lambda,\epsilon} \ast h \right) \\
&= \int \prod_{j=1}^m \mathe^{\mathi (f_j,\phi + C^{\Lambda,\epsilon} \ast h)} \exp\left[ \left( h, \phi + C^{\Lambda,\epsilon} \ast h \right) - \frac{1}{2} \left( h,C^{\Lambda,\epsilon} \ast h \right) - (\phi,h) \right] \total \mu^{\Lambda,\epsilon}(\phi) \\
&= \int \prod_{j=1}^m \mathe^{\mathi (f_j,\phi + C^{\Lambda,\epsilon} \ast h)} \exp\left[ \frac{1}{2} \left( h, C^{\Lambda,\epsilon} \ast h \right) \right] \total \mu^{\Lambda,\epsilon}(\phi) \\
&= \expect{ \prod_{j=1}^m \mathe^{\mathi (f_j,\phi)} }^{\Lambda,\epsilon}_0 \exp\left[ \mathi \sum_{j=1}^m ( f_j, C^{\Lambda,\epsilon} \ast h ) + \frac{1}{2} \left( h, C^{\Lambda,\epsilon} \ast h \right) \right] \eqend{,}
\end{splitequation}
where we used the formula~\eqref{eq:gaussian_measure_shift} in the third equality. We then set
\begin{equation}
h = \sum_{k=1}^n \sigma_k g_k
\end{equation}
and use that
\begin{equation}
\prod_{k=1}^n (g_k,\phi) = \left. \prod_{k=1}^n \frac{\partial}{\partial \sigma_k} \mathe^{(h,\phi)} \right\rvert_{\sigma_i = 0}
\end{equation}
to obtain equation~\eqref{eq:correlator_exponential_nphi}. Equation~\eqref{eq:correlator_exponential_twophi} follows immediately.
\end{proof}

\subsection{Proof of theorem~\ref{thm_euclid_renorm} (Renormalisation)}
\label{sec_euclid_renorm}

We begin with $\op_{\mu\nu} = \partial_\mu \phi \, \partial_\nu \phi$. Taking two functional derivatives of equation~\eqref{eq:normal_ordering_exponential_general} with respect to $J$ and setting $J$ to zero, we obtain
\begin{equation}
\mathcal{N}_\mu\left[ \phi(x) \phi(y) \right] = \phi(x) \phi(y) - C^{\mu,\epsilon}(x,y) \eqend{,}
\end{equation}
and taking derivatives and setting $x = y = z$, it follows that
\begin{equation}
\label{eq:euclid_renorm_normordopmunu}
\mathcal{N}_\mu\left[ \op_{\mu\nu}(z) \right] = \op_{\mu\nu}(z) + \lim_{x \to z} \partial_\mu \partial_\nu C^{\mu,\epsilon}(x,z) = \op_{\mu\nu}(z) - \frac{1}{2 \pi \epsilon^2} \delta_{\mu\nu} \eqend{,}
\end{equation}
using the explicit form of the covariance~\eqref{eq:covariance}. The normal-ordering of the vertex operators $V_\alpha(x) = \mathe^{\mathi \alpha \phi(x)}$ is given in equation~\eqref{eq:normal_ordering_exponential}, such that we obtain (with $\sigma_j = \pm 1$)
\begin{splitequation}
&\expect{ \mathcal{N}_\mu\left[ \op_{\mu\nu}(z) \right] \prod_{j=1}^n \mathcal{N}_\mu\left[ V_{\sigma_j \beta}(x_j) \right] }^{\Lambda,\epsilon}_0 \\
&= ( \mu \epsilon )^{- n \frac{\beta^2}{4 \pi}} \expect{ \left[ \op_{\mu\nu}(z) - \frac{1}{2 \pi \epsilon^2} \delta_{\mu\nu} \right] \prod_{j=1}^n V_{\sigma_j \beta}(x_j) }^{\Lambda,\epsilon}_0 \eqend{.}
\end{splitequation}
To compute the expectation value, we use Lemma~\ref{lemma_euclid_exponential_phi} and in particular equation~\eqref{eq:correlator_exponential_twophi}, where we take the limits $g_1(x) \to \partial_\mu \delta(z-x)$, $g_2(x) \to \partial_\nu \delta(z-x)$ and $f_j(x) \to \sigma_j \beta \delta(x-x_j)$. (These limits are well-defined for finite cutoffs since then the covariance $C^{\Lambda,\epsilon}$ is a smooth function.) This results in
\begin{splitequation}
&\expect{ \mathcal{N}_\mu\left[ \op_{\mu\nu}(z) \right] \prod_{j=1}^n \mathcal{N}_\mu\left[ V_{\sigma_j \beta}(x_j) \right] }^{\Lambda,\epsilon}_0 \\
&= - \beta^2 ( \mu \epsilon )^{- n \frac{\beta^2}{4 \pi}} \expect{ \prod_{j=1}^n V_{\sigma_j \beta}(x_j) }^{\Lambda,\epsilon}_0 \sum_{j,k=1}^n \sigma_j \sigma_k \partial_\mu C^{\Lambda,\epsilon}(x_j,z) \partial_\nu C^{\Lambda,\epsilon}(x_k,z) \eqend{,}
\end{splitequation}
where the term coming from the normal ordering~\eqref{eq:euclid_renorm_normordopmunu} has canceled with the first term in equation~\eqref{eq:correlator_exponential_twophi}. Since the covariance $C^{\Lambda,\epsilon}(x,y)$ is a function of $x-y$ only, it is no loss of generality to let the derivatives always act on the first argument, which we will do here and in the remainder of this work. Finally, using equation~\eqref{eq:gaussian_characteristic} with $J(x) = \sum_{j=1}^n \sigma_j \beta \delta(x-x_j)$ we obtain
\begin{splitequation}
\label{eq:euclid_renorm_expecteps}
&\expect{ \mathcal{N}_\mu\left[ \op_{\mu\nu}(z) \right] \prod_{j=1}^n \mathcal{N}_\mu\left[ V_{\sigma_j \beta}(x_j) \right] }^{\Lambda,\epsilon}_0 \\
&= - \beta^2 ( \mu \epsilon )^{- n \frac{\beta^2}{4 \pi}} \exp\left[ - \frac{1}{2} \beta^2 \sum_{j,k=1}^n \sigma_j \sigma_k C^{\Lambda,\epsilon}(x_j,x_k) \right] \\
&\qquad\times \sum_{j,k=1}^n \sigma_j \sigma_k \partial_\mu C^{\Lambda,\epsilon}(x_j,z) \partial_\nu C^{\Lambda,\epsilon}(x_k,z) \\
&= - \frac{\beta^2}{4 \pi^2} \left( \frac{\Lambda}{\mu} \right)^{\frac{\beta^2}{4 \pi} \left( \sum_{j=1}^n \sigma_j \right)^2} \prod_{1 \leq j < k \leq n} \left[ \mu^2 \left[ (x_j-x_k)^2 + \epsilon^2 \right] \right]^{\sigma_j \sigma_k \frac{\beta^2}{4 \pi}} \\
&\qquad\times \sum_{j,k=1}^n \sigma_j \sigma_k \frac{(x_j-z)_\mu}{(x_j-z)^2 + \epsilon^2} \frac{(x_k-z)_\nu}{(x_k-z)^2 + \epsilon^2} \\
\end{splitequation}
with the explicit form of the covariance~\eqref{eq:covariance}.

The terms $\left[ \mu^2 \left[ (x_j-x_k)^2 + \epsilon^2 \right] \right]^{\sigma_j \sigma_k \frac{\beta^2}{4 \pi}}$ are singular in the limit $\epsilon \to 0$ if $\sigma_j \sigma_k = -1$, but the singularity is integrable since we are in the finite regime $\beta^2 < 4 \pi$. The same holds for the terms $(x_j-z)_\mu/[(x_j-z)^2 + \epsilon^2] (x_k-z)_\nu/[(x_k-z)^2 + \epsilon^2]$ if $j \neq k$, but for $j = k$ their scaling degree at $x_j = z$ is $2 = \operatorname{dim} \mathbb{R}^2$ in the limit $\epsilon \to 0$, such that we have a logarithmic singularity in this case and need to renormalise. We recall that the scaling degree of a distribution $u \in \mathcal{S}'(\mathbb{R}^k)$ at $x = 0$ is defined as
\begin{equation}
\label{eq:scaling_degree}
\operatorname{sd}(u) \equiv \inf\left\{ a \in \mathbb{R}\colon \lim_{\lambda \to 0} \lambda^a u(f_\lambda) = 0 \ \forall f \in \mathcal{S}(\mathbb{R}^k) \right\} \eqend{,}
\end{equation}
where $f_\lambda(x) \equiv f(\lambda x)$. On the other hand, if we consider $\ell > 2$ points at once, the total scaling degree in the limit $\epsilon \to 0$ is always less than $2 (\ell-1) = \operatorname{dim} (\mathbb{R}^2)^{\times (\ell-1)}$, such that no further renormalisation is necessary. Namely, considering $x_{k_1} \cdots x_{k_\ell}$, the scaling degree at $x_{k_1} = \cdots = x_{k_\ell}$ is given by
\begin{equation}
\label{eq:euclid_renorm_multiscalingdegree}
- 2 \sum_{1 \leq i < j \leq \ell} \sigma_{k_i} \sigma_{k_j} \frac{\beta^2}{4 \pi} = \frac{\beta^2}{4 \pi} \left[ \ell - \left( \sum_{i=1}^\ell \sigma_{k_i} \right)^2 \right] \leq \frac{\beta^2}{4 \pi} \ell < 2 (\ell-1) \eqend{,}
\end{equation}
where we used that $\sigma_i = \pm 1$ and that $\beta^2 < 4 \pi$. If $z$ is also included such that we consider $\ell+1$ points at once, the analogous computation shows that the scaling degree is at most $2 + \frac{\beta^2}{4 \pi} \ell < 2 + 2 (\ell-1) = 2 \ell$, and the same conclusion holds.

\begin{lemma}
\label{lemma_euclid_distribution}
Consider the family of distributions $u_{\rho\sigma}^\epsilon$ defined for $\epsilon > 0$ by
\begin{equation}
\label{eq:umunu_def}
u_{\rho\sigma}^\epsilon(f) \equiv \int \frac{x_\rho x_\sigma}{( x^2 + \epsilon^2 )^2} f(x) \total^2 x \eqend{,}
\end{equation}
which are well-defined for $f \istest$ with $f(0) = 0$ also in the limit $\epsilon \to 0$. If $f(0) \neq 0$, it holds that
\begin{equation}
\label{eq:umunu_decomposition}
\lim_{\epsilon \to 0} \left[ u_{\rho\sigma}^\epsilon(f) - u^\text{div}_{\rho\sigma}(f) - u^\text{ren}_{\rho\sigma}(f) \right] = 0 \eqend{,}
\end{equation}
where the divergent part $u_{\rho\sigma}^\text{div}$ is given by
\begin{equation}
\label{eq:umunu_div}
u_{\rho\sigma}^\text{div}(f) = - \pi \delta_{\rho\sigma} \ln(\mu \epsilon) f(0) \eqend{,}
\end{equation}
and the renormalised part $u_{\rho\sigma}^\text{ren}$ reads
\begin{equation}
\label{eq:umunu_ren}
u_{\rho\sigma}^\text{ren}(f) = - \frac{1}{4} \int \ln\left( \mu^2 x^2 \right) \left[ \partial_\rho \partial_\sigma f(x) + \delta_{\rho\sigma} \frac{x^\alpha}{x^2} \partial_\alpha f(x) \right] \total^2 x \eqend{.}
\end{equation}
Requiring that $u_{\rho\sigma}^\text{ren}(f) = \lim_{\epsilon \to 0} u_{\rho\sigma}^\epsilon(f)$ for all $f \istest$ with $f(0) = 0$ and that it preserves Euclidean covariance and the scaling degree, $u_{\rho\sigma}^\text{ren}$ is unique up to the choice of the arbitrary scale $\mu$.
\end{lemma}
\begin{proof}
We first show that $u^\epsilon_{\rho\sigma}$ is a well-defined distribution for all $f \istest$ with $f(0) = 0$ for all $\epsilon$ including the limit. We compute for $\epsilon > 0$ that
\begin{equation}
\label{eq:umunu_logarithm_derivative}
\partial_\rho \partial_\sigma \ln\left[ \mu^2 \left( x^2 + \epsilon^2 \right) \right] = \frac{2 \delta_{\rho\sigma}}{x^2 + \epsilon^2} - \frac{4 x_\rho x_\sigma}{( x^2 + \epsilon^2 )^2} \eqend{,}
\end{equation}
such that
\begin{splitequation}
u^\epsilon_{\rho\sigma}(f) &= \int \left[ \frac{1}{2} \frac{\delta_{\rho\sigma}}{x^2 + \epsilon^2} - \frac{1}{4} \partial_\rho \partial_\sigma \ln\left[ \mu^2 \left( x^2 + \epsilon^2 \right) \right] \right] f(x) \total^2 x \\
&= \frac{1}{2} \delta_{\rho\sigma} \int \frac{1}{x^2 + \epsilon^2} f(x) \total^2 x - \frac{1}{4} \int \ln\left[ \mu^2 \left( x^2 + \epsilon^2 \right) \right] \partial_\rho \partial_\sigma f(x) \total^2 x \eqend{.}
\end{splitequation}
In the second term, we can take the limit $\epsilon \to 0$ for all $f$ since the singularity of the integrand at $x = 0$ is logarithmic and thus integrable. For the first term, we switch to polar coordinates and compute for $\epsilon > 0$
\begin{splitequation}
\int \frac{1}{x^2 + \epsilon^2} f(x) \total^2 x &= \int_0^{2\pi} \int_0^\infty \frac{r}{r^2 + \epsilon^2} f(r,\phi) \total r \total \phi \\
&= \frac{1}{2} \int_0^{2\pi} \int_0^\infty \partial_r \ln\left[ \mu^2 \left( r^2 + \epsilon^2 \right) \right] f(r,\phi) \total r \total \phi \\
&= - 2 \pi \ln(\mu \epsilon) f(0) - \frac{1}{2} \int_0^{2\pi} \int_0^\infty \ln\left[ \mu^2 \left( r^2 + \epsilon^2 \right) \right] \partial_r f(r,\phi) \total r \total \phi \\
&= - 2 \pi \ln(\mu \epsilon) f(0) - \frac{1}{2} \int \ln\left[ \mu^2 \left( x^2 + \epsilon^2 \right) \right] \frac{x^\alpha}{x^2} \partial_\alpha f(x) \total^2 x \eqend{,}
\end{splitequation}
where we used that $f$ vanishes faster than any polynomial as $r \to \infty$ and that $r \partial_r = x^\alpha \partial_\alpha$. Since the singularity of the second integrand at $x = 0$ is logarithmic and thus integrable, we can again take the limit $\epsilon \to 0$ there. It follows that $\lim_{\epsilon \to 0} u^\epsilon_{\rho\sigma}(f)$ is finite whenever $f(0) = 0$, and that for general $f$ the decomposition~\eqref{eq:umunu_decomposition} holds.

Consider now a different renormalisation. By construction, the difference can only be a distribution supported at $x = 0$. Enforcing Euclidean covariance and the preservation of the scaling degree, it must be proportional to $\delta_{\rho\sigma} f(0)$, which is parametrised by the choice of the scale $\mu$.
\end{proof}

The negative of $u_{\mu\nu}^\text{div}$ is the required counterterm, which is local and diverges logarithmically with the UV cutoff $\epsilon$ as required. To obtain the renormalised expectation value, we separate the terms with $j = k$ and $j \neq k$ in the last sum in the unrenormalised expectation value~\eqref{eq:euclid_renorm_expecteps}, and then replace $u_{\mu\nu}$ with $u_{\mu\nu}^\text{ren}$ in the terms with $j = k$. Taking the limit $\epsilon \to 0$, we obtain
\begin{splitequation}
\label{eq:euclid_renorm_expectren}
&\expect{ \mathcal{N}_\mu\left[ \op_{\mu\nu}(z) \right] \prod_{j=1}^n \mathcal{N}_\mu\left[ V_{\sigma_j \beta}(x_j) \right] }^{\Lambda,0}_{0,\text{ren}} \\
&= - \frac{\beta^2}{4 \pi^2} \left( \frac{\Lambda}{\mu} \right)^{\frac{\beta^2}{4 \pi} \left( \sum_{j=1}^n \sigma_j \right)^2} \prod_{1 \leq j < k \leq n} \left[ \mu^2 (x_j-x_k)^2 \right]^{\sigma_j \sigma_k \frac{\beta^2}{4 \pi}} \\
&\qquad\times \left[ \sum_{k=1}^n u_{\mu\nu}^\text{ren}(x_k-z) + 2 \sum_{1 \leq j < k \leq n} \sigma_j \sigma_k \frac{(x_j-z)_{(\mu}}{(x_j-z)^2} \frac{(x_k-z)_{\nu)}}{(x_k-z)^2} \right] \eqend{,}
\end{splitequation}
where $u_{\mu\nu}^\text{ren}(x)$ is the formal integral kernel of the distribution defined by equation~\eqref{eq:umunu_ren}. We see that taking the IR cutoff $\Lambda \to 0$, we have a non-vanishing result only if the sum of all $\sigma_i$ vanishes, which is the super-selection criterion or neutrality condition of the vacuum sector~\cite{swieca1977}.

For the renormalised expectation value of the stress tensor $T_{\mu\nu} = \op_{\mu\nu} - \frac{1}{2} \delta_{\mu\nu} \op_\rho{}^\rho + g \delta_{\mu\nu} ( V_\beta + V_{-\beta} )$, we obtain a sum of four terms, the first two of which are obtained from equation~\eqref{eq:euclid_renorm_expectren}. Since the divergent part of $u_{\mu\nu}$~\eqref{eq:umunu_div} is proportional to $\delta_{\mu\nu}$, it cancels out between the first two terms, i.e., we have
\begin{splitequation}
&\expect{ \mathcal{N}_\mu\left[ \op_{\mu\nu}(z) - \frac{1}{2} \delta_{\mu\nu} \op_\rho{}^\rho(z) \right] \prod_{j=1}^n \mathcal{N}_\mu\left[ V_{\sigma_j \beta}(x_j) \right] }^{\Lambda,0}_{0,\text{ren}} \\
&\quad= \lim_{\epsilon \to 0} \expect{ \mathcal{N}_\mu\left[ \op_{\mu\nu}(z) - \frac{1}{2} \delta_{\mu\nu} \op_\rho{}^\rho(z) \right] \prod_{j=1}^n \mathcal{N}_\mu\left[ V_{\sigma_j \beta}(x_j) \right] }^{\Lambda,\epsilon}_0
\end{splitequation}
and the limit exists also without subtracting counterterms. For the last two terms, we compute
\begin{splitequation}
\label{eq:euclid_renorm_expectvertex}
&\expect{ \mathcal{N}_\mu\left[ V_\beta(z) \right] \prod_{j=1}^n \mathcal{N}_\mu\left[ V_{\sigma_j \beta}(x_j) \right] }^{\Lambda,\epsilon}_0 = ( \mu \epsilon )^{- (n+1) \frac{\beta^2}{4 \pi}} \\
&\qquad\times \exp\left[ - \frac{\beta^2}{2} C^{\Lambda,\epsilon}(z,z) - \beta^2 \sum_{j=1}^n \sigma_j C^{\Lambda,\epsilon}(z,x_j) - \frac{\beta^2}{2} \sum_{j,k=1}^n \sigma_j \sigma_k C^{\Lambda,\epsilon}(x_j,x_k) \right] \\
&= \left( \frac{\Lambda}{\mu} \right)^{\frac{\beta^2}{4 \pi} \left( 1 + \sum_{j=1}^n \sigma_j \right)^2} \prod_{j=1}^n \left[ \mu^2 \left[ (z-x_j)^2 + \epsilon^2 \right] \right]^{\sigma_j \frac{\beta^2}{4 \pi}} \\
&\qquad\times \prod_{1 \leq j < k \leq n} \left[ \mu^2 \left[ (x_j-x_k)^2 + \epsilon^2 \right] \right]^{\sigma_j \sigma_k \frac{\beta^2}{4 \pi}} \eqend{,}
\end{splitequation}
using the normal-ordering of vertex operators~\eqref{eq:normal_ordering_exponential}, equation~\eqref{eq:gaussian_characteristic} with $J(x) = \beta \delta(x-z) + \sum_{j=1}^n \sigma_j \beta \delta(x-x_j)$, and the explicit form of the covariance~\eqref{eq:covariance}. Since we are in the finite regime $\beta^2 < 4 \pi$, the singularities that arise for $\epsilon = 0$ as $x_j \to x_k$ and $x_j \to z$ are integrable, and so for this term no further renormalisation beyond the normal-ordering is required. If we consider more than two points at once, the same argument as before using the scaling degree~\eqref{eq:euclid_renorm_multiscalingdegree} shows that also those singularities are integrable. Moreover, we again see how the neutrality condition appears: as $\Lambda \to 0$, we obtain a vanishing result unless $\sum_{j=1}^n \sigma_j = -1$. The last term with $V_{-\beta}$ then results in the same result with $\sigma_j$ replaced by $-\sigma_j$ on the right-hand side. \hfill\squareforqed

\subsection{Proof of theorem~\ref{thm_euclid_conv} (Convergence)}
\label{sec_euclid_conv}

In this whole section, we tacitly employ Fubini's theorem to interchange absolutely convergent integrals. We consider numerator and denominator of equation~\eqref{eq:thm_euclid_conv_series} separately. We start with the denominator, and compute analogously to equation~\eqref{eq:euclid_renorm_expectvertex} that
\begin{splitequation}
\label{eq:euclid_conv_expectvertex}
\expect{ \prod_{j=1}^n \mathcal{N}_\mu\left[ V_{\sigma_j \beta}(x_j) \right] }^{0,0}_{0,\text{ren}} &= \lim_{\Lambda,\epsilon \to 0} \expect{ \prod_{j=1}^n \mathcal{N}_\mu\left[ V_{\sigma_j \beta}(x_j) \right] }^{\Lambda,\epsilon}_0 \\
&= \delta_{0,\sum_{j=1}^n \sigma_j} \prod_{1 \leq j < k \leq n} \left[ \mu^2 (x_j-x_k)^2 \right]^{\sigma_j \sigma_k \frac{\beta^2}{4 \pi}} \eqend{.}
\end{splitequation}
Since $\sigma_j = \pm 1$, to obtain a non-vanishing result we must have $n = 2m$ with $m$ positive $\sigma_j$ and $m$ negative ones. We then rename the $x_j$ with $\sigma_j = -1$ to $y_j$ and renumber them. Taking into account that there are $\binom{n}{m} = (2m)!/(m!)^2$ possibilities to choose $m$ positive $\sigma_j$ from a total of $n = 2m$ ones (since equation~\eqref{eq:euclid_conv_expectvertex} is symmetric under a permutation of the (renamed) $x_i$ and $y_j$ among themselves), the denominator of equation~\eqref{eq:thm_euclid_conv_series} reduces to
\begin{splitequation}
\label{eq:euclid_conv_denom}
&\sum_{m=0}^\infty \frac{1}{(m!)^2} \int\dotsi\int \expect{ \prod_{j=1}^m \mathcal{N}_\mu\left[ V_{\beta}(x_j) \right] \, \mathcal{N}_\mu\left[ V_{-\beta}(y_j) \right] }^{0,0}_{0,\text{ren}} \prod_{i=1}^m g(x_i) g(y_i) \total^2 x_i \total^2 y_i \\
&= \sum_{m=0}^\infty \frac{\mu^{-m \frac{\beta^2}{2 \pi}}}{(m!)^2} \int\dotsi\int \left[ \frac{\prod_{1 \leq j < k \leq m} (x_j-x_k)^2 (y_j-y_k)^2}{\prod_{j,k=1}^m (x_j-y_k)^2} \right]^\frac{\beta^2}{4 \pi} \prod_{i=1}^m g(x_i) g(y_i) \total^2 x_i \total^2 y_i \eqend{.}
\end{splitequation}
As in previous works~\cite{deutschlavaud1974,froehlich1976}, to bound the term at order $2m$ we introduce complex variables $\chi_j \equiv x_j^1 + \mathi x_j^2$, $\upsilon_j \equiv y_j^1 + \mathi y_j^2$ such that
\begin{equation}
(x_i-y_j)^2 = (x_i^1 - y_j^1)^2 + (x_i^2 - y_j^2)^2 = \abs{ \chi_i - \upsilon_j }^2
\end{equation}
and analogously for $(x_i-x_j)^2 = \abs{ \chi_i - \chi_j }^2$ and $(y_i - y_j)^2 = \abs{ \upsilon_i - \upsilon_j }^2$. It follows that
\begin{splitequation}
\label{eq:cauchy_determinant}
&\left[ \frac{\prod_{1 \leq i < j \leq n} (x_i-x_j)^2 \prod_{1 \leq i < j \leq n} (y_i-y_j)^2}{\prod_{i,j=1}^n (x_i-y_j)^2} \right]^p \\
&= \abs{ \frac{\prod_{1 \leq i < j \leq n} (\chi_i-\chi_j) \prod_{1 \leq i < j \leq n} (\upsilon_i-\upsilon_j)}{\prod_{i,j=1}^n (\chi_i-\upsilon_j)} }^{2p} = \abs{ \det\left( \frac{1}{\chi_i - \upsilon_j} \right)_{i,j=1}^n }^{2 p} \eqend{,}
\end{splitequation}
where the last equality is the well-known Cauchy determinant formula~\cite{cauchy1841}. We estimate
\begin{equation}
\label{eq:cauchy_estimate1}
\abs{ \det\left( \frac{1}{\chi_i - \upsilon_j} \right)_{i,j=1}^n } \leq \sum_{\pi} \prod_{j=1}^n \frac{1}{\abs{\chi_j - \upsilon_{\pi(j)}}} \eqend{,}
\end{equation}
where the sum runs over all permutations $\pi$ of $\{ 1,\ldots,n \}$. Using the inequality
\begin{equation}
\label{eq:cauchy_estimate2}
\left( \sum_{j=1}^k \abs{a_k} \right)^p \leq \left.\begin{cases} k^{p-1} & p \geq 1 \\ 1 & 0 < p < 1 \end{cases}\right\} \sum_{j=1}^k \abs{a_k}^p \leq \max\left( k^{p-1}, 1 \right) \sum_{j=1}^k \abs{a_k}^p \eqend{,}
\end{equation}
it follows that
\begin{splitequation}
\label{eq:determinant_bound}
&\left[ \frac{\prod_{1 \leq i < j \leq n} (x_i-x_j)^2 \prod_{1 \leq i < j \leq n} (y_i-y_j)^2}{\prod_{i,j=1}^n (x_i-y_j)^2} \right]^p \leq \left[ \sum_{\pi} \prod_{j=1}^n \frac{1}{\abs{\chi_j - \upsilon_{\pi(j)}}} \right]^{2 p} \\
&\qquad\leq \max\left( (n!)^{2p-1} , 1 \right) \sum_{\pi} \prod_{j=1}^n \abs{\chi_j - \upsilon_{\pi(j)}}^{-2p} \\
&\qquad= (n!)^{\max(2p-1,0)} \sum_{\pi} \prod_{j=1}^n \left[ (x_j-y_{\pi(j)})^2 \right]^{-p} \eqend{,}
\end{splitequation}
since the number of permutations $\pi$ of $n$ elements is $n!$.

Inserting the estimate~\eqref{eq:determinant_bound} with $p = \beta^2/(4\pi)$ into equation~\eqref{eq:euclid_conv_denom}, the denominator of equation~\eqref{eq:thm_euclid_conv_series} is estimated by (we recall that $g \geq 0$)
\begin{equation}
\label{eq:euclid_conv_denomdetbound}
\sum_{m=0}^\infty \frac{1}{(m!)^\gamma} \int\dotsi\int \prod_{j=1}^m \left[ \mu^2 (x_j-y_j)^2 \right]^{-\frac{\beta^2}{4\pi}} \prod_{i=1}^m g(x_i) g(y_i) \total^2 x_i \total^2 y_i
\end{equation}
with $\gamma \equiv 1-\max(\beta^2/(2\pi)-1,0)$, and where an additional factor of $m!$ arose since all $m!$ permutations of the $y_j$ give the same contribution. We see that we only need to bound
\begin{equation}
\iint \left[ \mu^2 (x-y)^2 \right]^{-\frac{\beta^2}{4\pi}} g(x) g(y) \total^2 x \total^2 y \eqend{,}
\end{equation}
which we split in two parts: the region where $\mu^2 (x-y)^2 > 1$ and which we bound by $\iint g(x) g(y) \total^2 x \total^2 y = \norm{ g }_1^2$, and the region where $\mu^2 (x-y)^2 \leq 1$ and which we bound using Young's inequality. For this, we use it in the form~\cite{brascamplieb1976}
\begin{equation}
\label{eq:young3}
\abs{ \left( f, g \ast h \right) } = \abs{ \iint f(x) g(x-y) h(y) \total^2 x \total^2 y } \leq \norm{ f }_p \norm{ g }_q \norm{ h }_r \eqend{,}
\end{equation}
where $p,q,r \geq 1$ with $1/p + 1/q + 1/r = 2$. Taking
\begin{equation}
\label{eq:young_qpr_choice}
q = 1 + \frac{4 \pi - \beta^2}{8 \pi} \eqend{,} \quad p = r = 1 + \frac{4 \pi}{8 \pi - \beta^2} \eqend{,}
\end{equation}
the condition $p,q,r \geq 1$ is fulfilled since we are in the finite regime $\beta^2 < 4 \pi$, as well as $p,q,r < 2$. We obtain
\begin{equation}
\iint_{\mu \abs{x-y} \leq 1} \left[ \mu^2 (x-y)^2 \right]^{-\frac{\beta^2}{4\pi}} g(x) g(y) \total^2 x \total^2 y \leq \norm{ g }_p^2 \norm{ \Theta(1-\mu \abs{\blank}) (\mu\abs{\blank})^{-\frac{\beta^2}{2\pi}} }_q
\end{equation}
and
\begin{splitequation}
\label{eq:young_thetamu_norm}
\norm{ \Theta(1-\mu \abs{\blank}) (\mu\abs{\blank})^{-\frac{\beta^2}{2\pi}} }_q^q &= \int_{\mu\abs{x} \leq 1} (\mu\abs{x})^{-\frac{\beta^2}{2\pi} q} \total^2 x \\
&= \frac{32 \pi^3}{(4 \pi - \beta^2) (8 \pi - \beta^2)} \mu^{-2} \eqend{,}
\end{splitequation}
where the choice~\eqref{eq:young_qpr_choice} that we made for $q$ ensures that the integral is finite. Summing the contributions from both regions, we have shown that there exists a constant $K$ (depending on $\beta$ and $g$) such that
\begin{equation}
\label{eq:euclid_conv_bound_simplefactor}
\iint \left[ \mu^2 (x-y)^2 \right]^{-\frac{\beta^2}{4\pi}} g(x) g(y) \total^2 x \total^2 y \leq K \eqend{,}
\end{equation}
and hence the denominator~\eqref{eq:euclid_conv_denom} is bounded from above by
\begin{equation}
\label{eq:euclid_conv_denombound}
\sum_{m=0}^\infty K^m \left.\begin{cases} (m!)^{-1} & \beta^2 < 2 \pi \\ (m!)^{\frac{\beta^2}{2 \pi} - 2} & 2 \pi \leq \beta^2 < 4 \pi \end{cases} \right\} < \infty \eqend{.}
\end{equation}
Moreover, since each term in the sum~\eqref{eq:euclid_conv_denom} is positive due to the condition $g \geq 0$, it is bounded from below by the first term which is 1. Let us remark that the bounds~\eqref{eq:euclid_conv_denombound} are not new and known from~\cite{froehlich1976}; we have repeated their derivation to make the proof self-contained, and since very similar estimates are needed for the numerator.

Consider thus the numerator of equation~\eqref{eq:thm_euclid_conv_series}, where using the result~\eqref{eq:euclid_renorm_expectren} for the renormalised expectation values again only even terms with $n = 2m$ contribute. Taking into account the symmetry under the exchange of variables and renaming integration variables as in the case of the denominator, the numerator reads
\begin{splitequation}
\label{eq:euclid_conv_num}
&- \frac{\beta^2}{4 \pi^2} \sum_{m=0}^\infty \frac{\mu^{-m \frac{\beta^2}{2 \pi}}}{(m!)^2} \int f(z) \int\dotsi\int \left[ \frac{\prod_{1 \leq j < k \leq m} (x_j-x_k)^2 (y_j-y_k)^2}{\prod_{j,k=1}^m (x_j-y_k)^2} \right]^\frac{\beta^2}{4 \pi} \\
&\quad\times\bigg[ \sum_{k=1}^m u_{\mu\nu}^\text{ren}(x_k-z) + \sum_{k=1}^m u_{\mu\nu}^\text{ren}(y_k-z) + 2 \sum_{1 \leq j < k \leq m} \frac{(x_j-z)_{(\mu}}{(x_j-z)^2} \frac{(x_k-z)_{\nu)}}{(x_k-z)^2} \\
&\qquad\qquad- 2 \sum_{j,k=1}^m \frac{(x_j-z)_{(\mu}}{(x_j-z)^2} \frac{(y_k-z)_{\nu)}}{(y_k-z)^2} + 2 \sum_{1 \leq j < k \leq m} \frac{(y_j-z)_{(\mu}}{(y_j-z)^2} \frac{(y_k-z)_{\nu)}}{(y_k-z)^2} \bigg] \\
&\quad\times \prod_{i=1}^m g(x_i) g(y_i) \total^2 x_i \total^2 y_i \total^2 z \eqend{.}
\end{splitequation}
Since we can interchange $x_j$ and $y_j$ without changing the result, we see that there are three different types of terms: the ones involving the renormalised $u_{\mu\nu}^\text{ren}$, the ones involving a double sum and only $x_j$, and the ones with a double sum over both $x_j$ and $y_j$. We start with the last two types, which after the shift $x_j \to x_j + z$, $y_j \to y_j + z$ and with the estimate~\eqref{eq:determinant_bound} with $p = \beta^2/(4\pi)$ are bounded by
\begin{splitequation}
\label{eq:euclid_conv_numbound_sum2}
&\frac{\beta^2}{4 \pi^2} \sum_{m=0}^\infty \frac{1}{(m!)^{1+\gamma}} \int \abs{ f(z) } \int\dotsi\int \sum_{\pi} \prod_{j=1}^m \left[ \mu^2 (x_j-y_{\pi(j)})^2 \right]^{-\frac{\beta^2}{4 \pi}} \\
&\quad\times\bigg[ 4 \sum_{1 \leq j < k \leq m} \frac{1}{\abs{x_j} \abs{x_k}} + 2 \sum_{j,k=1}^m \frac{1}{\abs{x_j} \abs{y_k}} \bigg] \prod_{i=1}^m g(x_i+z) g(y_i+z) \total^2 x_i \total^2 y_i \total^2 z
\end{splitequation}
with the same $\gamma$ as before, defined after equation~\eqref{eq:euclid_conv_denomdetbound}. As in the case of the numerator, all $m!$ permutations of the $y_j$ give the same contribution since the second sum is invariant under permutations of the $y_k$, such that we obtain an additional factor of $m!$ and can take $\pi(j) = j$ for the following estimates. We then exchange the (finite) sums with the integration, and estimate the integrals over $x_j$ and $y_j$ for each $j$ separately. For this, we use again Young's inequality~\eqref{eq:young3}, separating contributions involving a $1/\abs{x_j}$ or $1/\abs{y_j}$ from the sums from the ones without. The contributions without such terms are estimated as before~\eqref{eq:euclid_conv_bound_simplefactor}, taking into account that the norm of $g$ with shifted argument $x+z$ is equal to the norm of $g$ because the integral that defines the norm is translation invariant. For the other contributions, we begin with the first sum, where a single term $1/\abs{x_j}$ may be present. In the region where $\mu \abs{x_j-y_j} > 1$, we then estimate
\begin{splitequation}
&\iint_{\mu \abs{x_j-y_j} > 1} \left[ \mu^2 (x_j-y_j)^2 \right]^{-\frac{\beta^2}{4 \pi}} \frac{1}{\abs{x_j}} g(x_j+z) g(y_j+z) \total^2 x_j \total^2 y_j \\
&\quad\leq \norm{ g }_1 \int \frac{1}{\abs{x_j}} g(x_j+z) \total^2 x_j
\end{splitequation}
using H{\"o}lder's inequality
\begin{equation}
\label{eq:hoelder}
\norm{ f g }_1 \leq \norm{ f }_{r/(r-1)} \norm{ g }_r
\end{equation}
with $r = 1$, and furthermore for $1 \leq p < 2$
\begin{splitequation}
\label{eq:g_xz_norm}
\int \left[ \frac{1}{\abs{x_j}} g(x_j+z) \right]^p \total^2 x_j &= \int_{\mu \abs{x_j} \leq 1} \frac{g^p(x_j+z)}{\abs{x_j}^p} \total^2 x_j + \int_{\mu \abs{x_j} > 1} \frac{g^p(x_j+z)}{\abs{x_j}^p} \total^2 x_j \\
&\leq 2 \pi \norm{ g^p }_\infty \int_0^\frac{1}{\mu} r^{1-p} \total r + \mu^p \int_{\mu \abs{x_j} > 1} g^p(x_j+z) \total^2 x_j \\
&\leq 2 \pi \norm{ g^p }_\infty \frac{\mu^{p-2}}{2-p} + \mu^p \norm{ g^p }_1 \eqend{,}
\end{splitequation}
which we employ with $p = 1$.

In the region where $\mu \abs{x_j-y_j} \leq 1$, we use instead Young's inequality~\eqref{eq:young3} with the same choice~\eqref{eq:young_qpr_choice} of exponents and obtain
\begin{splitequation}
&\iint_{\mu \abs{x_j-y_j} \leq 1} \left[ \mu^2 (x_j-y_j)^2 \right]^{-\frac{\beta^2}{4\pi}} \frac{1}{\abs{x_j}} g(x_j+z) g(y_j+z) \total^2 x_j \total^2 y_j \\
&\quad\leq \norm{ g }_p \norm{ \frac{1}{\abs{\blank}} g(\blank+z) }_p \norm{ \Theta(1-\mu \abs{\blank}) (\mu\abs{\blank})^{-\frac{\beta^2}{2\pi}} }_q \\
&\quad\leq \mu^{\frac{2}{p} - 3} \norm{ g }_p \left[ 2 \pi \norm{ g^p }_\infty \frac{8 \pi - \beta^2}{4 \pi - \beta^2} + \mu^2 \norm{ g^p }_1 \right]^\frac{1}{p} \left[ \frac{32 \pi^3}{(4 \pi - \beta^2) (8 \pi - \beta^2)} \right]^\frac{1}{q} \eqend{,}
\end{splitequation}
where we used the bounds~\eqref{eq:g_xz_norm} and~\eqref{eq:young_thetamu_norm} and our choice of exponents~\eqref{eq:young_qpr_choice}. Thus also in this case there exists a constant $K$ (depending on $\beta$ and $g$) such that
\begin{equation}
\iint \left[ \mu^2 (x_j-y_j)^2 \right]^{-\frac{\beta^2}{4 \pi}} \frac{1}{\abs{x_j}} g(x_j+z) g(y_j+z) \total^2 x_j \total^2 y_j \leq K \eqend{.}
\end{equation}
If the second sum involves $\frac{1}{\abs{x_j} \abs{y_k}}$ with $j \neq k$, we have the same estimates, while for $j = k$ we obtain analogously to the above, using equation~\eqref{eq:g_xz_norm} with $p = 1$,
\begin{splitequation}
&\iint_{\mu \abs{x_j-y_j} > 1} \left[ \mu^2 (x_j-y_j)^2 \right]^{-\frac{\beta^2}{4 \pi}} \frac{1}{\abs{x_j} \abs{y_j}} g(x_j+z) g(y_j+z) \total^2 x_j \total^2 y_j \\
&\quad\leq \left( 2 \pi \mu^{-1} \norm{ g }_\infty + \mu \norm{ g }_1 \right)^2
\end{splitequation}
and
\begin{splitequation}
&\iint_{\mu \abs{x_j-y_j} \leq 1} \left[ \mu^2 (x_j-y_j)^2 \right]^{-\frac{\beta^2}{4\pi}} \frac{1}{\abs{x_j} \abs{y_j}} g(x_j+z) g(y_j+z) \total^2 x_j \total^2 y_j \\
&\quad\leq \norm{ \frac{1}{\abs{\blank}} g(\blank+z) }_p^2 \norm{ \Theta(1-\mu \abs{\blank}) (\mu\abs{\blank})^{-\frac{\beta^2}{2\pi}} }_q \eqend{,}
\end{splitequation}
and inserting the bounds~\eqref{eq:g_xz_norm} and~\eqref{eq:young_thetamu_norm}, also in this case we can bound everything by a constant $K$ depending on $\beta$ and $g$. It follows that equation~\eqref{eq:euclid_conv_numbound_sum2} is bounded by
\begin{splitequation}
&\frac{\beta^2}{4 \pi^2} \sum_{m=0}^\infty \frac{1}{(m!)^\gamma} \int \abs{ f(z) } \left[ 4 \sum_{1 \leq j < k \leq m} K^m + 2 \sum_{j,k=1}^m K^m \right] \total^2 z \\
&\quad\leq C \sum_{m=0}^\infty \frac{1}{(m!)^\gamma} m^2 K^m < \infty
\end{splitequation}
with another constant $C = 4/\pi \norm{ f }_1$.

For the remaining terms in the numerator~\eqref{eq:euclid_conv_num} involving the renormalised $u_{\mu\nu}^\text{ren}$, we recall its definition~\eqref{eq:umunu_ren} from which it follows that
\begin{splitequation}
\label{eq:umunu_ren_shifted}
&\int u_{\mu\nu}^\text{ren}(x-z) f(z) \total^2 z \\
&\quad= - \frac{1}{4} \int \ln\left[ \mu^2 (x-z)^2 \right] \left[ \partial_\mu \partial_\nu f(z) - \delta_{\mu\nu} \frac{(x-z)^\rho}{(x-z)^2} \partial_\rho f(z) \right] \total^2 z \\
&\quad= \frac{1}{4} \int \left[ 2 \frac{(x-z)_\mu}{(x-z)^2} \partial_\nu f(z) + \delta_{\mu\nu} \ln\left[ \mu^2 (x-z)^2 \right] \frac{(x-z)^\rho}{(x-z)^2} \partial_\rho f(z) \right] \total^2 z \eqend{,}
\end{splitequation}
where we integrated the first term by parts. We then again shift $x_j \to x_j + z$, $y_j \to y_j + z$ and use the determinant estimate~\eqref{eq:determinant_bound} with $p = \beta^2/(4\pi)$ to obtain the bound
\begin{splitequation}
\label{eq:euclid_conv_numbound_sum1}
&\frac{\beta^2}{4 \pi^2} \sum_{m=0}^\infty \frac{1}{(m!)^{1+\gamma}} \int \int\dotsi\int \sum_{\pi} \prod_{j=1}^m \left[ \mu^2 (x_j-y_{\pi(j)})^2 \right]^{-\frac{\beta^2}{4 \pi}} \\
&\quad\times \sum_{k=1}^m \left[ \frac{2 + \abs{ \ln(\mu \abs{x_k}) }}{2 \abs{x_k}} \sup_{\rho \in \{1,2\}} \abs{ \partial_\rho f(z) } \right] \prod_{i=1}^m g(x_i+z) g(y_i+z) \total^2 x_i \total^2 y_i \total^2 z
\end{splitequation}
with the same $\gamma$ as before. The sum over permutations $\pi$ again gives a factor $m!$, and for the terms with the $x_k$ not involved in the sum, we have the same bounds~\eqref{eq:euclid_conv_bound_simplefactor} as before. For the other terms, in the region where $\mu \abs{x_k-y_k} > 1$ we estimate that
\begin{splitequation}
&\iint_{\mu \abs{x_k-y_k} > 1} \left[ \mu^2 (x_k-y_k)^2 \right]^{-\frac{\beta^2}{4 \pi}} \frac{2 + \abs{ \ln(\mu \abs{x_k}) }}{2 \abs{x_k}} g(x_k+z) g(y_k+z) \total^2 x_k \total^2 y_k \\
&\quad\leq \norm{ g }_1 \int \frac{2 + \abs{ \ln(\mu \abs{x_k}) }}{2 \abs{x_k}} g(x_k+z) \total^2 x_k \eqend{,}
\end{splitequation}
and then bound the last integral analogously to equation~\eqref{eq:g_xz_norm}. Using that $\ln x \leq \sqrt{x}$ for $x \geq 1$, this results in
\begin{splitequation}
&\int \left[ \frac{2 + \abs{ \ln(\mu \abs{x_k}) }}{2 \abs{x_k}} g(x_k+z) \right]^p \total^2 x_k \\
&\quad\leq 2 \pi \norm{ g^p }_\infty \int_0^\frac{1}{\mu} \left[ 1 - \frac{\ln(\mu r)}{2} \right]^p r^{1-p} \total r + \left( \frac{3}{2} \mu \right)^p \int_{\mu \abs{x_j} > 1} g^p(x_k+z) \total^2 x_k \\
&\quad\leq 5 \pi \norm{ g^p }_\infty \frac{\mu^{p-2}}{(2-p)^3} + \left( \frac{3}{2} \mu \right)^p \norm{ g^p }_1 \eqend{,}
\end{splitequation}
which we employ with $p = 1$.

In the region where $\mu \abs{x_k-y_k} \leq 1$, we use again Young's inequality~\eqref{eq:young3} with the same choice of exponents~\eqref{eq:young_qpr_choice} and obtain
\begin{splitequation}
&\iint_{\mu \abs{x_k-y_k} \leq 1} \left[ \mu^2 (x_k-y_k)^2 \right]^{-\frac{\beta^2}{4\pi}} \abs{ \ln(\mu \abs{x_k}) } g(x_k+z) g(y_k+z) \total^2 x_k \total^2 y_k \\
&\quad\leq \norm{ g }_p \norm{ \abs{ \ln(\mu \abs{\blank}) } g(\blank+z) }_p \norm{ \Theta(1-\mu \abs{\blank}) (\mu\abs{\blank})^{-\frac{\beta^2}{2\pi}} }_q
\end{splitequation}
and, using again H{\"o}lder's inequality~\eqref{eq:hoelder} with $r = 1$,
\begin{splitequation}
\norm{ \abs{ \ln(\mu \abs{\blank}) } g(\blank+z) }_p^p &= \int \frac{\abs{ \ln(\mu \abs{x}) }^p}{(\mu^2 x^2+1)^2} g^p(x+z) (\mu^2 x^2+1)^2 \total^2 x \\
&\leq 6 \mu^{-2} \norm{ (\mu^2 \blank^2+1)^2 g^p(\blank+z) }_\infty < \infty \eqend{,}
\end{splitequation}
and analogously
\begin{splitequation}
&\iint_{\mu \abs{x_k-y_k} \leq 1} \left[ \mu^2 (x_k-y_k)^2 \right]^{-\frac{\beta^2}{4\pi}} \frac{\abs{ \ln(\mu \abs{x_k}) }}{\abs{x_k}} g(x_k+z) g(y_k+z) \total^2 x_k \total^2 y_k \\
&\quad\leq \norm{ g }_p \norm{ \frac{\abs{ \ln(\mu \abs{\blank}) }}{\abs{\blank}} g(\blank+z) }_p \norm{ \Theta(1-\mu \abs{\blank}) (\mu\abs{\blank})^{-\frac{\beta^2}{2\pi}} }_q
\end{splitequation}
with
\begin{splitequation}
\norm{ \frac{\abs{ \ln(\mu \abs{\blank}) }}{\abs{\blank}} g(\blank+z) }_p^p &= \int \frac{\abs{ \ln(\mu \abs{x}) }^p}{\abs{x} (\mu^2 x^2+1)^2} g^p(x+z) (\mu^2 x^2+1)^2 \total^2 x \\
&\leq 25 \mu^{-2} \norm{ (\mu^2 \blank^2+1)^2 g^p(\blank+z) }_\infty < \infty \eqend{.}
\end{splitequation}
It follows that there exists a constant $K$ (depending on $\beta$ and $g$) such that equation~\eqref{eq:euclid_conv_numbound_sum1} is bounded by
\begin{equation}
C \sum_{m=0}^\infty \frac{1}{(m!)^\gamma} m K^m < \infty
\end{equation}
with another constant $C = \frac{1}{\pi} \Big[ \norm{ \sup_{\mu,\nu \in \{1,2\}} \abs{ \partial_\mu \partial_\nu f(\blank)} }_1 + \norm{ \sup_{\rho \in \{1,2\}} \abs{ \partial_\rho f(\blank) } }_1 \Big]$.

Taking all together, the numerator of equation~\eqref{eq:thm_euclid_conv_series} is bounded by~\eqref{eq:thm_euclid_conv_bound}, with $K$ being the maximum of all the constants $K$ in this section, and $C$ being the sum of all the constants $C$ in this section. Since the denominator is bounded from below by $1$, the bound~\eqref{eq:thm_euclid_conv_bound} holds for the full Gell-Mann--Low expectation value~\eqref{eq:thm_euclid_conv_series}.

For the first two terms of the stress tensor $T_{\mu\nu}(z) = \op_{\mu\nu}(z) - \frac{1}{2} \delta_{\mu\nu} \op_\rho{}^\rho(z) + g(z) \delta_{\mu\nu} ( V_\beta(z) + V_{-\beta}(z) )$ we can take over the above bounds. For the third term, we use the result~\eqref{eq:euclid_renorm_expectvertex} and thus have to bound
\begin{splitequation}
\label{eq:euclid_conv_num_vertex}
&\sum_{n=0}^\infty \frac{1}{n!} \int\dotsi\int \sum_{\sigma_i = \pm 1} \expect{ \mathcal{N}_\mu\left[ V_\beta(g f) \right] \prod_{j=1}^n \mathcal{N}_\mu\left[ V_{\sigma_j \beta}(x_j) \right] }^{0,0}_{0,\text{ren}} \prod_{i=1}^n g(x_i) \total^2 x_i \\
&= \sum_{m=0}^\infty \frac{\mu^{-(m+1) \frac{\beta^2}{2 \pi}}}{m! (m+1)!} \int g(z) f(z) \int\dotsi\int \prod_{j=1}^m \left[ \frac{(z-x_j)^2}{(z-y_j)^2} \right]^\frac{\beta^2}{4 \pi} \left[ \frac{1}{(z-y_{m+1})^2} \right]^\frac{\beta^2}{4 \pi} \\
&\quad\times \left[ \frac{\prod_{1 \leq j < k \leq m} (x_j-x_k)^2 (y_j-y_k)^2}{\prod_{j,k=1}^m (x_j-y_k)^2} \prod_{j=1}^m \frac{(y_j-y_{m+1})^2}{(x_j-y_{m+1})^2} \right]^\frac{\beta^2}{4 \pi} \\
&\quad\times \prod_{i=1}^m g(x_i) \total^2 x_i \prod_{j=1}^{m+1} g(y_j) \total^2 y_j \total^2 z \eqend{,}
\end{splitequation}
where we used that because of the neutrality condition only odd terms $n = 2m+1$ give a non-vanishing contribution. Of these, $m$ have a positive $\sigma_j$ and $m+1$ have a negative one, such that the sum over the $\sigma_j$ resulted in a factor of $\binom{2m+1}{m} = (2m+1)!/(m!(m+1)!)$, and as before we renamed the integration variables with a negative $\sigma_j$ to $y_j$. Setting $x_{m+1} \equiv z$, the terms in brackets combine to the expression~\eqref{eq:cauchy_determinant} with $n = m+1$ and $p = \beta^2/(4\pi)$, and we can use the Cauchy determinant formula and the bound~\eqref{eq:determinant_bound} for the determinant. We can thus bound the series~\eqref{eq:euclid_conv_num_vertex} by
\begin{splitequation}
&\sum_{m=0}^\infty \frac{1}{m! [(m+1)!]^\gamma} \int g(z) \abs{ f(z) } \int\dotsi\int \sum_{\pi} \left[ \mu^2 (z-y_{\pi(m+1)})^2 \right]^{-\frac{\beta^2}{4 \pi}} \\
&\quad\times \prod_{j=1}^m \left[ \mu^2 (x_j-y_{\pi(j)})^2 \right]^{-\frac{\beta^2}{4 \pi}} \prod_{i=1}^m g(x_i) \total^2 x_i \prod_{j=1}^{m+1} g(y_j) \total^2 y_j \total^2 z \eqend{,}
\end{splitequation}
and the sum over permutations $\pi$ gives a factor $(m+1)!$ since the remainder of the integrand is symmetric under the interchange of the $y_j$. We can now use the bound~\eqref{eq:euclid_conv_bound_simplefactor} (estimating together the integrals over $z$ and $y_{m+1}$), and obtain for the series~\eqref{eq:euclid_conv_num_vertex} the bound
\begin{equation}
C \sum_{m=0}^\infty \frac{1}{[(m+1)!]^\gamma} (m+1) K^m \leq C \sum_{m=0}^\infty \frac{1}{(m!)^\gamma} m^2 K^m < \infty \eqend{,}
\end{equation}
with $C$ now also depending on $g$ (from the integral over $y_{m+1}$ and $z$). The same bound is obtained analogously for the fourth term in the stress tensor involving $V_{-\beta}$, which switches $x_i$ with $y_i$ in equation~\eqref{eq:euclid_conv_num_vertex}, such that taking all together the bound~\eqref{eq:thm_euclid_conv_bound} holds also for the stress tensor. \hfill\squareforqed

\subsection{Proof of theorem~\ref{thm_euclid_cons} (Conservation)}
\label{sec_euclid_cons}

Since we have shown in the last subsection that the denominator of the Gell-Mann--Low formula~\eqref{eq:thm_euclid_conv_series} is bounded from below, to show that the interacting stress tensor is conserved it is enough to show that the numerator vanishes when smeared with a test function of the form $\partial^\mu f$ with $f \istest$. Consider the numerator for $\op_{\mu\nu}$~\eqref{eq:euclid_conv_num} smeared with $\partial^\mu f$, which contains two different types of terms: the ones with the renormalised $u_{\mu\nu}^\text{ren}$, and the others involving double sums. We start with the second type, and compute
\begin{splitequation}
\label{eq:euclid_cons_twopoint}
&\int \frac{(x-z)_{(\mu}}{(x-z)^2} \frac{(y-z)_{\nu)}}{(y-z)^2} \partial^\mu f(z) \total^2 z \\
&\quad= \frac{1}{4} \partial_{(\mu}^x \partial_{\nu)}^y \int \ln\left[ \mu^2 (x-z)^2 \right] \ln\left[ \mu^2 (y-z)^2 \right] \partial^\mu f(z) \total^2 z \\
&\quad= \frac{1}{8} \laplace_x \partial_\nu^y \int \ln\left[ \mu^2 (x-z)^2 \right] \ln\left[ \mu^2 (y-z)^2 \right] f(z) \total^2 z \\
&\qquad\quad+ \frac{1}{8} \laplace_y \partial_\nu^x \int \ln\left[ \mu^2 (x-z)^2 \right] \ln\left[ \mu^2 (y-z)^2 \right] f(z) \total^2 z \\
&\qquad\quad+ \frac{1}{8} \partial_\mu^x \partial^\mu_y \left( \partial_\nu^x + \partial_\nu^y \right) \int \ln\left[ \mu^2 (x-z)^2 \right] \ln\left[ \mu^2 (y-z)^2 \right] f(z) \total^2 z \\
&\quad= \frac{\pi}{2} \partial_\nu^y \ln\left[ \mu^2 (y-x)^2 \right] f(x) + \frac{\pi}{2} \partial_\nu^x \ln\left[ \mu^2 (x-y)^2 \right] f(y) \\
&\qquad\quad+ \frac{1}{8} \partial_\mu^x \partial^\mu_y \left( \partial_\nu^x + \partial_\nu^y \right) \int \ln\left[ \mu^2 (x-z)^2 \right] \ln\left[ \mu^2 (y-z)^2 \right] f(z) \total^2 z \eqend{,}
\end{splitequation}
using that the logarithm is a fundamental solution of the Laplace equation in 2 dimensions:
\begin{equation}
\laplace \ln\left[ \mu^2 (x-z)^2 \right] = 4 \pi \delta(x-z) \eqend{,}
\end{equation}
which is well-known, but also follows from the limit $m \to 0$ of equation~\eqref{eq:massive_covariance} for the massive covariance after taking derivatives. Analogously, we obtain
\begin{splitequation}
&\int \frac{(x-z)_\rho}{(x-z)^2} \frac{(y-z)^\rho}{(y-z)^2} \partial_\nu f(z) \total^2 z \\
&\quad= \frac{1}{4} \partial_\rho^x \partial^\rho_y \left( \partial_\nu^x + \partial_\nu^y \right) \int \ln\left[ \mu^2 (x-z)^2 \right] \ln\left[ \mu^2 (y-z)^2 \right] f(z) \total^2 z \eqend{,}
\end{splitequation}
which cancels the last line of~\eqref{eq:euclid_cons_twopoint} if we take the combination $\op_{\mu\nu} - \frac{1}{2} \delta_{\mu\nu} \op_\rho{}^\rho$ that appears in the stress tensor. The terms of the second type in this combination thus sum up to
\begin{splitequation}
\label{eq:euclid_cons_twopoint_2}
&\frac{\beta^2}{4 \pi} \sum_{m=0}^\infty \frac{\mu^{-m \frac{\beta^2}{2 \pi}}}{(m!)^2} \int\dotsi\int \left[ \frac{\prod_{1 \leq j < k \leq m} (x_j-x_k)^2 (y_j-y_k)^2}{\prod_{j,k=1}^m (x_j-y_k)^2} \right]^\frac{\beta^2}{4 \pi} \\
&\quad\times\bigg[ \sum_{1 \leq j < k \leq m} [ f(x_j) - f(x_k) ] \partial_\nu \ln\left[ \mu^2 (x_j-x_k)^2 \right] \\
&\qquad\quad- \sum_{j,k=1}^m [ f(x_j) - f(y_k) ] \partial_\nu \ln\left[ \mu^2 (x_j-y_k)^2 \right] \\
&\qquad\quad+ \sum_{1 \leq j < k \leq m} [ f(y_j) - f(y_k) ] \partial_\nu \ln\left[ \mu^2 (y_j-y_k)^2 \right] \bigg] \prod_{i=1}^m g(x_i) g(y_i) \total^2 x_i \total^2 y_i \eqend{.}
\end{splitequation}
We can further simplify this expression by collecting the various derivatives such that they act on the product in the first line. To show this, we compute
\begin{splitequation}
\label{eq:euclid_cons_det_derivative}
&\partial_\nu^{x_\ell} \ln \left( \left[ \frac{\prod_{1 \leq j < k \leq m} (x_j-x_k)^2 (y_j-y_k)^2}{\prod_{j,k=1}^m (x_j-y_k)^2} \right]^\frac{\beta^2}{4 \pi} \right) \\
&\quad= \frac{\beta^2}{2 \pi} \left[ \sum_{k=1}^{\ell-1} \frac{(x_\ell-x_k)_\nu}{(x_\ell-x_k)^2} + \sum_{k=\ell+1}^m \frac{(x_\ell-x_k)_\nu}{(x_\ell-x_k)^2} - \sum_{k=1}^m \frac{(x_\ell-y_k)_\nu}{(x_\ell-y_k)^2} \right] \eqend{,}
\end{splitequation}
and the analogous equation with $x$ and $y$ exchanged, multiply by $f(x_\ell)$, sum over $\ell$ and rename summation indices to obtain
\begin{splitequation}
&\sum_{k=1}^m f(x_k) \partial_\nu^{x_k} \ln \left( \left[ \frac{\prod_{1 \leq j < k \leq m} (x_j-x_k)^2 (y_j-y_k)^2}{\prod_{j,k=1}^m (x_j-y_k)^2} \right]^\frac{\beta^2}{4 \pi} \right) \\
&\quad= \frac{\beta^2}{2 \pi} \left[ \sum_{1 \leq j < k \leq m} [ f(x_j) - f(x_k) ] \frac{(x_j-x_k)_\nu}{(x_j-x_k)^2} - \sum_{j,k=1}^m f(x_j) \frac{(x_j-y_k)_\nu}{(x_j-y_k)^2} \right] \\
&\quad= \frac{\beta^2}{4 \pi} \sum_{1 \leq j < k \leq m} [ f(x_j) - f(x_k) ] \partial^{x_j}_\nu \ln\left[ \mu^2 (x_j-x_k)^2 \right] \\
&\qquad- \frac{\beta^2}{4 \pi} \sum_{j,k=1}^m f(x_j) \partial^{x_j}_\nu \ln\left[ \mu^2 (x_j-y_k)^2 \right] \eqend{,}
\end{splitequation}
as well as the analogous equation with $x$ and $y$ exchanged. We can thus rewrite equation~\eqref{eq:euclid_cons_twopoint_2} in the form
\begin{splitequation}
&\sum_{m=0}^\infty \frac{\mu^{-m \frac{\beta^2}{2 \pi}}}{(m!)^2} \int\dotsi\int \prod_{i=1}^m \total^2 x_i \total^2 y_i \, g(x_i) g(y_i) \\
&\quad\times \sum_{k=1}^m \left[ f(x_k) \partial_\nu^{x_k} + f(y_k) \partial_\nu^{y_k} \right] \left[ \frac{\prod_{1 \leq j < k \leq m} (x_j-x_k)^2 (y_j-y_k)^2}{\prod_{j,k=1}^m (x_j-y_k)^2} \right]^\frac{\beta^2}{4 \pi} \eqend{,}
\end{splitequation}
and since by assumption $g$ is constant on the support of $f$, we can integrate the derivatives by parts such that they act on $f$ and then use the symmetry of the integrand under the exchange of the $x_i$ and the $y_i$, such that the sum over $k$ gives a factor $m$. Since the term with $m = 0$ does not contribute, renaming the summation index $m \to m+1$ we thus obtain
\begin{splitequation}
\label{eq:euclid_cons_twopoint_3}
&- \sum_{m=0}^\infty \frac{\mu^{-(m+1) \frac{\beta^2}{2 \pi}}}{m! (m+1)!} \int\dotsi\int \left[ \frac{\prod_{1 \leq j < k \leq m+1} (x_j-x_k)^2 (y_j-y_k)^2}{\prod_{j,k=1}^{m+1} (x_j-y_k)^2} \right]^\frac{\beta^2}{4 \pi} \\
&\quad\times \left[ \partial_\nu f(x_{m+1}) + \partial_\nu f(y_{m+1}) \right] \prod_{i=1}^{m+1} g(x_i) g(y_i) \total^2 x_i \total^2 y_i
\end{splitequation}
for the terms of the second type in the combination $\op_{\mu\nu} - \frac{1}{2} \delta_{\mu\nu} \op_\rho{}^\rho$. This now cancels exactly the contribution from the last two terms of the stress tensor $T_{\mu\nu} = \op_{\mu\nu} - \frac{1}{2} \delta_{\mu\nu} \op_\rho{}^\rho + g \delta_{\mu\nu} ( V_\beta + V_{-\beta} )$ smeared with $\partial^\mu f$: this is seen by comparing equation~\eqref{eq:euclid_cons_twopoint_3} with the result~\eqref{eq:euclid_conv_num_vertex} (renaming $z = x_{m+1}$ in that equation), and the analogous result for $V_{-\beta}$ which is obtained by exchanging $x$ and $y$.

It follows that the numerator of the Gell-Mann--Low formula for $T_{\mu\nu}$ smeared with $\partial^\mu f$ only involves the terms with the renormalised $u_{\mu\nu}^\text{ren}$. For them, we use equation~\eqref{eq:umunu_ren_shifted} and compute
\begin{splitequation}
&\int u_{\mu\nu}^\text{ren}(x-z) \partial^\mu f(z) \total^2 z \\
&\quad= - \frac{1}{4} \int \ln\left[ \mu^2 (x-z)^2 \right] \left[ \laplace \partial_\nu f(z) - \frac{(x-z)^\rho}{(x-z)^2} \partial_\rho \partial_\nu f(z) \right] \total^2 z \\
&\quad= - \pi \partial_\nu f(x) + \frac{1}{4} \int \ln\left[ \mu^2 (x-z)^2 \right] \frac{(x-z)^\rho}{(x-z)^2} \partial_\rho \partial_\nu f(z) \total^2 z
\end{splitequation}
as well as
\begin{splitequation}
&\int u_\rho{}^\rho{}^\text{ren}(x-z) \partial_\nu f(z) \total^2 z \\
&\quad= - \frac{1}{4} \int \ln\left[ \mu^2 (x-z)^2 \right] \left[ \laplace \partial_\nu f(z) - 2 \frac{(x-z)^\rho}{(x-z)^2} \partial_\rho \partial_\nu f(z) \right] \total^2 z \\
&\quad= - \pi \partial_\nu f(x) + \frac{1}{2} \int \ln\left[ \mu^2 (x-z)^2 \right] \frac{(x-z)^\rho}{(x-z)^2} \partial_\rho \partial_\nu f(z) \total^2 z \eqend{,}
\end{splitequation}
using again that the logarithm is a fundamental solution of the Laplace equation in two dimensions. In the combination $u_{\mu\nu}^\text{ren} - \frac{1}{2} \delta_{\mu\nu} u_\rho{}^\rho{}^\text{ren}$ the integrals again cancel and we are left with the first terms:
\begin{splitequation}
&\sum_{n=0}^\infty \frac{1}{n!} \int\dotsi\int \sum_{\sigma_i = \pm 1} \expect{ \mathcal{N}_\mu\left[ T_{\mu\nu}(\partial^\mu f) \right] \prod_{j=1}^n \mathcal{N}_\mu\left[ V_{\sigma_j \beta}(x_j) \right] }^{0,0}_{0,\text{ren}} \prod_{i=1}^n g(x_i) \total^2 x_i \\
&\quad= \frac{\beta^2}{8 \pi} \sum_{m=0}^\infty \frac{\mu^{-m \frac{\beta^2}{2 \pi}}}{(m!)^2} \int\dotsi\int \left[ \frac{\prod_{1 \leq j < k \leq m} (x_j-x_k)^2 (y_j-y_k)^2}{\prod_{j,k=1}^m (x_j-y_k)^2} \right]^\frac{\beta^2}{4 \pi} \\
&\qquad\times \sum_{k=1}^m \bigg[ \partial_\nu f(x_k) + \partial_\nu f(y_k) \bigg] \prod_{i=1}^m g(x_i) g(y_i) \total^2 x_i \total^2 y_i \eqend{.}
\end{splitequation}
Comparing with the result~\eqref{eq:euclid_conv_num_vertex} with $z = x_{m+1}$, and the analogous result for $V_{-\beta}$ which is obtained by exchanging $x$ and $y$, it follows that
\begin{splitequation}
\sum_{n=0}^\infty \frac{1}{n!} \int\dotsi\int \sum_{\sigma_i = \pm 1} \expect{ \mathcal{N}_\mu\left[ \hat{T}_{\mu\nu}(\partial^\mu f) \right] \prod_{j=1}^n \mathcal{N}_\mu\left[ V_{\sigma_j \beta}(x_j) \right] }^{0,0}_{0,\text{ren}} \prod_{i=1}^n g(x_i) \total^2 x_i = 0
\end{splitequation}
and thus equation~\eqref{eq:thm_euclid_cons_eq}, where
\begin{splitequation}
\label{eq:euclid_cons_modified_stress}
\hat{T}_{\mu\nu} &\equiv T_{\mu\nu} - \frac{\beta^2}{8 \pi} g \delta_{\mu\nu} ( V_\beta + V_{-\beta} ) \\
&= \op_{\mu\nu} - \frac{1}{2} \delta_{\mu\nu} \op_\rho{}^\rho + g \left( 1 - \frac{\beta^2}{8 \pi} \right) \delta_{\mu\nu} ( V_\beta + V_{-\beta} )
\end{splitequation}
is the quantum-corrected stress tensor. \hfill\squareforqed

\section{Minkowski case}
\label{sec_mink}

In the Minkowski case, we use the framework of perturbative algebraic quantum field theory (pAQFT), whose main advantage is the clean separation of algebraic issues (including renormalisation) from the construction of a state. Reviews of pAQFT can be found in~\cite{hollandswald2015,fewsterverch2015,fredenhagenrejzner2016}, and we again take formulas and results without specifying their source explicitly.

\subsection{Preliminaries}
\label{sec_mink_pre}

In pAQFT, one first constructs an algebra of fields $\mathfrak{A}_0$ as the free algebra over smeared fields $\phi(f) = (f,\phi)$ with $f \istest$ and their adjoints $\left[ \phi(f) \right]^\dagger$, with unit $\1$ and the non-commutative product $\star$, quotiened by the commutation relations
\begin{equation}
\label{eq:commutation}
\left[ \phi(f), \phi(g) \right]_\star \equiv \phi(f) \star \phi(g) - \phi(g) \star \phi(f) = \mathi \left( f, \Delta \ast g \right) \1 \eqend{,}
\end{equation}
where the scalar product $(\cdot,\cdot)$ and convolution $\ast$ are defined in equations~\eqref{eq:scalar_product} and~\eqref{eq:convolution}. $\Delta$ is the commutator function defined as the difference between retarded and advanced fundamental solutions of the Klein--Gordon equation
\begin{equation}
\Delta(x,y) \equiv G_\text{ret}(x,y) - G_\text{adv}(x,y) \eqend{,} \qquad \partial^2 G_\text{ret}(x,y) = \delta(x-y) = \partial^2 G_\text{adv}(x,y) \eqend{,}
\end{equation}
which are unique in any globally hyperbolic spacetime, in particular Minkowski space. For the real scalar field that we are considering, one also takes the quotient by the relation $\left[ \phi(f) \right]^\dagger = \phi(f^*)$. A state $\omega$ is given by a linear functional on $\mathfrak{A}_0$, which is normalised $\omega(\1) = 1$ and positive: $\omega(A^\dagger A) > 0$ for all $0 \neq A \in \mathfrak{A}_0$. We consider quasi-free states with vanishing one-point function, which are the analogue of the centred Gaussian covariance in Euclidean signature. That is, these states are characterised by the analogue of equation~\eqref{eq:gaussian_measure}:
\begin{equation}
\omega\left( \phi(f_1) \star \cdots \star \phi(f_n) \right) = \sum_{\pi} \prod_{(i,j) \in \pi} \mathi \left( f_i, G^+ \ast f_j \right) \eqend{,}
\end{equation}
where the sum runs over all partitions $\pi$ of the set $\{ 1,\ldots,n \}$ into ordered pairs $(i,j)$, and where
\begin{equation}
G^+(x,y) \equiv - \mathi \omega\left( \phi(x) \star \phi(y) \right)
\end{equation}
is the two-point function of the state $\omega$, here written in terms of its integral kernel. Summing, we also obtain the analogue of equation~\eqref{eq:gaussian_characteristic} for exponentials:
\begin{equation}
\omega\left( \mathe_\star^{\mathi (J,\phi)} \right) = \omega\left( \sum_{k=0}^\infty \frac{\mathi^k}{k!} \, \underbrace{(J,\phi) \star \cdots \star (J,\phi)}_{k \text{ times}} \right) = \mathe^{- \frac{\mathi}{2} (J, G^+ \ast J)} \eqend{,}
\end{equation}
from which the above follows by functional differentiation with respect to $J$. We note that this relation should be understood as a formal power series in $J$, since the infinite sum that formally defines the exponential is not an element of $\mathfrak{A}_0$. Taking the expectation value of the commutation relations~\eqref{eq:commutation}, we obtain
\begin{equation}
\label{eq:commutation_g_delta}
G^+(x,y) - G^+(y,x) = \Delta(x,y) \eqend{,}
\end{equation}
such that the antisymmetric part of the two-point function is fixed.

For a massless scalar field in two-dimensional Minkowski space, we again have an IR divergence if we try to define the vacuum state as the limit $m \to 0$ of the massive one analogous to equation~\eqref{eq:massive_covariance} in Euclidean signature. Namely, the massive two-point function reads
\begin{splitequation}
G^+(x,0) &= - 2 \pi \mathi \int \mathe^{\mathi p x} \delta(p^2 + m^2) \Theta(p^0) \frac{\total^2 p}{(2\pi)^2} \\
&= - \mathi \lim_{\epsilon \to 0^+} \int \frac{\mathe^{- \mathi \sqrt{(p^1)^2+m^2} x^0 + \mathi p^1 x^1 - \epsilon \abs{p^1}}}{2 \sqrt{(p^1)^2+m^2}} \frac{\total p^1}{2 \pi} \eqend{,}
\end{splitequation}
and if $m = 0$ the $p^1$ integral has a logarithmic singularity at the origin. In the limit $m \to 0$, we obtain
\begin{equation}
G^+(x,0) = \frac{\mathi}{4 \pi} \ln\left[ \frac{m^2 \mathe^{2\gamma}}{4} ( \epsilon - \mathi (x^1-x^0) ) ( \epsilon + \mathi (x^1+x^0) ) \right] + \bigo{m} \eqend{,}
\end{equation}
and we see that for spacelike separations $(x^1)^2 > (x^0)^2$ where we can set $\epsilon = 0$, we recover exactly the result~\eqref{eq:massive_covariance} in Euclidean signature, up to an overall factor of $-\mathi$. However, the antisymmetric part has the well-defined limit
\begin{splitequation}
\Delta(x,0) &= G^+(x,0) - G^+(0,x) \\
&= \frac{\mathi}{4 \pi} \lim_{\epsilon \to 0^+} \Big[ \ln\left[ \epsilon - \mathi (x^1-x^0) \right] + \ln\left[ \epsilon + \mathi (x^1+x^0) \right] \\
&\qquad\qquad\quad- \ln\left[ \epsilon + \mathi (x^1-x^0) \right] - \ln\left[ \epsilon - \mathi (x^1+x^0) \right] \Big] \\
&= \frac{1}{4} \left[ \sgn(x^1-x^0) - \sgn(x^1+x^0) \right] = - \frac{1}{2} \Theta\left[ (x^0)^2 - (x^1)^2 \right] \sgn(x^0) \eqend{,}
\end{splitequation}
and vanishes outside the light cone as required. However, due to the IR divergence the massless limit of the massive two-point function is not positive definite (as one can also check explicitly), and so does not define a state. To construct the Hilbert space of the theory from the algebra $\mathfrak{A}_0$ and a state (via the GNS construction) the IR divergence must be cured, which can be done in different (related) ways:
\begin{itemize}
\item Working with a massive scalar field, and taking the limit $m \to 0$ only for expectation values of operators with a well-defined limit. This maintains positivity, and it is expected (and in some cases proven) that a finite mass arises from non-perturbative effects (\emph{Debye screening})~\cite{brydgesfederbush1980,yang1987,bauerschmidtwebb2020}.
\item The separation of the constant part of $\phi$ and its quantisation as a massless harmonic oscillator, similar to what is done in string theory~\cite{superstring1} and de Sitter QFT~\cite{kirstengarriga1993}. In pAQFT, this is the Derezi{\'n}ski--Meissner representation~\cite{derezinskimeissner2006}.
\item A Krein space construction, where positivity is only maintained in the physical subspace which contains derivatives of $\phi$ and vertex operators~\cite{nakanishi1976,morchiopierottistrocchi1990}.
\item Restricting the algebra $\mathfrak{A}_0$ to be generated by derivatives of $\phi$ and vertex operators, which we will do in the following. This is of course related to the Krein space construction, but works without explicitly introducing the unphysical subspace.
\end{itemize}
All three constructions should be equivalent for our purposes, since the interacting expectation values of $\op_{\mu\nu}$ and $T_{\mu\nu}$, defined again using the Gell-Mann--Low formula~\eqref{eq:thm_mink_conv_series} only involves derivatives of $\phi$ and vertex operators, which have a well-defined massless limit.

Instead of the vacuum state, we take moreover a general quasi-free Hadamard state, which for our purposes can be defined as the quasi-free state $\omega^{\Lambda,\epsilon}$ with two-point function
\begin{equation}
\label{eq:twopf}
G^+(x,y) = \frac{\mathi}{4 \pi} \ln\left[ \Lambda^2 ( \epsilon + \mathi u ) ( \epsilon + \mathi v ) \right] - \mathi W(x,y) \eqend{,}
\end{equation}
where $W(x,y)$ is a smooth and symmetric bisolution of the massless Klein--Gordon equation $\partial_x^2 W(x,y) = \partial_y^2 W(x,y) = 0$, and we introduced the light cone coordinates
\begin{equation}
\label{eq:lightconecoords}
u = u(x,y) \equiv (x^0-y^0) - (x^1-y^1) \eqend{,} \quad v = v(x,y) \equiv (x^0-y^0) + (x^1-y^1) \eqend{.}
\end{equation}
As in the Euclidean case, $\Lambda$ is an IR cutoff which we ultimately take to vanish, and we keep $\epsilon > 0$ as a UV cutoff. That is, the physical two-point function is obtained as the distributional boundary value (in the limit $\epsilon \to 0$) from the function~\eqref{eq:twopf} which is analytic for all $\epsilon > 0$. The two-point function~\eqref{eq:twopf} can be decomposed as
\begin{equation}
\label{eq:twopf_hadamard}
G^+(x,y) = H^+(x,y) + \frac{\mathi}{2 \pi} \ln\left( \frac{\Lambda}{\mu} \right) - \mathi W(x,y) \eqend{,}
\end{equation}
where $\mu$ is an arbitrary fixed scale (which for a massive theory could be a multiple of $m^2$) and
\begin{equation}
\label{eq:hadamard}
H^+(x,y) \equiv \frac{\mathi}{4 \pi} \ln\left[ \mu^2 ( \epsilon + \mathi u ) ( \epsilon + \mathi v ) \right]
\end{equation}
is the Hadamard parametrix containing the singular part of the two-point function, which is the same for all Hadamard states. The Feynman propagator and parametrix are the time-ordered versions of~\eqref{eq:twopf} and~\eqref{eq:hadamard}, and read
\begin{equations}[eq:feynman]
\begin{split}
G^\text{F}(x,y) &\equiv \Theta(x^0-y^0) G^+(x,y) + \Theta(y^0-x^0) G^+(y,x) \\
&= H^\text{F}(x,y) + \frac{\mathi}{2 \pi} \ln\left( \frac{\Lambda}{\mu} \right) - \mathi W(x,y) \eqend{,}
\end{split} \\
\begin{split}
H^\text{F}(x,y) &\equiv \Theta(x^0-y^0) H^+(x,y) + \Theta(y^0-x^0) H^+(y,x) \\
&= \frac{\mathi}{4 \pi} \ln\left[ \mu^2 ( - u v + \mathi \epsilon \abs{u + v} + \epsilon^2 ) \right] \eqend{,}
\end{split}
\end{equations}
where we used that $W(x,y)$ is symmetric. We note that the time-ordered Hadamard parametrix $H^\text{F}$ is a fundamental solution of the massless Klein--Gordon equation:
\begin{splitequation}
\label{eq:hf_fundamental_solution}
\partial^2 H^\text{F}(x,y) &= - 4 \partial_u \partial_v H^\text{F}(x,y) = \frac{2 \epsilon}{\pi ( u^2 + \epsilon^2 )} \delta(u+v) \\
&\to 2 \delta(u) \delta(v) = \delta(x-y) \quad (\epsilon \to 0) \eqend{,}
\end{splitequation}
where the second equality is a straightforward computation using the well-known results $\abs{x}' = \sgn(x)$, $\Theta'(x) = \delta(x)$ and $\sgn'(x) = 2 \delta(x)$, and the limit
\begin{equation}
\label{eq:sokhotski_plemelj}
\lim_{\epsilon \to 0} \frac{\epsilon}{x^2 + \epsilon^2} = \Im \lim_{\epsilon \to 0} \frac{1}{x - \mathi \epsilon} = \pi \delta(x)
\end{equation}
is the Sokhotski--Plemelj theorem.

The algebra $\mathfrak{A}_0$ can be completed in the H{\"o}rmander (weak) topology~\cite{hoermander} to the \emph{free-field algebra} $\overline{\mathfrak{A}}_0$. On a practical level, this completion is obtained by introducing normal-ordered products $\mathcal{N}$ in the usual way. Normal ordering can be performed with respect to the full two-point function $G^+$ or the Hadamard parametrix $H^+$; the latter choice has the advantage that the normal-ordered products transform covariantly under diffeomorphisms, or Lorentz transformations in Minkowski space. In $\overline{\mathfrak{A}}_0$, it is then possible to take limits such as $f(x,y) \to g(x) \delta(x-y)$ with $f \in \mathcal{S}(\mathbb{R}^2 \times \mathbb{R}^2)$ and $g \istest$ to obtain the normal-ordered Wick power $\mathcal{N}\left[ \phi^2(g) \right]$, which is a well-defined element of $\overline{\mathfrak{A}}_0$ with finite expectation values in any Hadamard state. Analogously to equation~\eqref{eq:normal_ordering_exponential_general}, we have for exponentials
\begin{equation}
\label{eq:normal_ordering_exponential_mink}
\mathcal{N}_G\left[ \mathe^{\mathi (J,\phi)} \right] = \mathe^{\frac{\mathi}{2} (J, G^+ \ast J)} \mathe_\star^{\mathi (J,\phi)}
\end{equation}
with $J \istest$ (again understood as a formal power series in $J$) and the analogous formula with $G$ replaced by $H$, from which the normal-ordering of monomials is obtained by functional differentiation with respect to $J$, assuming that the right-hand side is a well-defined element of $\mathfrak{A}_0$. In the massless case, that means that we have to take $\int J(x) \total^2 x = 0$, which ensures that it can be written as $J(x) = \partial_\mu J^\mu(x)$ (see, e.g.,~\cite[App.~B and~C]{froebhackhiguchi2017}) such that $(J,\phi) = - (J^\mu,\partial_\mu \phi)$ and only derivatives of $\phi$ enter. The expectation value of a normal-ordered quantity is given by
\begin{equation}
\label{eq:expectation_normal_ordered_mink}
\omega^{\Lambda,\epsilon} \left( \mathcal{N}_G\left[ (J_1,\phi) \cdots (J_n,\phi) \right] \right) = \delta_{n,0}
\end{equation}
analogous to equation~\eqref{eq:expectation_normal_ordered}, and for a change in normal ordering we have
\begin{equation}
\label{eq:change_normal_ordering_mink}
\mathcal{N}_G\left[ \mathe^{\mathi (J,\phi)} \right] = \mathe^{\frac{\mathi}{2} (J, (G^+-H^+) \ast J)} \, \mathcal{N}_H\left[ \mathe^{\mathi (J,\phi)} \right] \eqend{,}
\end{equation}
both formulas with the above restriction on $J$ and the second one understood as a formal power series in $J$. The vertex operators $V_\alpha(x)$ are formally given by the exponentials $\mathe^{\mathi \alpha \phi(x)}$, but those cannot be defined if we only consider derivatives of $\phi$. Instead, we proceed as follows: we add Hadamard-normal-ordered operators $\mathcal{N}_H\left[ V_\alpha(f) \right]$ to the generators of the algebra $\mathfrak{A}_0$, which should behave like the formal expression $\int f(x) \, \mathcal{N}_H\left[ \mathe^{\mathi \alpha \phi(x)} \right] \total^2 x$. Setting formally $J(y) = \alpha \delta(x-y)$ in equation~\eqref{eq:change_normal_ordering_exponential}, we see that we need to postulate the change of normal ordering
\begin{equation}
\label{eq:change_normal_ordering_vertex}
\mathcal{N}_G\left[ V_\alpha(x) \right] = \mathe^{\frac{\mathi}{2} \alpha^2 (G^+-H^+)(x,x)} \, \mathcal{N}_H\left[ V_\alpha(x) \right] = \left( \frac{\Lambda}{\mu} \right)^{- \frac{\alpha^2}{4 \pi}} \mathe^{\frac{\alpha^2}{2} W(x,x)} \, \mathcal{N}_H\left[ V_\alpha(x) \right]
\end{equation}
for the integral kernel, where we used the decomposition of the two-point function~\eqref{eq:twopf_hadamard}, or
\begin{equation}
\label{eq:change_normal_ordering_vertex_smeared}
\mathcal{N}_G\left[ V_\alpha(f) \right] = \mathcal{N}_H\left[ V_\alpha(f_{\alpha,W}) \right] \quad\text{with}\quad f_{\alpha,W}(x) \equiv \left( \frac{\Lambda}{\mu} \right)^{- \frac{\alpha^2}{4 \pi}} \mathe^{\frac{\alpha^2}{2} W(x,x)} f(x) \eqend{.}
\end{equation}
Furthermore, for their expectation value we set
\begin{equation}
\label{eq:expectation_vertex}
\omega^{\Lambda,\epsilon}\left( \mathcal{N}_G\left[ V_\alpha(f) \right] \right) = 1 \eqend{.}
\end{equation}

To derive the required bounds on correlation functions, we will later on need to compute expectation values of $\star$-products of normal-ordered vertex operators with other elements of $\mathfrak{A}_0$. For this, it is most useful to introduce further generators of the general form $\mathcal{N}_H\left[ V_{\alpha_1}(f_1) \cdots V_{\alpha_n}(f_n) \, \phi(g_1) \cdots \phi(g_k) \right]$ with $f_i, g_j \istest$ and $\int g_j(x) \total^2 x = 0$ for all $i \in \{1,\ldots,n\}, j \in \{1,\ldots,k\}$, i.e., normal-ordered products of multiple vertex operators and derivatives of $\phi$, and analogous expressions with $G$ instead of $H$. We postulate that
\begin{equation}
\label{eq:expectation_vertex_multiple}
\omega^{\Lambda,\epsilon}\left( \mathcal{N}_G\left[ V_{\alpha_1}(f_1) \cdots V_{\alpha_n}(f_n) \, \phi(g_1) \cdots \phi(g_k) \right] \right) = \delta_{k,0} \eqend{,}
\end{equation}
consistent with the interpretation of the vertex operators as exponentials of the field. Furthermore, we need relations between the $\star$-product of normal-ordered vertex operators and these generators, which we derive formally and then postulate; i.e., we quotient the free algebra by these relations. For this, we note that equation~\eqref{eq:normal_ordering_exponential_mink} implies (formally)
\begin{splitequation}
\label{eq:normal_ordering_starproduct}
&\mathcal{N}_G\left[ \mathe^{\mathi (J,\phi)} \right] \star \mathcal{N}_G\left[ \mathe^{\mathi (K,\phi)} \right] = \mathe^{\frac{\mathi}{2} (J, G^+ \ast J) + \frac{\mathi}{2} (K, G^+ \ast K)} \mathe_\star^{\mathi (J,\phi)} \star \mathe_\star^{\mathi (K,\phi)} \\
&\quad= \exp\left[ \frac{\mathi}{2} \Big( (J, G^+ \ast J) + (K, G^+ \ast K) - (J, \Delta \ast K) \Big) \right] \mathe_\star^{\mathi (J+K,\phi)} \\
&\quad= \exp\left[ - \mathi (J, G^+ \ast K) \right] \, \mathcal{N}_G\left[ \mathe^{\mathi (J+K,\phi)} \right]
\end{splitequation}
using the Baker--Campbell--Hausdorff formula~\cite{achillesbonfiglioli2012} and the commutation relations~\eqref{eq:commutation} and~\eqref{eq:commutation_g_delta}, as well as the analogous formula with $G$ replaced by $H$. Taking functional derivatives with respect to $J$ or $K$ and taking into account the condition $\int J(x) \total^2 x = 0$, we obtain relations for terms involving derivatives of $\phi$, while setting $J(x) = \alpha \delta(x-y)$ and interpreting $\mathe^{\mathi \alpha \phi(y)}$ as vertex operator $V_\alpha(y)$, we obtain the relations for them. As explained above, this derivation is not rigorous but provides us with the relations that we then postulate for $\mathfrak{A}_0$ (and its completion). For example, for two vertex operators we quotient $\mathfrak{A}_0$ by the relation
\begin{equation}
\mathcal{N}_H\left[ V_\alpha(y) \right] \star \mathcal{N}_H\left[ V_\beta(z) \right] = \exp\left[ - \mathi \alpha \beta H^+(y,z) \right] \, \mathcal{N}_H\left[ V_\alpha(y) V_\beta(z) \right] \eqend{,}
\end{equation}
which agrees with~\eqref{eq:normal_ordering_starproduct} (with $G \to H$) when setting $J(x) = \alpha \delta(x-y)$ and $K(x) = \beta \delta(x-z)$. Taking into account the analogous relation for $G$, the change of normal-ordering~\eqref{eq:change_normal_ordering_vertex} for single vertex operators, and the decomposition of the two-point function~\eqref{eq:twopf_hadamard}, we also obtain
\begin{equation}
\mathcal{N}_G\left[ V_\alpha(y) V_\beta(z) \right] = \left( \frac{\Lambda}{\mu} \right)^{- \frac{(\alpha+\beta)^2}{4 \pi}} \mathe^{\frac{\alpha^2}{2} W(y,y) - \alpha \beta W(y,z) - \frac{\beta^2}{2} W(z,z)} \, \mathcal{N}_H\left[ V_\alpha(y) V_\beta(z) \right] \eqend{,}
\end{equation}
the generalisation of the relation~\eqref{eq:change_normal_ordering_vertex}. We refrain from giving explicit expressions for all the resulting relations, since they are long and can in any case be easily obtained from equation~\eqref{eq:normal_ordering_starproduct}.

A central object in pAQFT are time-ordered products $\mathcal{T}$, which can be defined as multilinear maps from classical expressions into $\mathfrak{A}_0$. They are constructed inductively, using that the ones with single entries are equal to the Hadamard-normal-ordered products
\begin{equation}
\label{eq:timeordered_single}
\mathcal{T}\left[ \op(x) \right] = \mathcal{N}_H\left[ \op(x) \right] \eqend{,}
\end{equation}
where $\op(x)$ is a local functional, while the higher ones are defined outside the diagonal by causal factorisation:
\begin{splitequation}
\label{eq:timeordered_causal}
&\mathcal{T}\left[ \op_1(x_1) \otimes \cdots \otimes \op_n(x_n) \right] \\
&\quad= \mathcal{T}\left[ \op_1(x_1) \otimes \cdots \otimes \op_k(x_k) \right] \star \mathcal{T}\left[ \op_{k+1}(x_{k+1}) \otimes \cdots \otimes \op_n(x_n) \right]
\end{splitequation}
if none of the $x_1,\ldots,x_k$ lie in the past light cone of any of the $x_{k+1},\ldots,x_n$. The extension to the total diagonal $x_1 = \cdots = x_n$ corresponds to renormalisation. We then want to prove the analogue of Lemma~\ref{lemma_euclid_exponential_phi} (for $n = 2$), which is
\begin{lemma}
\label{lemma_mink_exponential_phi}
The expectation value of a time-ordered product of vertex operators and a bilinear operator is given by
\begin{splitequation}
\label{eq:correlator_exponential_2phi}
&\omega^{\Lambda,\epsilon}\left( \mathcal{T}\left[ \bigotimes_{j=1}^n V_{\alpha_j}(x_j) \otimes \left( \partial_{\vec{\mu}} \phi \, \partial_{\vec{\nu}} \phi \right)(z) \right] \right) = \exp\left[ - \mathi \sum_{1 \leq i < j \leq n} \alpha_i \alpha_j H^\text{F}(x_i,x_j) \right] \\
&\qquad\times \left[ \left. \partial^z_{\vec{\mu}} \partial^{z'}_{\vec{\nu}} W(z,z') \right\rvert_{z' = z} + \sum_{i,j=1}^n \alpha_i \alpha_j \partial_{\vec{\mu}} G^\text{F}(z,x_i) \partial_{\vec{\nu}} G^\text{F}(z,x_j) \right] \\
&\qquad\times \exp\left[ - \frac{1}{2} \sum_{i,j=1}^n \alpha_i \alpha_j W(x_i,x_j) \right] \left( \frac{\Lambda}{\mu} \right)^\frac{\left( \sum_{k=1}^n \alpha_k \right)^2}{4 \pi} \eqend{,}
\end{splitequation}
where $\partial_{\vec{\mu}} \equiv \partial_{\mu_1} \cdots \partial_{\mu_k}$ and $\partial_{\vec{\nu}} \equiv \partial_{\nu_1} \cdots \partial_{\nu_\ell}$ with $k,\ell \geq 1$, and the derivatives on the Feynman propagator $G^\text{F}$ act on the first argument $z$.
\end{lemma}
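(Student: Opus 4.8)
The plan is to reduce everything to a single Gaussian generating functional, in close analogy with the shifted-measure computation~\eqref{eq:expectation_exponential_shift} behind Lemma~\ref{lemma_euclid_exponential_phi}, the role of the Euclidean propagator now being played by the time-ordered Hadamard parametrix. The starting point is the time-ordered exponential formula: combining causal factorisation~\eqref{eq:timeordered_causal} with the star-product identity~\eqref{eq:normal_ordering_starproduct} (with $G$ replaced by $H$), one obtains for any collection of sources
\begin{equation}
\mathcal{T}\left[ \bigotimes_a \mathcal{N}_H\left[ \mathe^{\mathi (J_a,\phi)} \right] \right] = \exp\left[ - \mathi \sum_{a < b} \left( J_a, H^\text{F} \ast J_b \right) \right] \mathcal{N}_H\left[ \mathe^{\mathi \left( \sum_a J_a, \phi \right)} \right] \eqend{,}
\end{equation}
where the Feynman parametrix $H^\text{F}$~\eqref{eq:feynman} appears precisely because it interpolates between the two causal orderings of $H^+$. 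I would take this as the Minkowskian substitute for the single shifted exponential used in the Euclidean proof.

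Next I would encode the bilinear $\partial_{\vec{\mu}} \phi \, \partial_{\vec{\nu}} \phi$ at $z$ as one additional Hadamard-normal-ordered slot carrying two sources, $\mathcal{N}_H[ \mathe^{\mathi \lambda \phi(z') + \mathi \kappa \phi(z'')} ]$, while the $n$ vertex operators occupy their own slots with $J_j = \alpha_j \delta^2(\cdot - x_j)$. The decisive point is that both factors of the bilinear live in the \emph{same} normal-ordered entry, so that by~\eqref{eq:timeordered_single} their internal pairing is the non-time-ordered $H^+$ rather than the Feynman contraction $H^\text{F}(z',z'')$; this is what ultimately renders the coincidence limit finite. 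Applying the displayed formula with $J_{n+1} = \lambda \delta^2(\cdot - z') + \kappa \delta^2(\cdot - z'')$ treated as a single source (so no $H^\text{F}(z',z'')$ cross term is generated), then converting $\mathcal{N}_H$ to $\mathcal{N}_G$ via~\eqref{eq:change_normal_ordering_mink} and using $\omega^{\Lambda,\epsilon}(\mathcal{N}_G[\mathe^{\mathi (K,\phi)}]) = 1$ from~\eqref{eq:expectation_normal_ordered_mink}, the expectation value becomes $\mathe^{S}$ with $S$ manifestly quadratic in $(\lambda,\kappa)$. Here the internal $H^+$-pairing of the composite combines with the state's $G^+$ into the \emph{smooth} difference $G^+ - H^+ = \frac{\mathi}{2\pi} \ln(\Lambda/\mu) - \mathi W$ of~\eqref{eq:twopf_hadamard}, which is the mechanism by which the UV-singular self-contraction cancels; at $\lambda = \kappa = 0$ this same difference produces the IR factor $(\Lambda/\mu)^{(\sum_k \alpha_k)^2 / 4\pi}$ and the exponential $\mathe^{-\frac{1}{2} \sum_{i,j} \alpha_i \alpha_j W(x_i,x_j)}$ of the claimed formula, while the vertex-vertex $H^\text{F}$ terms give the remaining prefactor.

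The bilinear is then extracted by applying $(-\mathi \partial_\lambda)(-\mathi \partial_\kappa)$ at $\lambda = \kappa = 0$, followed by $\partial^{z'}_{\vec{\mu}} \partial^{z''}_{\vec{\nu}}$ and the coincidence limit $z', z'' \to z$. Since $S$ is quadratic, this produces $[ (\partial_\lambda S)(\partial_\kappa S) + \partial_\lambda \partial_\kappa S ]$ times $\mathe^{S}|_0$. The key simplification is that in the mixed vertex-composite terms the combination $- \mathi H^\text{F} + \frac{1}{2\pi} \ln(\Lambda/\mu) - W$ reassembles into $- \mathi G^\text{F}$ by~\eqref{eq:feynman}, so that $\partial_\lambda S|_0 = - \mathi \sum_i \alpha_i G^\text{F}(z',x_i)$ and likewise for $\kappa$; differentiating and taking the limit then yields the term $\sum_{i,j} \alpha_i \alpha_j \, \partial_{\vec{\mu}} G^\text{F}(z,x_i) \, \partial_{\vec{\nu}} G^\text{F}(z,x_j)$. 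The internal self-contraction $\partial_\lambda \partial_\kappa S|_0 = \frac{1}{2\pi} \ln(\Lambda/\mu) - W(z',z'')$ contributes, after the overall factor $(-\mathi)^2$, exactly the term $\lim_{z' \to z} \partial^z_{\vec{\mu}} \partial^{z'}_{\vec{\nu}} W(z,z')$, the constant $\ln(\Lambda/\mu)$ being annihilated by the derivatives since $k, \ell \geq 1$; this also confirms that the undifferentiated $\phi$ used as an auxiliary generating device causes no infrared problem.

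The main obstacle I anticipate is the bookkeeping of the composite's self-contraction. One must verify carefully that bundling the two fields into a single $\mathcal{N}_H$ slot replaces the would-be singular Feynman contraction by the pairing governed by $G^+ - H^+$, which is smooth by the Hadamard property of $W$, so that the coincidence limit $\lim_{z' \to z} \partial^z_{\vec{\mu}} \partial^{z'}_{\vec{\nu}} W(z,z')$ exists; and one must track the factors of $\mathi$ so that the reassembly into $- \mathi G^\text{F}$ and into $+W$ carries the correct signs. The residual singularities as $z \to x_i$ in $\partial_{\vec{\mu}} G^\text{F}(z,x_i)$ are not an obstruction to the identity itself --- they are the source of the logarithmic counterterm and are handled by the renormalisation of Theorem~\ref{thm_mink_renorm}.
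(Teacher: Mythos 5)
Your argument is correct, and the core machinery is the same as the paper's (causal factorisation~\eqref{eq:timeordered_causal}, the star product of normal-ordered exponentials~\eqref{eq:normal_ordering_starproduct}, the change of normal ordering~\eqref{eq:change_normal_ordering_mink}, and the smoothness of $G^+-H^+$ from~\eqref{eq:twopf_hadamard}), but you organise it genuinely differently. The paper proves the vertex-operator formula~\eqref{eq:timeordered_exponential} by induction and then runs a \emph{second}, separate induction for the insertion of the bilinear, equation~\eqref{eq:timeordered_exponential_2phi}, which forces a case distinction on whether $z$ lies in the earlier or later causal group and requires the explicit star-product identities~\eqref{eq:timeordered_exponential_phi_star} and~\eqref{eq:timeordered_exponential_phi2_star}. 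You instead prove (or rather assert) a single master formula for time-ordered products of Hadamard-normal-ordered exponentials with arbitrary sources and extract the bilinear by differentiating auxiliary parameters $\lambda,\kappa$ in one extra slot; your observation that bundling both fields into the same $\mathcal{N}_H$ slot replaces the would-be Feynman self-contraction $H^\text{F}(z',z'')$ by the smooth pairing $G^+-H^+$ is exactly the mechanism behind the paper's $n=0$ base case, and your sign bookkeeping (the reassembly $-\mathi H^\text{F}+(G^+-H^+)=-\mathi G^\text{F}$, the $(-\mathi)^2$ producing $+\sum\alpha_i\alpha_j\partial_{\vec{\mu}}G^\text{F}\partial_{\vec{\nu}}G^\text{F}$ and $+\lim\partial_{\vec{\mu}}\partial_{\vec{\nu}}W$, and the prefactors $(\Lambda/\mu)^{(\sum_k\alpha_k)^2/4\pi}\,\mathe^{-\frac{1}{2}\sum\alpha_i\alpha_jW}$) checks out against~\eqref{eq:correlator_exponential_2phi}. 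What your route buys is economy and immediate generality to composites with more factors or derivatives. What it leaves implicit, and what you would need to supply to make this a complete proof, is (i) the inductive proof of the master exponential formula itself, including the extension to coinciding points by continuity for $\epsilon>0$ exactly as the paper argues after~\eqref{eq:timeordered_exponential_factor2}, and (ii) a sentence justifying that the point-split slot $\mathcal{N}_H[\mathe^{\mathi\lambda\phi(z')+\mathi\kappa\phi(z'')}]$, differentiated and evaluated at $z',z''\to z$, reproduces the time-ordered product with the composite $(\partial_{\vec{\mu}}\phi\,\partial_{\vec{\nu}}\phi)(z)$ as defined via~\eqref{eq:timeordered_single} --- again a smoothness statement at $\epsilon>0$, but not automatic.
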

\begin{proof}
The result~\eqref{eq:correlator_exponential_2phi} follows immediately from the expression
\begin{splitequation}
\label{eq:timeordered_exponential_2phi}
&\mathcal{T}\left[ \bigotimes_{j=1}^n V_{\alpha_j}(x_j) \otimes \left( \partial_{\vec{\mu}} \phi \, \partial_{\vec{\nu}} \phi \right)(z) \right] = \exp\left[ - \mathi \sum_{1 \leq i < j \leq n} \alpha_i \alpha_j H^\text{F}(x_i,x_j) \right] \\
&\quad\times \Bigg[ \sum_{i,j=1}^n \alpha_i \alpha_j \partial_{\vec{\mu}} G^\text{F}(z,x_i) \partial_{\vec{\nu}} G^\text{F}(z,x_j) \, \mathcal{N}_G\left[ \prod_{j=1}^n V_{\alpha_j}(x_j) \right] \\
&\qquad- \sum_{i=1}^n \alpha_i \partial_{\vec{\mu}} G^\text{F}(z,x_i) \, \mathcal{N}_G\left[ \partial_{\vec{\nu}} \phi(z) \prod_{j=1}^n V_{\alpha_j}(x_j) \right] \\
&\qquad- \sum_{i=1}^n \alpha_i \partial_{\vec{\nu}} G^\text{F}(z,x_i) \, \mathcal{N}_G\left[ \partial_{\vec{\mu}} \phi(z) \prod_{j=1}^n V_{\alpha_j}(x_j) \right] \\
&\qquad+ \mathcal{N}_G\left[ \left( \partial_{\vec{\mu}} \phi \, \partial_{\vec{\nu}} \phi \right)(z) \prod_{j=1}^n V_{\alpha_j}(x_j) \right] + \left. \partial^z_{\vec{\mu}} \partial^{z'}_{\vec{\nu}} W(z,z') \right\rvert_{z' = z} \, \mathcal{N}_G\left[ \prod_{j=1}^n V_{\alpha_j}(x_j) \right] \Bigg] \\
&\qquad\times \exp\left[ - \frac{1}{2} \sum_{i,j=1}^n \alpha_i \alpha_j W(x_i,x_j) \right] \left( \frac{\Lambda}{\mu} \right)^\frac{\left( \sum_{k=1}^n \alpha_k \right)^2}{4 \pi}
\end{splitequation}
for the time-ordered product, using that the expectation value of a normal-ordered expression (with respect to $G$) involving powers of $\phi$ vanishes, while it is equal to $1$ if only vertex operators appear. To prove equation~\eqref{eq:timeordered_exponential_2phi}, we first have to show the related result
\begin{splitequation}
\label{eq:timeordered_exponential}
\mathcal{T}\left[ \bigotimes_{j=1}^n V_{\alpha_j}(x_j) \right] &= \exp\left[ - \mathi \sum_{1 \leq i < j \leq n} \alpha_i \alpha_j H^\text{F}(x_i,x_j) \right] \mathcal{N}_G\left[ \prod_{j=1}^n V_{\alpha_j}(x_j) \right] \\
&\quad\times \exp\left[ - \frac{1}{2} \sum_{i,j=1}^n \alpha_i \alpha_j W(x_i,x_j) \right] \left( \frac{\Lambda}{\mu} \right)^\frac{\left( \sum_{k=1}^n \alpha_k \right)^2}{4 \pi}
\end{splitequation}
by induction in $n$. Note that both sides are symmetric under a permutation of the $\alpha_j$ and $x_j$, the right-hand side because the Feynman Hadamard parametrix $H^\text{F}$~\eqref{eq:feynman} is symmetric in its arguments. For $n = 1$, we compute
\begin{splitequation}
\mathcal{T}\left[ V_\alpha(x) \right] = \mathcal{N}_H\left[ V_\alpha(x) \right] = \left( \frac{\Lambda}{\mu} \right)^\frac{\alpha^2}{4 \pi} \mathe^{- \frac{\alpha^2}{2} W(x,x)} \, \mathcal{N}_G\left[ V_\alpha(x) \right]
\end{splitequation}
using equation~\eqref{eq:timeordered_single} and the change of normal-ordering for vertex operators~\eqref{eq:change_normal_ordering_vertex}, which is the correct result. Assume thus that equation~\eqref{eq:timeordered_exponential} is fulfilled for all $m \leq n$, and consider the time-ordered products with $n+1$ vertex operators. We start with the case that not all points coincide, and afterwards extend the result to the diagonal. If not all points coincide, $k$ of them are not in the past light cone of any of the other $n+1-k$ for some $1 \leq k \leq n$, and by relabeling we may assume that these are the first. Using the causal factorisation~\eqref{eq:timeordered_causal}, we thus obtain
\begin{splitequation}
\label{eq:timeordered_exponential_factor1}
&\mathcal{T}\left[ \bigotimes_{j=1}^{n+1} V_{\alpha_j}(x_j) \right] = \mathcal{T}\left[ \bigotimes_{j=1}^k V_{\alpha_j}(x_j) \right] \star \mathcal{T}\left[ \bigotimes_{j=k+1}^{n+1} V_{\alpha_j}(x_j) \right] \\
&\quad= \exp\left[ - \mathi \sum_{1 \leq i < j \leq k} \alpha_i \alpha_j H^\text{F}(x_i,x_j) - \mathi \sum_{k+1 \leq i < j \leq n+1} \alpha_i \alpha_j H^\text{F}(x_i,x_j) \right] \\
&\qquad\times \mathcal{N}_G\left[ \prod_{i=1}^k V_{\alpha_i}(x_i) \right] \star \mathcal{N}_G\left[ \prod_{j=k+1}^{n+1} V_{\alpha_j}(x_j) \right] \left( \frac{\Lambda}{\mu} \right)^\frac{\left( \sum_{i=1}^k \alpha_i \right)^2 + \left( \sum_{j=k+1}^{n+1} \alpha_j \right)^2}{4 \pi} \\
&\qquad\times \exp\left[ - \frac{1}{2} \sum_{i,j=1}^k \alpha_i \alpha_j W(x_i,x_j) - \frac{1}{2} \sum_{i,j=k+1}^{n+1} \alpha_i \alpha_j W(x_i,x_j) \right] \eqend{,}
\end{splitequation}
using the induction hypothesis in the second equality. Employing equation~\eqref{eq:normal_ordering_starproduct} for the star product of two normal-ordered expressions (with the exponentials interpreted as vertex operators as explained there) and the decomposition of the two-point function~\eqref{eq:twopf_hadamard}, we obtain
\begin{splitequation}
\label{eq:timeordered_exponential_factor2}
&\mathcal{N}_G\left[ \prod_{i=1}^k V_{\alpha_i}(x_i) \right] \star \mathcal{N}_G\left[ \prod_{j=k+1}^{n+1} V_{\alpha_j}(x_j) \right] \\
&\quad= \exp\left[ - \mathi \sum_{i=1}^k \sum_{j=k+1}^{n+1} \alpha_i \alpha_j G^+(x_i,x_j) \right] \, \mathcal{N}_G\left[ \prod_{j=1}^{n+1} V_{\alpha_j}(x_j) \right] \\
&\quad= \exp\left[ - \mathi \sum_{i=1}^k \sum_{j=k+1}^{n+1} \alpha_i \alpha_j H^+(x_i,x_j) \right] \mathcal{N}_G\left[ \prod_{j=1}^{n+1} V_{\alpha_j}(x_j) \right] \\
&\qquad\quad\times \exp\left[ - \sum_{i=1}^k \sum_{j=k+1}^{n+1} \alpha_i \alpha_j W(x_i,x_j) \right] \left( \frac{\Lambda}{\mu} \right)^\frac{\sum_{i=1}^k \sum_{j=k+1}^{n+1} \alpha_i \alpha_j}{2 \pi} \eqend{.}
\end{splitequation}
Using that $H^+(x_i,x_j) = H^\text{F}(x_i,x_j)$ if $x_i$ does not lie in the past light cone of $x_j$ as we have assumed, and inserting the result~\eqref{eq:timeordered_exponential_factor2} into equation~\eqref{eq:timeordered_exponential_factor1}, one easily sees that the various terms combine into the required form~\eqref{eq:timeordered_exponential}, which therefore holds at least outside the diagonal. To extend the result to the diagonal, i.e., to the case where all points coincide, we note that since the right-hand side is a smooth function of the $x_i$ if $\epsilon > 0$, it simply extends by continuity. This is even true in the limit of vanishing UV cutoff $\epsilon$ if $\alpha_i^2 < 4 \pi$ for all $i$, since then the scaling degree of $\exp\left[ - \mathi \sum_{1 \leq i < j \leq n+1} \alpha_i \alpha_j H^\text{F}(x_i,x_j) \right]$ is less than $2 (m-1) = (m-1) \dim \mathbb{R}^2$ on each subdiagonal where $m$ points coincide, such that the singularities that arise there are integrable.

We can now prove equation~\eqref{eq:timeordered_exponential_2phi}, which we also do by induction in $n$. For $n = 0$, we compute
\begin{splitequation}
\mathcal{T}\left[ \left( \partial_{\vec{\mu}} \phi \, \partial_{\vec{\nu}} \phi \right)(z) \right] &= \mathcal{N}_H\left[ \left( \partial_{\vec{\mu}} \phi \, \partial_{\vec{\nu}} \phi \right)(z) \right] \\
&= \mathcal{N}_G\left[ \left( \partial_{\vec{\mu}} \phi \, \partial_{\vec{\nu}} \phi \right)(z) \right] + \left. \partial^z_{\vec{\mu}} \partial^{z'}_{\vec{\nu}} W(z,z') \right\rvert_{z' = z}
\end{splitequation}
using equation~\eqref{eq:timeordered_single}, the change of normal-ordering~\eqref{eq:change_normal_ordering_mink} at second order in $J$ [taking once $J(x) = \partial_{\vec{\mu}} \delta(z-x)$ and once $J(x) = \partial_{\vec{\nu}} \delta(z-x)$], as well as the decomposition of the two-point function~\eqref{eq:twopf_hadamard}. Since this agrees with equation~\eqref{eq:timeordered_exponential_2phi} for $n = 0$, the base case is proven. Assume thus that equation~\eqref{eq:timeordered_exponential_2phi} is fulfilled for all $m \leq n$, and consider the time-ordered products with $n+1$ vertex operators. If not all points coincide, again $k$ of them will not lie in the past light cone of the other $n+2-k$ for some $1 \leq k \leq n+1$. We thus can again use causal factorisation~\eqref{eq:timeordered_causal}, but now have to distinguish two cases, namely whether the distinguished point $z$ is among the group of $k$ or among the group of the $n+2-k$ points. In the first case, we have
\begin{splitequation}
&\mathcal{T}\left[ \bigotimes_{j=1}^{n+1} V_{\alpha_j}(x_j) \otimes \left( \partial_{\vec{\mu}} \phi \, \partial_{\vec{\nu}} \phi \right)(z) \right] \\
&\quad= \mathcal{T}\left[ \bigotimes_{j=1}^{k-1} V_{\alpha_j}(x_j) \otimes \left( \partial_{\vec{\mu}} \phi \, \partial_{\vec{\nu}} \phi \right)(z) \right] \star \mathcal{T}\left[ \bigotimes_{j=k}^{n+1} V_{\alpha_j}(x_j) \right] \eqend{,}
\end{splitequation}
and inserting the induction hypothesis and the previous result~\eqref{eq:timeordered_exponential} on the right-hand side, we obtain a number of star products of normal-ordered expressions, which are too long to display explicitly. To evaluate them, we need in addition to equation~\eqref{eq:timeordered_exponential_factor2} also
\begin{splitequation}
\label{eq:timeordered_exponential_phi_star}
&\mathcal{N}_G\left[ \partial_{\vec{\mu}} \phi(z) \prod_{j=1}^{k-1} V_{\alpha_j}(x_j) \right] \star \mathcal{N}_G\left[ \prod_{j=k}^{n+1} V_{\alpha_j}(x_j) \right] \\
&\quad= \exp\left[ - \mathi \sum_{i=1}^{k-1} \sum_{j=k}^{n+1} \alpha_i \alpha_j G^+(x_i,x_j) \right] \, \mathcal{N}_G\left[ \partial_{\vec{\mu}} \phi(z) \prod_{j=1}^{n+1} V_{\alpha_j}(x_j) \right] \\
&\qquad- \sum_{j=k}^{n+1} \alpha_j \partial_{\vec{\mu}} G^+(z,x_j) \exp\left[ - \mathi \sum_{i=1}^{k-1} \sum_{j=k}^{n+1} \alpha_i \alpha_j G^+(x_i,x_j) \right] \, \mathcal{N}_G\left[ \prod_{j=1}^{n+1} V_{\alpha_j}(x_j) \right]
\end{splitequation}
and
\begin{splitequation}
\label{eq:timeordered_exponential_phi2_star}
&\mathcal{N}_G\left[ \left( \partial_{\vec{\mu}} \phi \, \partial_{\vec{\nu}} \phi \right)(z) \prod_{j=1}^{k-1} V_{\alpha_j}(x_j) \right] \star \mathcal{N}_G\left[ \prod_{j=k}^{n+1} V_{\alpha_j}(x_j) \right] \\
&\quad= \exp\left[ - \mathi \sum_{i=1}^{k-1} \sum_{j=k}^{n+1} \alpha_i \alpha_j G^+(x_i,x_j) \right] \vast[ \mathcal{N}_G\left[ \left( \partial_{\vec{\mu}} \phi \, \partial_{\vec{\nu}} \phi \right)(z) \prod_{j=1}^{n+1} V_{\alpha_j}(x_j) \right] \\
&\qquad\qquad- 2 \sum_{j=k}^{n+1} \alpha_j \partial_{(\vec{\mu}} G^+(z,x_j) \, \mathcal{N}_G\left[ \partial_{\vec{\nu})} \phi(z) \prod_{j=1}^{n+1} V_{\alpha_j}(x_j) \right] \\
&\qquad\qquad+ \sum_{i,j=k}^{n+1} \alpha_i \alpha_j \partial_{\vec{\mu}} G^+(z,x_i) \partial_{\vec{\nu}} G^+(z,x_j) \, \mathcal{N}_G\left[ \prod_{j=1}^{n+1} V_{\alpha_j}(x_j) \right] \vast] \eqend{,}
\end{splitequation}
which are obtained by first taking one or two functional derivatives of equation~\eqref{eq:normal_ordering_starproduct} with respect to $J$, and then interpreting the exponentials as vertex operators, setting $J(x) = \sum_{j=1}^{k-1} \alpha_j \delta(x-x_j)$ and $K(x) = \sum_{j=k}^{n+1} \alpha_j \delta(x-x_j)$. Using further the decomposition of the two-point function~\eqref{eq:twopf_hadamard} and the fact that for $x_i$ not in the past light cone of $x_j$ we have $H^+(x_i,x_j) = H^\text{F}(x_i,x_j)$, the result~\eqref{eq:timeordered_exponential_2phi} follows in this case.

A similar computation yields equation~\eqref{eq:timeordered_exponential_2phi} also in the second case where the distinguished point $z$ is among the second group of $n+2-k$ points, for which we need the analogues of equations~\eqref{eq:timeordered_exponential_phi_star} and~\eqref{eq:timeordered_exponential_phi2_star} with the factors reversed. We omit the details. We have thus shown that equation~\eqref{eq:timeordered_exponential_2phi} holds at least outside the diagonal, but since the right-hand side is a smooth function of the $x_i$ if $\epsilon > 0$, it extends to the diagonal by continuity. However, this is no longer true for $\epsilon = 0$, and we resolve the renormalisation problem in section~\ref{sec_mink_renorm}.
\end{proof}

\begin{remark*}
While the choice of taking the Hadamard-normal-ordered expressions for time-ordered products with a single entry~\eqref{eq:timeordered_single} may seem strange to someone acquainted with flat-space quantum field theory, it is actually indispensable in curved spacetimes, since otherwise the renormalisation freedom is unacceptably large~\cite{hollandswald2001}. Moreover, in flat space of three or more dimensions the Hadamard parametrix actually coincides with the vacuum two-point function. It is only in two dimensions or for non-vacuum states that the difference becomes relevant, and as we see from equation~\eqref{eq:correlator_exponential_2phi} it is the correct choice to ensure the superselection rule in analogy with the Euclidean case.
\end{remark*}

\subsection{Proof of theorem~\ref{thm_mink_renorm} (Renormalisation)}
\label{sec_mink_renorm}

We begin again with $\op_{\mu\nu} = \partial_\mu \phi \, \partial_\nu \phi$. Using Lemma~\ref{lemma_mink_exponential_phi}, the decomposition of the two-point function and the explicit form of the Hadamard parametrix~\eqref{eq:feynman}, we obtain (with $\sigma_j = \pm 1$)
\begin{splitequation}
\label{eq:mink_renorm_expect}
&\omega^{\Lambda,\epsilon}\left( \mathcal{T}\left[ \bigotimes_{j=1}^n V_{\sigma_j \beta}(x_j) \otimes \op_{\mu\nu}(z) \right] \right) = \prod_{1 \leq i < j \leq n} \left[ \mu^2 ( - u_{ij} v_{ij} + \mathi \epsilon ) \right]^{\sigma_i \sigma_j \frac{\beta^2}{4 \pi}} \\
&\qquad\times \Bigg[ \left. \partial^z_\mu \partial^{z'}_\nu W(z,z') \right\rvert_{z' = z} - \beta^2 \sum_{i,j=1}^n \sigma_i \sigma_j \partial_\mu W(z,x_i) \partial_\nu W(z,x_j) \\
&\qquad\qquad+ \frac{\beta^2}{4 \pi} \sum_{i,j=1}^n \sigma_i \sigma_j \left[ H_\mu(z,x_i) \partial_\nu W(z,x_j) + H_\nu(z,x_i) \partial_\mu W(z,x_j) \right] \\
&\qquad\qquad- \frac{\beta^2}{(4 \pi)^2} \sum_{i,j=1}^n \sigma_i \sigma_j H_\mu(z,x_i) H_\nu(z,x_j) \Bigg] \\
&\qquad\times \exp\left[ - \frac{\beta^2}{2} \sum_{i,j=1}^n \sigma_i \sigma_j W(x_i,x_j) \right] \left( \frac{\Lambda}{\mu} \right)^\frac{\beta^2 \left( \sum_{k=1}^n \sigma_k \right)^2}{4 \pi} \eqend{,}
\end{splitequation}
where we defined
\begin{equations}[eq:mink_renorm_hmu_def]
H_u(x,y) &\equiv - 4 \pi \mathi \, \partial_u H^\text{F}(x,y) = \frac{\Theta(u+v)}{u - \mathi \epsilon} + \frac{\Theta(-(u+v))}{u + \mathi \epsilon} \eqend{,} \\
H_v(x,y) &\equiv - 4 \pi \mathi \, \partial_v H^\text{F}(x,y) = \frac{\Theta(u+v)}{v - \mathi \epsilon} + \frac{\Theta(-(u+v))}{v + \mathi \epsilon} \eqend{,}
\end{equations}
using the light cone coordinates $u = u(x,y)$ and $v = v(x,y)$ defined in equation~\eqref{eq:lightconecoords}, and set (for better readability)
\begin{equation}
\label{eq:lightcone_ij}
u_{ij} \equiv u(x_i,x_j) \eqend{,} \quad v_{ij} \equiv v(x_i,x_j) \eqend{.}
\end{equation}
Analogously to the Euclidean case, the terms $\left[ \mu^2 ( - u^{ij} v^{ij} + \mathi \epsilon ) \right]^{\sigma_j \sigma_k \frac{\beta^2}{4 \pi}}$ are singular in the physical limit $\epsilon \to 0$ if $\sigma_j \sigma_k = -1$, but the singularity is integrable since we are in the finite regime $\beta^2 < 4 \pi$. Since $W$ is a smooth function, it is not singular, but since the scaling degree of $H_\mu$~\eqref{eq:mink_renorm_hmu_def} is $-1$ and the integration measure in light cone coordinates~\eqref{eq:lightconecoords} is $\total^2 x = \frac{1}{2} \total u \total v$, terms involving $H_\mu$ can potentially be problematic and may need renormalisation. The terms of the form $H_\mu(z,x_i) \partial_\nu W(z,x_j)$ are by definition~\eqref{eq:mink_renorm_hmu_def} equal to $- 4 \pi \mathi \, \partial^z_\mu H^\text{F}(z,x_i) \partial_\nu W(z,x_j)$, and we can integrate the $z$ derivative by parts such that it either acts on $W$ or the smearing function. Since the singularity of $H^\text{F}$ is integrable, the result is a well-defined distribution and we can take the limit $\epsilon \to 0$ with impunity. The same holds for the terms $H_\mu(z,x_i) H_\nu(z,x_j)$ with $i \neq j$ since their scaling degree (when all points coincide) is $2 < \dim \mathbb{R}^4$, but as in the Euclidean case the terms $H_\mu(z,x_j) H_\nu(z,x_j)$ are problematic since their scaling degree is $2 = \dim \mathbb{R}^2$ such that we expect a logarithmic singularity as $\epsilon \to 0$.

\begin{lemma}
\label{lemma_mink_distribution}
Consider the family of distributions $H_{\rho\sigma}^\epsilon$ defined for $\epsilon > 0$ by
\begin{equation}
\label{eq:hmunu_def}
H_{\rho\sigma}^\epsilon(f) \equiv \int H_\rho(x,0) H_\sigma(x,0) f(x) \total^2 x \eqend{,}
\end{equation}
which are well-defined for $f \istest$ with $f(0) = 0$ also in the limit $\epsilon \to 0$. If $f(0) \neq 0$, it holds that
\begin{equation}
\label{eq:hmunu_decomposition}
\lim_{\epsilon \to 0} \left[ H_{\rho\sigma}^\epsilon(f) - H^\text{div}_{\rho\sigma}(f) - H^\text{ren}_{\rho\sigma}(f) \right] = 0 \eqend{,}
\end{equation}
where the divergent part $H_{\rho\sigma}^\text{div}$ is given by
\begin{equation}
\label{eq:hmunu_div}
H^\text{div}_{\rho\sigma}(f) = 4 \pi \mathi \ln(2 \mu \epsilon) \eta_{\rho\sigma} f(0) + \mathi \pi f(0) \eqend{,}
\end{equation}
and the renormalised part $H_{\rho\sigma}^\text{ren}$ reads
\begin{equation}
\label{eq:hmunu_ren}
H_{\rho\sigma}^\text{ren}(f) = 4 \pi \mathi \, \int H^\text{F}(x,0) \partial_\rho \partial_\sigma f(x) \total^2 x - 4 \pi^2 \eta_{\rho\sigma} \int \left[ H^\text{F}(x,0) \right]^2 \partial^2 f(x) \total^2 x \eqend{.}
\end{equation}
Requiring that $H_{\rho\sigma}^\text{ren}(f) = \lim_{\epsilon \to 0} u_{\rho\sigma}^\epsilon(f)$ for all $f \istest$ with $f(0) = 0$ and that it preserves Lorentzian covariance and the scaling degree, $H_{\rho\sigma}^\text{ren}$ is unique up to the choice of an arbitrary scale, which we have identified with the (also arbitrary) scale $\mu$ in the Hadamard parametrix.
\end{lemma}
\begin{proof}
We first show that $H^\epsilon_{\rho\sigma}$ is a well-defined distribution for all $f \istest$ with $f(0) = 0$ for all $\epsilon$ including the limit. We start with the $\rho = \sigma = u$ components, and compute first
\begin{splitequation}
\label{eq:mink_renorm_huhu}
H_u(x,0) H_u(x,0) &= \frac{\Theta(u+v)}{( u - \mathi \epsilon )^2} + \frac{\Theta(-(u+v))}{( u + \mathi \epsilon )^2} \\
&= - \partial_u H_u(x,0) + \frac{2 \mathi \epsilon}{u^2 + \epsilon^2} \delta(u+v) \\
&= 4 \pi \mathi \partial_u^2 H^\text{F}(x,0) + \frac{2 \mathi \epsilon}{u^2 + \epsilon^2} \delta(u+v) \eqend{.}
\end{splitequation}
It follows that
\begin{splitequation}
H_{uu}^\epsilon(f) &= \frac{1}{2} \int \left[ 4 \pi \mathi \partial_u^2 H^\text{F}(x,0) + \frac{2 \mathi \epsilon}{u^2 + \epsilon^2} \delta(u+v) \right] f(u,v) \total u \total v \\
&= 2 \pi \mathi \int H^\text{F}(x,0) \partial_u^2 f(u,v) \total u \total v + \int \frac{\mathi \epsilon}{u^2 + \epsilon^2} f(u,-u) \total u \eqend{,}
\end{splitequation}
where we switched to light cone coordinates~\eqref{eq:lightconecoords} with $y = 0$, for which
\begin{equation}
\label{eq:lightcone_dx}
\total^2 x = \frac{1}{2} \total u(x) \total v(x) \eqend{,} \quad u(x) = x^0 - x^1 \eqend{,} \quad v(x) = x^0 + x^1 \eqend{.}
\end{equation}
Since the singularity of $H^\text{F}$~\eqref{eq:feynman} is logarithmic and thus integrable for all $\epsilon$, the first term is a well-defined distribution, and for the second one we use the Sokhotski--Plemelj formula~\eqref{eq:sokhotski_plemelj} to obtain in the limit $\epsilon \to 0$
\begin{equation}
\lim_{\epsilon \to 0} H_{uu}^\epsilon(f) = 2 \pi \mathi \int H^\text{F}(x,0) \partial_u^2 f(u,v) \total u \total v + \mathi \pi f(0) \eqend{.}
\end{equation}

In the same way, we obtain
\begin{equation}
\lim_{\epsilon \to 0} H_{vv}^\epsilon(x,0) = 4 \pi \mathi \, \partial_v^2 H^\text{F}(x,0) + \mathi \pi \delta(x) \eqend{,}
\end{equation}
but the mixed terms are more complicated. For them, we compute for $\epsilon > 0$ that
\begin{splitequation}
&H_u(x,0) H_v(x,0) = \frac{\Theta(u+v)}{( u - \mathi \epsilon ) ( v - \mathi \epsilon )} + \frac{\Theta(-(u+v))}{( u + \mathi \epsilon ) ( v + \mathi \epsilon )} \\
&\quad= - 8 \pi^2 \partial_u \partial_v \left[ H^\text{F}(x,0) \right]^2 - \frac{2 \mathi \epsilon \ln\left[ \mu^2 \left( u^2 + \epsilon^2 \right) \right]}{u^2 + \epsilon^2} \delta(u+v) \eqend{,}
\end{splitequation}
using that $H^\text{F}$ is a fundamental solution~\eqref{eq:hf_fundamental_solution}. For the second term, we obtain furthermore
\begin{equation}
\frac{2 \mathi \epsilon \ln\left[ \mu^2 \left( u^2 + \epsilon^2 \right) \right]}{u^2 + \epsilon^2} = \partial_u \left[ \operatorname{Li}_2\left( \frac{u - \mathi \epsilon}{u + \mathi \epsilon} \right) - \operatorname{Li}_2\left( \frac{u + \mathi \epsilon}{u - \mathi \epsilon} \right) + 2 \ln(2 \mu \epsilon) \ln\left( \frac{\epsilon + \mathi u}{\epsilon - \mathi u} \right) \right] \eqend{,}
\end{equation}
where $\operatorname{Li}_2$ is the dilogarithm, defined for $\abs{z} \leq 1$ by~\cite[Eq.~(25.12.1)]{dlmf}
\begin{equation}
\operatorname{Li}_2(z) \equiv \sum_{k=1}^\infty \frac{z^k}{k^2} \eqend{.}
\end{equation}
It follows that
\begin{splitequation}
H_{uv}^\epsilon(f) &= - 4 \pi^2 \int \left[ H^\text{F}(x,0) \right]^2 \partial_u \partial_v f(u,v) \total u \total v \\
&\quad+ \frac{1}{2} \int \left[ \operatorname{Li}_2\left( \frac{u - \mathi \epsilon}{u + \mathi \epsilon} \right) - \operatorname{Li}_2\left( \frac{u + \mathi \epsilon}{u - \mathi \epsilon} \right) \right] \partial_u f(u,u) \total u \\
&\quad+ \ln(2 \mu \epsilon) \int \ln\left( \frac{\epsilon + \mathi u}{\epsilon - \mathi u} \right) \partial_u f(u,u) \total u \eqend{,}
\end{splitequation}
and we consider the limit $\epsilon \to 0$. Since the singularity of $H^\text{F}$~\eqref{eq:feynman} is logarithmic, also the singularity of $\left[ H^\text{F}(x,0) \right]^2$ is logarithmic and the first term is a well-defined distribution including in the limit $\epsilon \to 0$. For the second term, we note that $\abs{(u \pm \mathi \epsilon)/(u \mp \mathi \epsilon)} = 1$ such that the argument of the dilogarithm is a pure phase and the dilogarithm is bounded by its value at $z = 1$. We can thus interchange the limit $\epsilon \to 0$ with the integral, which causes this term to vanish. For the third term, let us only consider the integral without the prefactor $\ln(2 \mu \epsilon)$. Then we can also interchange the limit $\epsilon \to 0$ with the integral because the logarithm is bounded by $\pi$, but the limit of the integrand is discontinuous:
\begin{equation}
\lim_{\epsilon \to 0} \ln\left( \frac{\epsilon + \mathi u}{\epsilon - \mathi u} \right) = \mathi \pi \sgn u \eqend{.}
\end{equation}
It follows that
\begin{splitequation}
\lim_{\epsilon \to 0} \int \ln\left( \frac{\epsilon + \mathi u}{\epsilon - \mathi u} \right) \partial_u f(u,u) \total u = \mathi \pi \int \sgn u \, \partial_u f(u,u) \total u = - 2 \pi \mathi f(0) \eqend{,}
\end{splitequation}
such that $\lim_{\epsilon \to 0} H_{uv}^\epsilon(f)$ is finite whenever $f(0) = 0$.

For a general $f$, we use that in light cone coordinates~\eqref{eq:lightconecoords} we have $\eta_{uu} = \eta_{vv} = 0$ and $\eta_{uv} = \eta_{vu} = - \frac{1}{2}$ as well as $\partial^2 = - 4 \partial_u \partial_v$. The decomposition~\eqref{eq:hmunu_decomposition} is then immediate for $H_{uu}^\epsilon(f)$ and $H_{vv}^\epsilon(f)$, while for $H_{uv}^\epsilon(f)$ we need to perform an integration by parts and use that $H^\text{F}$ is a fundamental solution of the massless Klein--Gordon equation~\eqref{eq:hf_fundamental_solution} to obtain
\begin{equation}
H_{uv}^\text{ren}(f) = - \mathi \pi f(0) - 4 \pi^2 \int \left[ H^\text{F}(x,0) \right]^2 \partial_u \partial_v f(x) \total u \total v \eqend{.}
\end{equation}
With this result, the decomposition~\eqref{eq:hmunu_decomposition} is seen to hold also for $H_{uv}^\epsilon(f)$. Considering now a different renormalisation, the difference can only be a distribution supported at $x = 0$, and enforcing Lorentzian covariance and the preservation of the scaling degree, it must be proportional to $\eta_{\rho\sigma} f(0)$. Clearly, such a term corresponds to changing the scale $\mu$ both in the logarithm of the divergent part~\eqref{eq:hmunu_div} and the Hadamard parametrices in the renormalised part~\eqref{eq:hmunu_ren}.
\end{proof}

As in the Euclidean case, the local term $H^\text{div}_{\rho\sigma}$ that is divergent as the UV cutoff $\epsilon$ is removed is logarithmic as we anticipated, and $H^\text{ren}_{\rho\sigma}$ is the extension of $H_\rho H_\sigma$ to the diagonal. To renormalise, we have to subtract the local term, which in the pAQFT framework is done by changing the time-ordered products by local terms. We note that in the conventional pAQFT treatment, one would just consider $H^\text{ren}_{\rho\sigma}$ as the required extension, and no explicit counterterms are introduced. However, to connect with the traditional approach using cutoffs, we extend the established pAQFT approach and perform a change of time-ordered products to cancel the divergent part explicitly. That is, we show explicitly that the removal of divergent parts can be done using pAQFT methods, which thus can be seen as making the traditional approach mathematically precise.

In our case, with the local term supported at the diagonal $z = x_i$, we thus have to change the time-ordered products with two entries, one of which is $\op_{\mu\nu}(z)$ and one of which is a vertex operator $V_{\sigma_i \beta}(x_i)$. Since the local term that we want to subtract has scaling dimension 2, the same as $\op_{\mu\nu}$, the time-ordered product must be proportional to the vertex operator, and we make the ansatz
\begin{equation}
\label{eq:mink_renorm_redef}
\delta \mathcal{T}\left[ \op_{\mu\nu}(z) \otimes V_\alpha(x) \right] = c_\alpha \, H^\text{div}_{\mu\nu}(z,x) \, \mathcal{T}\left[ V_\alpha(x) \right]
\end{equation}
with a constant $c_\alpha$ to be determined. From the recursive construction of the time-ordered products, it follows that
\begin{equation}
\label{eq:mink_renorm_redef_2}
\delta \mathcal{T}\left[ \bigotimes_{j=1}^n V_{\sigma_j \beta}(x_j) \otimes \op_{\mu\nu}(z) \right] = \sum_{j=1}^n c_{\sigma_j \beta} H^\text{div}_{\mu\nu}(z,x_j) \, \mathcal{T}\left[ \bigotimes_{j=1}^n V_{\sigma_j \beta}(x_j) \right] \eqend{,}
\end{equation}
and hence the expectation value~\eqref{eq:mink_renorm_expect} changes as
\begin{splitequation}
\label{eq:mink_renorm_expect_change}
&\omega^{\Lambda,\epsilon}\left( \delta \mathcal{T}\left[ \bigotimes_{j=1}^n V_{\sigma_j \beta}(x_j) \otimes \op_{\mu\nu}(z) \right] \right) \\
&\quad= \sum_{j=1}^n c_{\sigma_j \beta} H^\text{div}_{\mu\nu}(z,x_j) \, \omega^{\Lambda,\epsilon}\left( \mathcal{T}\left[ \bigotimes_{j=1}^n V_{\sigma_j \beta}(x_j) \right] \right) \\
&\quad= \sum_{j=1}^n c_{\sigma_j \beta} H^\text{div}_{\mu\nu}(z,x_j) \, \prod_{1 \leq i < j \leq n} \left[ \mu^2 ( - u_{ij} v_{ij} + \mathi \epsilon ) \right]^{\sigma_i \sigma_j \frac{\beta^2}{4 \pi}} \\
&\qquad\times \exp\left[ - \frac{\beta^2}{2} \sum_{i,j=1}^n \sigma_i \sigma_j W(x_i,x_j) \right] \left( \frac{\Lambda}{\mu} \right)^\frac{\beta^2 \left( \sum_{j=1}^n \sigma_j \right)^2}{4 \pi}
\end{splitequation}
using the result~\eqref{eq:timeordered_exponential} and the explicit form of the Hadamard parametrix~\eqref{eq:feynman}. Adding this correction to~\eqref{eq:mink_renorm_expect}, we can cancel the divergent part by choosing
\begin{equation}
\label{eq:min_renorm_divconst}
c_{\pm \beta} = \frac{\beta^2}{(4 \pi)^2} \eqend{.}
\end{equation}
We see clearly that the renormalisation is state-independent, which is one of the central insights of pAQFT. Moreover, as in the Euclidean case we only obtain a non-vanishing result in the limit where the IR cutoff $\Lambda \to 0$ if the sum of all $\sigma_i$ vanishes, the super-selection criterion of the vacuum sector~\cite{swieca1977}.

For the renormalised expectation value of the stress tensor $T_{\mu\nu} = \op_{\mu\nu} - \frac{1}{2} \eta_{\mu\nu} \op_\rho{}^\rho + g \eta_{\mu\nu} ( V_\beta + V_{-\beta} )$, we again obtain a sum of four terms, the first two of which can be read off from equations~\eqref{eq:mink_renorm_expect} and~\eqref{eq:mink_renorm_expect_change}. As in the Euclidean case, since the part in~\eqref{eq:hmunu_div} that diverges in the limit $\epsilon \to 0$ is proportional to $\eta_{\mu\nu}$, it cancels out between the first two terms and the stress tensor is finite even without subtracting the counterterms. For the last two terms, we take the expectation value of equation~\eqref{eq:timeordered_exponential} and use the explicit form of the Hadamard parametrix~\eqref{eq:feynman} to obtain
\begin{splitequation}
\label{eq:mink_renorm_expect_vertex}
&\omega^{\Lambda,\epsilon}\left( \mathcal{T}\left[ \bigotimes_{j=1}^n V_{\sigma_j \beta}(x_j) \otimes V_\beta(z) \right] \right) = \prod_{1 \leq i < j \leq n} \left[ \mu^2 ( - u_{ij} v_{ij} + \mathi \epsilon ) \right]^{\sigma_i \sigma_j \frac{\beta^2}{4 \pi}} \\
&\qquad\times \prod_{j=1}^n \left[ \mu^2 ( - u(x_j,z) v(x_j,z) + \mathi \epsilon ) \right]^{\sigma_j \frac{\beta^2}{4 \pi}} \exp\left[ - \frac{\beta^2}{2} \sum_{i,j=1}^n \sigma_i \sigma_j W(x_i,x_j) \right] \\
&\qquad\times \exp\left[ - \beta^2 \sum_{j=1}^n \sigma_j W(x_j,z) - \frac{\beta^2}{2} W(z,z) \right] \left( \frac{\Lambda}{\mu} \right)^\frac{\beta^2 \left( 1 + \sum_{j=1}^n \sigma_j \right)^2}{4 \pi} \eqend{.}
\end{splitequation}
Since we are in the finite regime $\beta^2 < 4 \pi$, the singularities that arise for $\epsilon = 0$ as $x_j \to x_k$ and $x_j \to z$ are integrable, and so for this term no further renormalisation beyond the normal-ordering is required. Moreover, we again see how the neutrality condition appears: as $\Lambda \to 0$, we obtain a vanishing result unless $\sum_{j=1}^n \sigma_j = -1$. The last term with $V_{-\beta}$ results in the same result with $\sigma_j$ replaced by $-\sigma_j$ on the right-hand side. \hfill\squareforqed

\subsection{Proof of theorem~\ref{thm_mink_conv} (Convergence)}
\label{sec_mink_conv}

As in the Euclidean case, we tacitly employ Fubini's theorem to interchange absolutely convergent integrals in this whole section, and consider numerator and denominator of equation~\eqref{eq:thm_mink_conv_series} separately. Starting with the denominator, we use the result~\eqref{eq:timeordered_exponential} from the proof of Lemma~\ref{lemma_mink_exponential_phi} to obtain
\begin{splitequation}
\label{eq:mink_conv_expectvertex}
&\omega^{0,0}\left( \mathcal{T}\left[ \bigotimes_{j=1}^n V_{\sigma_j \beta}(x_j) \right] \right) = \lim_{\Lambda,\epsilon \to 0} \vast[ \exp\left[ - \mathi \beta^2 \sum_{1 \leq i < j \leq n} \sigma_i \sigma_j H^\text{F}(x_i,x_j) \right] \\
&\qquad\qquad\times \exp\left[ - \frac{\beta^2}{2} \sum_{i,j=1}^n \sigma_i \sigma_j W(x_i,x_j) \right] \left( \frac{\Lambda}{\mu} \right)^\frac{\beta^2 \left( \sum_{k=1}^n \sigma_k \right)^2}{4 \pi} \vast] \\
&\quad= \delta_{0, \sum_{k=1}^n \sigma_k} \prod_{1 \leq i < j \leq n} ( - \mu^2 )^{\sigma_i \sigma_j \frac{\beta^2}{4 \pi}} ( u_{ij} v_{ij} )_-^{\sigma_i \sigma_j \frac{\beta^2}{4 \pi}} \exp\left[ - \frac{\beta^2}{2} \sum_{i,j=1}^n \sigma_i \sigma_j W(x_i,x_j) \right] \eqend{,}
\end{splitequation}
where we use the notation
\begin{equation}
(uv)_\pm^\alpha \equiv \lim_{\epsilon \to 0} ( u v \pm \mathi \epsilon \abs{u+v} )^\alpha = \abs{u v}^\alpha \mathe^{\pm \mathi \alpha \pi \Theta(- u v)} = \left[ (uv)_\pm \right]^\alpha
\end{equation}
for the distributional boundary value, which is a well-defined integrable function for $\abs{\alpha} < 1$. Since $\sigma_j = \pm 1$, to obtain a non-vanishing result we must have $n = 2m$ with $m$ positive $\sigma_j$ and $m$ negative ones. We then rename the $x_j$ with $\sigma_j = -1$ to $y_j$ and renumber them. Taking into account that there are $\binom{n}{m} = (2m)!/(m!)^2$ possibilities to choose $m$ positive $\sigma_j$ from a total of $n = 2m$ ones (since equation~\eqref{eq:mink_conv_expectvertex} is symmetric under a permutation of the (renamed) $x_i$ and $y_j$ among themselves), the denominator of equation~\eqref{eq:thm_mink_conv_series} reduces to
\begin{splitequation}
\label{eq:mink_conv_denom}
&\sum_{m=0}^\infty \frac{(-1)^m}{(m!)^2} \int\dotsi\int \omega^{0,0}\left( \mathcal{T}\left[ \bigotimes_{j=1}^m V_\beta(x_j) \otimes V_{-\beta}(y_j) \right] \right) \prod_{i=1}^m g(x_i) g(y_i) \total^2 x_i \total^2 y_i \\
&= \sum_{m=0}^\infty \frac{(-1)^m}{(m!)^2} \int\dotsi\int \left[ \frac{\prod_{1 \leq j < k \leq m} [ u(x_j,x_k) v(x_j,x_k) ]_- [ u(y_j,y_k) v(y_j,y_k) ]_-}{(-1)^m \prod_{j,k=1}^m [ u(x_j,y_k) v(x_j,y_k) ]_-} \right]^\frac{\beta^2}{4 \pi} \\
&\qquad\times \mu^{- m \frac{\beta^2}{2 \pi}} \exp\left[ - \frac{\beta^2}{2} \sum_{i,j=1}^m [ W(x_i,x_j) - W(y_i,x_j) - W(x_i,y_j) + W(y_i,y_j) ] \right] \\
&\qquad\times \prod_{i=1}^m g(x_i) g(y_i) \total^2 x_i \total^2 y_i \eqend{.}
\end{splitequation}
To bound the terms at order $2m$, we change the integration measure to light cone coordinates~\eqref{eq:lightcone_dx} and use that~\eqref{eq:lightconecoords}
\begin{equation}
u(x,y) = u(x) - u(y) \eqend{,} \quad v(x,y) = v(x) - v(y) \eqend{.}
\end{equation}
The absolute value of the terms in brackets factorises into a part depending on the $u$ and a part depending on the $v$, and for each of them we use the Cauchy determinant formula:
\begin{splitequation}
\abs{ \frac{\prod_{1 \leq j < k \leq m} u(x_j,x_k) u(y_j,y_k)}{\prod_{j,k=1}^m u(x_j,y_k)} }^p &= \abs{ \det\left( \frac{1}{u(x_i,y_j)} \right)_{i,j=1}^m }^p \\
&\leq \abs{ \sum_{\pi} \prod_{j=1}^m \frac{1}{\abs{u(x_j,y_{\pi(j)})}} }^p \eqend{,}
\end{splitequation}
where the sum runs over all permutations $\pi$ of $\{ 1,\ldots,n \}$, and we used the estimate~\eqref{eq:cauchy_estimate1}. For the exponential, we use the second assumption on $W$
\begin{equation}
\sum_{i,j=1}^m [ W(x_i,x_j) - W(y_i,x_j) - W(x_i,y_j) + W(y_i,y_j) ] \geq 0 \eqend{,}
\end{equation}
which lets us estimate the exponential by 1. It follows that the denominator~\eqref{eq:mink_conv_denom} can be bounded by
\begin{splitequation}
\label{eq:mink_conv_bounddenom}
&\sum_{m=0}^\infty \frac{2^{-2m}}{(m!)^2} \int\dotsi\int \left[ \sum_{\pi} \prod_{j=1}^m \frac{1}{\mu \abs{u(x_j,y_{\pi(j)})}} \right]^\frac{\beta^2}{4 \pi} \left[ \sum_{\pi} \prod_{j=1}^m \frac{1}{\mu \abs{v(x_j,y_{\pi(j)})}} \right]^\frac{\beta^2}{4 \pi} \\
&\qquad\times \prod_{i=1}^m \abs{ g(x_i) } \abs{ g(y_i) } \total u(x_i) \total v(x_i) \total u(y_i) \total v(y_i) \eqend{.}
\end{splitequation}

We then estimate
\begin{splitequation}
\label{eq:mink_conv_gbound}
\abs{ g(x_i) } &\leq \big[ 1 + \mu^2 u(x_i)^2 \big]^{-2} \big[ 1 + \mu^2 v(x_i)^2 \big]^{-2} \\
&\quad\times \norm{ \big[ 1 + \mu^2 u(\blank)^2 \big]^2 \big[ 1 + \mu^2 v(\blank)^2 \big]^2 g(\blank) }_\infty \eqend{,}
\end{splitequation}
such that the integrals over the $u$ and the $v$ factorise. We thus need to bound
\begin{equation}
\label{eq:mink_conv_bounddenom_1}
\int\dotsi\int \left[ \sum_{\pi} \prod_{j=1}^m \frac{1}{\mu \abs{u(x_j,y_{\pi(j)})}} \right]^\frac{\beta^2}{4 \pi} \prod_{i=1}^m \frac{\total u(x_i)}{\big[ 1 + \mu^2 u(x_i)^2 \big]^2} \frac{\total u(y_i)}{\big[ 1 + \mu^2 u(y_i)^2 \big]^2} \eqend{,}
\end{equation}
and the analogous expression with $u$ replaced by $v$. A straightforward estimate using the same techniques as in the Euclidean case is possible, but would result in a factor $(m!)^2$ at order $m$ in the bound~\eqref{eq:mink_conv_bounddenom} for the denominator. With this, the convergence of the sum would only be possible for an adiabatic cutoff function $g$ of small support. However, it is possible to obtain a smaller power of $m!$ without restrictions on $g$ by employing more complicated estimates. Namely, using the Hölder inequality~\eqref{eq:hoelder} with $r = \rho > 1$, the expression~\eqref{eq:mink_conv_bounddenom_1} can be bounded by
\begin{splitequation}
\label{eq:mink_conv_bounddenom_2}
&\vast[ \int\dotsi\int \left[ \sum_{\pi} \prod_{j=1}^m \frac{1}{\mu \abs{u(x_j,y_{\pi(j)})}} \right]^{\rho \frac{\beta^2}{4 \pi}} \prod_{i=1}^m \frac{\total u(x_i)}{[ 1 + \mu^2 u(x_i)^2 ]^\rho} \frac{\total u(y_i)}{[ 1 + \mu^2 u(y_i)^2 ]^\rho} \vast]^\frac{1}{\rho} \\
&\quad\times \left[ \int\dotsi\int \prod_{i=1}^m \frac{\total u(x_i)}{\big[ 1 + \mu^2 u(x_i)^2 \big]^\frac{\rho}{\rho-1}} \frac{\total u(y_i)}{\big[ 1 + \mu^2 u(y_i)^2 \big]^\frac{\rho}{\rho-1}} \right]^\frac{\rho-1}{\rho} \eqend{,}
\end{splitequation}
and choosing $\rho$ such that $\rho \frac{\beta^2}{4 \pi} < 1$ (which is possible since we are in the finite regime $\beta^2 < 4 \pi$), we can use the estimate~\eqref{eq:cauchy_estimate2} to obtain
\begin{equation}
\left[ \sum_{\pi} \prod_{j=1}^m \frac{1}{\mu \abs{u(x_j,y_{\pi(j)})}} \right]^{\rho \frac{\beta^2}{4 \pi}} \leq \sum_\pi \prod_{j=1}^m \Big[ \mu \abs{u(x_j,y_{\pi(j)})} \Big]^{- \rho \frac{\beta^2}{4 \pi}} \eqend{.}
\end{equation}
Since the remainder of the integrand is invariant under a permutation of the $y_i$, the sum over permutations $\pi$ just gives a factor $m!$, so that we only need to bound
\begin{equation}
\label{eq:mink_conv_boundyoung}
\iint \Big[ \mu \abs{u(x_j,y_j)} \Big]^{- \rho \frac{\beta^2}{4 \pi}} \frac{\total u(x_j)}{[ 1 + \mu^2 u(x_j)^2 ]^\rho} \frac{\total u(y_j)}{[ 1 + \mu^2 u(y_j)^2 ]^\rho} \eqend{.}
\end{equation}
In the region where $\mu \abs{ u(x_j,y_j) } > 1$, we estimate the first term by $1$ and bound equation~\eqref{eq:mink_conv_boundyoung} by $\left[ \int ( 1 + \mu^2 u^2 )^{-\rho} \total u \right]^2 \leq \pi^2 \mu^{-2}$. In the region where $\mu \abs{ u(x_j,y_j) } \leq 1$, we use again Young's inequality in the form~\eqref{eq:young3} with the exponents~\eqref{eq:young_qpr_choice}, but now in one dimension and with $\beta^2$ replaced by $\rho \beta^2$. This gives
\begin{splitequation}
&\norm{ ( 1 + \mu^2 \blank^2 )^{-\rho} }_p^2 \norm{ \Theta(1 - \mu \abs{\blank}) (\mu \abs{\blank})^{- \rho \frac{\beta^2}{4 \pi}} }_q \\
&\quad= \mu^{-2} \left( \frac{\sqrt{\pi} \Gamma\left( p \rho - \frac{1}{2} \right)}{\Gamma(p \rho)} \right)^\frac{2}{p} \left( \frac{8 \pi}{4 \pi - \rho \beta^2 q} \right)^\frac{1}{q} \eqend{,}
\end{splitequation}
where the result for the second norm is equation~\eqref{eq:young_thetamu_norm}, and the first norm is a straightforward computation using~\cite[Eq.~(5.12.3)]{dlmf} and~\cite[Eq.~(5.12.1)]{dlmf}. The result is finite with the choice we made for $p$, $q$ and $\rho$, in particular $\rho \beta^2 q = 4 \pi - ( 4 \pi - \rho \beta^2 ) ( 8 \pi - \rho \beta^2 ) / (8 \pi) < 4 \pi$ and $p \rho > 1$. On the other hand, for the second factor in equation~\eqref{eq:mink_conv_bounddenom_2} we have the simple bound
\begin{equation}
\left[ \int \frac{\total u}{\big( 1 + \mu^2 u^2 \big)^\frac{\rho}{\rho-1}} \right]^\frac{\rho-1}{\rho} \leq \rho \mu^\frac{1-\rho}{\rho} \eqend{.}
\end{equation}
Taking all together, we can bound equation~\eqref{eq:mink_conv_bounddenom_2} (and thus equation~\eqref{eq:mink_conv_bounddenom_1}) by $(m!)^\frac{1}{\rho} \hat{K}^m$, where $\hat{K}$ is a constant depending on $\beta$ and $g$, and we recall that $\rho > 1$. Inserting this result into equation~\eqref{eq:mink_conv_bounddenom}, it follows that the denominator of equation~\eqref{eq:thm_mink_conv_series} is bounded by
\begin{equation}
\label{eq:mink_conv_denombound}
\sum_{m=0}^\infty \frac{2^{-2m}}{(m!)^2} (m!)^\frac{2}{\rho} \hat{K}^{2m} = \sum_{m=0}^\infty (m!)^{\frac{\beta^2}{4\pi} - 1} K^m < \infty \eqend{,}
\end{equation}
with the new constant $K = \hat{K}^2/4$, and where we made the (admissible) choice $\rho = 8 \pi/(4 \pi + \beta^2)$. We remark that the bounds~\eqref{eq:mink_conv_denombound} are not new and were already derived in~\cite{bahnsrejzner2018}. However, a technical improvement over the proof of~\cite{bahnsrejzner2018} is that we admit arbitrary adiabatic cutoff functions $g \istest$, without any restriction on their support.

Consider thus the numerator of equation~\eqref{eq:thm_mink_conv_series}, where the renormalised expectation values are given by the sum of~\eqref{eq:mink_renorm_expect} and~\eqref{eq:mink_renorm_expect_change} with the choice~\eqref{eq:min_renorm_divconst} to cancel the divergent part. We see that in the physical limit $\Lambda \to 0$ again only even terms with $n = 2m$ contribute, and taking into account the symmetry under the exchange of variables and renaming integration variables as in the case of the denominator, the numerator reads
\begin{splitequation}
\label{eq:mink_conv_num}
&\sum_{m=0}^\infty \frac{(-1)^m}{(m!)^2} \int f(z) \int\dotsi\int \left[ \frac{\prod_{1 \leq j < k \leq m} [ u(x_j,x_k) v(x_j,x_k) ]_- [ u(y_j,y_k) v(y_j,y_k) ]_-}{(-1)^m \prod_{j,k=1}^m [ u(x_j,y_k) v(x_j,y_k) ]_-} \right]^\frac{\beta^2}{4 \pi} \\
&\qquad\times \Bigg[ \left. \partial^z_\mu \partial^{z'}_\nu W(z,z') \right\rvert_{z' = z} + \frac{\beta^2}{8 \pi^2} \sum_{i,j=1}^m H_{(\mu}(z,x_i) H_{\nu)}(z,y_j) \\
&\qquad\qquad- \beta^2 \sum_{i,j=1}^m \partial_\mu [ W(z,x_i) - W(z,y_i) ] \partial_\nu [ W(z,x_j) - W(z,y_j) ] \\
&\qquad\qquad+ \frac{\beta^2}{2 \pi} \sum_{i,j=1}^m [ H_{(\mu}(z,x_i) - H_{(\mu}(z,y_i) ] \partial_{\nu)} [ W(z,x_j) - W(z,y_j) ] \\
&\qquad\qquad- \frac{\beta^2}{8 \pi^2} \sum_{1 \leq i < j \leq m} \left[ H_{(\mu}(z,x_i) H_{\nu)}(z,x_j) + H_{(\mu}(z,y_i) H_{\nu)}(z,y_j) \right] \\
&\qquad\qquad- \frac{\beta^2}{16 \pi^2} \sum_{j=1}^m \Big[ H_{\mu\nu}^\text{ren}(z,x_j) + H_{\mu\nu}^\text{ren}(z,y_j) \Big] \Bigg] \\
&\qquad\times \mu^{- m \frac{\beta^2}{2 \pi}} \exp\left[ - \frac{\beta^2}{2} \sum_{i,j=1}^m [ W(x_i,x_j) - W(y_i,x_j) - W(x_i,y_j) + W(y_i,y_j) ] \right] \\
&\qquad\times \prod_{i=1}^m g(x_i) g(y_i) \total^2 x_i \total^2 y_i \total^2 z \eqend{,}
\end{splitequation}
where here and in the following indices in parentheses are to be symmetrised according to $A_{(\mu} B_{\nu)} = ( A_\mu B_\nu + A_\nu B_\mu )/2$. We see that there are various different types of terms, some of which are equal since we can interchange $x_j$ and $y_j$ without changing the result, and we will bound all of them separately. Consider first the terms that only contain $W$ but no derivatives $H_\mu$ of the Hadamard parametrix. By the first assumption on $W$, it and its derivatives grow at most polynomially such that
\begin{equations}[eq:w_bounds]
\abs{ \left. \partial^z_\mu \partial^{z'}_\nu W(z,z') \right\rvert_{z' = z} } &\leq w \Big[ 1 + \mu^2 u(z)^2 + \mu^2 v(z)^2 \Big]^k \eqend{,} \\
\abs{ \partial_\mu W(z,x) } &\leq w \Big[ 1 + \mu^2 u(z)^2 + \mu^2 v(z)^2 \Big]^k \Big[ 1 + \mu^2 u(x)^2 + \mu^2 v(x)^2 \Big]^k
\end{equations}
for some constant $w > 0$ and some $k \in \mathbb{N}$. It follows that the contribution of the terms that only contain $W$ but no derivatives $H_\mu$ to the numerator~\eqref{eq:mink_conv_num} can be bounded by
\begin{splitequation}
\label{eq:mink_conv_num_1}
&\sum_{m=0}^\infty \frac{\mu^{- m \frac{\beta^2}{2 \pi}}}{(m!)^2} \int \abs{ f(z) } \int\dotsi\int \abs{ \frac{\prod_{1 \leq j < k \leq m} u(x_j,x_k) v(x_j,x_k) u(y_j,y_k) v(y_j,y_k)}{\prod_{j,k=1}^m u(x_j,y_k) v(x_j,y_k)} }^\frac{\beta^2}{4 \pi} \\
&\qquad\times \Bigg[ w \Big[ 1 + \mu^2 u(z)^2 + \mu^2 v(z)^2 \Big]^k + \beta^2 w^2 \Big[ 1 + \mu^2 u(z)^2 + \mu^2 v(z)^2 \Big]^{2k} \\
&\qquad\qquad\times \left[ \sum_{i=1}^m \Big[ 1 + \mu^2 u(x_i)^2 + \mu^2 v(x_i)^2 \Big]^k + \Big[ 1 + \mu^2 u(y_i)^2 + \mu^2 v(y_i)^2 \Big]^k \right]^2 \Bigg] \\
&\qquad\times \exp\left[ - \frac{\beta^2}{2} \sum_{i,j=1}^m [ W(x_i,x_j) - W(y_i,x_j) - W(x_i,y_j) + W(y_i,y_j) ] \right] \\
&\qquad\times \prod_{i=1}^m \abs{ g(x_i) } \abs{ g(y_i) } \total^2 x_i \total^2 y_i \total^2 z \eqend{.}
\end{splitequation}
The integral over $z$ can be estimated by a constant $C$ depending on the test function $f$ and the constants $w$, $k$ and $\beta^2$. Absorbing the terms $\Big[ 1 + \mu^2 u(x_i)^2 + \mu^2 v(x_i)^2 \Big]^k$ into the test functions $g(x_i)$ to obtain new test functions $\tilde{g}$, and taking into account that the middle sum in equation~\eqref{eq:mink_conv_num_1} contributes $m^2$ terms, which by renaming of integration variables all give the same contribution, we can then repeat the derivation of the denominator estimates for the remaining terms. It follows that equation~\eqref{eq:mink_conv_num_1} is bounded by
\begin{equation}
\label{eq:mink_conv_num_1_bound}
C \sum_{m=0}^\infty (m!)^{\frac{\beta^2}{4\pi} - 1} m^2 K^m < \infty \eqend{,}
\end{equation}
where $K$ depends on $\tilde{g}$ and thus also on $W$. Next consider the mixed terms involving $W$ and $H_\mu = - 4 \pi \mathi \partial_\mu H^\text{F}$~\eqref{eq:mink_renorm_hmu_def}. Integrating the derivative by parts and using the estimates~\eqref{eq:w_bounds} for $W$, the contribution of these terms to the numerator~\eqref{eq:mink_conv_num} can be bounded by
\begin{splitequation}
\label{eq:mink_conv_num_2}
&\sum_{m=0}^\infty \frac{\mu^{- m \frac{\beta^2}{2 \pi}}}{(m!)^2} \int \int\dotsi\int \abs{ \frac{\prod_{1 \leq j < k \leq m} u(x_j,x_k) v(x_j,x_k) u(y_j,y_k) v(y_j,y_k)}{\prod_{j,k=1}^m u(x_j,y_k) v(x_j,y_k)} }^\frac{\beta^2}{4 \pi} \\
&\qquad\times 8 m^2 w \beta^2 \left[ \abs{ f(z) } + \sup_{\mu} \abs{ \partial_\mu f(z) } \right] \Big[ 1 + \mu^2 u(z)^2 + \mu^2 v(z)^2 \Big]^k \abs{ H^\text{F}(z,x_1) } \\
&\qquad\times \exp\left[ - \frac{\beta^2}{2} \sum_{i,j=1}^m [ W(x_i,x_j) - W(y_i,x_j) - W(x_i,y_j) + W(y_i,y_j) ] \right] \\
&\qquad\times \prod_{i=1}^m \abs{ \tilde{g}(x_i) } \abs{ \tilde{g}(y_i) } \total^2 x_i \total^2 y_i \total^2 z \eqend{,}
\end{splitequation}
where we have again absorbed terms $\Big[ 1 + \mu^2 u(x_i)^2 + \mu^2 v(x_i)^2 \Big]^k$ into the test functions $g(x_i)$, and the factor of $m^2$ arises because all the terms in the sum give the same contribution. Here we used that the Hadamard parametrix $H^\text{F}$~\eqref{eq:feynman} is an integrable function of the light cone variables $u$ and $v$, uniformly bounded for all $\epsilon$. We can thus bound it by simply taking the absolute value of this integrable function, denoted by $\abs{ H^\text{F} }$, which in the limit $\epsilon \to 0$ reads
\begin{equation}
\label{eq:mink_conv_hadamard_bound}
\abs{ H^\text{F}(z,x) } \leq \frac{1}{4 \pi} \abs{ \ln\abs{ \mu u(z,x) } } + \frac{1}{4 \pi} \abs{ \ln\abs{ \mu v(z,x) } } + \frac{1}{4} \eqend{.}
\end{equation}
The contribution of the last term is bounded by
\begin{equation}
\int 2 w \beta^2 \left[ \abs{ f(z) } + \sup_{\mu} \abs{ \partial_\mu f(z) } \right] \Big[ 1 + \mu^2 u(z)^2 + \mu^2 v(z)^2 \Big]^k \total^2 z \leq C \eqend{,}
\end{equation}
where the constant $C$ depends on $f$ and $W$ through $w$ and $k$. For the contributions of the logarithms in~\eqref{eq:mink_conv_hadamard_bound}, we first bound
\begin{splitequation}
&8 w \beta^2 \left[ \abs{ f(z) } + \sup_{\mu} \abs{ \partial_\mu f(z) } \right] \Big[ 1 + \mu^2 u(z)^2 + \mu^2 v(z)^2 \Big]^k \\
&\quad\leq C \Big[ 1 + \mu^2 u(z)^2 \Big]^{-1} \Big[ 1 + \mu^2 v(z)^2 \Big]^{-1} \eqend{,}
\end{splitequation}
where the constant $C$ depends on $f$ and $W$ through $w$ and $k$. We then change the $z$ integration to light cone coordinates~\eqref{eq:lightcone_dx}, such that the integral over $z$ factors, and then use the estimate
\begin{equation}
\label{eq:mink_conv_num_2_int}
\int \frac{\abs{ \ln\abs{ \mu u(z,x) } }^k}{1 + \mu^2 u(z)^2} \total u(z) \leq \frac{2}{\mu} c_k \ln^k\left( 2 + \mu \abs{u(x)} \right) \eqend{.}
\end{equation}
and the analogous one with $u$ replaced by $v$. It follows that the integral over $z$ in equation~\eqref{eq:mink_conv_num_2} is bounded by a constant $C$ depending on $f$, $g$ and $W$. For the remaining terms we repeat the derivation of the denominator estimates with $g(x_i) \to g(x_i) \ln\left( 2 + \mu \abs{u(x_i)} \right)$, since for those estimates we only need that $\abs{g(x_i)} \ln\left( 2 + \mu \abs{u(x_i)} \right)$ is a rapidly decreasing function. It follows that also the contribution of the terms involving both $W$ and $H_\mu$ to the numerator~\eqref{eq:mink_conv_num} is bounded by a sum of the form~\eqref{eq:mink_conv_num_1_bound}.

The bound~\eqref{eq:mink_conv_num_2_int} is proven as follows: we estimate first that
\begin{splitequation}
\int \frac{\abs{ \ln\abs{ \mu u(z,x) } }^k}{1 + \mu^2 u(z)^2} \total u(z) &= \frac{1}{\mu} \int \frac{\abs{ \ln\abs{ s } }^k}{1 + [ s + \mu u(x) ]^2} \total s \\
&\leq \frac{2}{\mu} \int_0^\infty \frac{\abs{ \ln s }^k}{1 + [ s - \mu \abs{ u(x) } ]^2} \total s \eqend{,}
\end{splitequation}
and thus only have to show that
\begin{equation}
\int_0^\infty \frac{\abs{ \ln s }^k}{1 + (s-a)^2} \total s \leq c_k \ln^k(2+a)
\end{equation}
for $a \geq 0$. We compute
\begin{splitequation}
&\int_0^\infty \frac{\abs{ \ln s }^k}{1 + (s-a)^2} \total s = \int_0^1 \frac{(-\ln s)^k}{1 + (s-a)^2} \total s + \int_1^\infty \frac{\ln^k s}{1 + (s-a)^2} \total s \\
&\quad\leq \int_0^1 (-\ln s)^k \total s + \int_1^{1+a} \frac{\ln^k s}{1 + (s-a)^2} \total s + \int_{1+a}^\infty \frac{\ln^k s}{1 + (s-a)^2} \total s \\
&\quad\leq k! + \ln^k(1+a) \int_1^{1+a} \frac{1}{1 + (s-a)^2} \total s + \int_1^\infty \frac{\ln^k(s+a)}{1 + s^2} \total s \\
&\quad\leq k! + \frac{3}{4} \pi \ln^k(1+a) + \int_1^\infty \frac{[ \ln(s) + \ln(2+a) ]^k}{1 + s^2} \total s
\end{splitequation}
using the inequality (valid for $a \geq 0$, $b \geq 1$)
\begin{equation}
\label{eq:mink_conv_log_ineq}
\ln(a+b) = \ln b + \ln\left( 1 + \frac{a}{b} \right) \leq \ln b + \ln(2+a) \eqend{.}
\end{equation}
Using further that
\begin{equation}
\int_1^\infty \frac{\ln^k(s)}{1 + s^2} \total s \leq k! \eqend{,}
\end{equation}
we obtain
\begin{splitequation}
\int_0^\infty \frac{\abs{ \ln s }^k}{1 + (s-a)^2} \total s &\leq k! + \frac{3}{4} \pi \ln^k(1+a) + \sum_{m=0}^k \binom{k}{m} \ln^{k-m}(2+a) m! \\
&\leq \left( 2 k! + \frac{3}{4} \pi \right) \ln^k(2+a)
\end{splitequation}
as required.

The remaining terms in equation~\eqref{eq:mink_conv_num} either involve two derivatives $H_\mu$ of the Hadamard parametrix at different points, or the renormalised distribution $H_{\mu\nu}^\text{ren}(z,x)$~\eqref{eq:hmunu_ren}. We start with the latter ones, which contain local terms proportional to $\delta(z-x)$ as well as derivatives acting on the Hadamard parametrix and its square. The local terms allow to perform the integral over $z$, resulting in a factor $f(x_i)$ or $f(y_j)$ which can be estimated by $\norm{ f }_\infty$, and for the remaining terms we can repeat the derivation of the denominator estimates. The terms with derivatives acting on the Hadamard parametrix and its square are integrated by parts to act on $f$, and as before we introduce factors of $[ 1 + \mu^2 u(z)^2 ]$ and $[ 1 + \mu^2 v(z)^2 ]$ and use that $\abs{ [ 1 + \mu^2 u(z)^2 ] [ 1 + \mu^2 v(z)^2 ] \partial_\mu \partial_\nu f(z) } \leq C$. The remaining integral over $z$ is then bounded using equation~\eqref{eq:mink_conv_num_2_int}. We can again absorb the logarithm in the test functions $g$, and since there are $2 m \leq 2 m^2$ terms involving $H_{\mu\nu}^\text{ren}(z,x)$, it follows that also their contribution to the numerator~\eqref{eq:mink_conv_num} is bounded by a sum of the form~\eqref{eq:mink_conv_num_1_bound}.

To bound the remaining terms with two derivatives $H_\mu$ of the Hadamard parametrix at different points, we first consider the case where $\mu = u$, $\nu = v$. Using that $H^\text{F}$ is a fundamental solution of the Klein--Gordon equation~\eqref{eq:hf_fundamental_solution} and the definition of $H_\mu$~\eqref{eq:mink_renorm_hmu_def}, a straightforward computation results in
\begin{splitequation}
\label{eq:mink_conv_hu_hv_ibp}
\int H_{(u}(z,x) H_{v)}(z,y) f(z) \total^2 z &= - 2 \pi^2 H^\text{F}(x,y) [ f(x) + f(y) ] \\
&\quad- 8 \pi^2 \int H^\text{F}(z,x) H^\text{F}(z,y) \partial_u \partial_v f(z) \total^2 z \eqend{.}
\end{splitequation}
To estimate the second term, we introduce factors of $[ 1 + \mu^2 u(z)^2 ]$ and $[ 1 + \mu^2 v(z)^2 ]$ and use Hölder's inequality~\eqref{eq:hoelder} with $r = 2$ to obtain
\begin{splitequation}
&\abs{ \int H^\text{F}(z,x) H^\text{F}(z,y) \partial_u \partial_v f(z) \total^2 z } \leq \norm{ [ 1 + \mu^2 u(\blank)^2 ] [ 1 + \mu^2 v(\blank)^2 ] \partial_u \partial_v f(\blank) }_\infty \\
&\quad\times \left[ \int \frac{\abs{ H^\text{F}(z,x) }^2}{[ 1 + \mu^2 u(z)^2 ] [ 1 + \mu^2 v(z)^2 ]} \total^2 z \int \frac{\abs{ H^\text{F}(z,y) }^2}{[ 1 + \mu^2 u(z)^2 ] [ 1 + \mu^2 v(z)^2 ]} \total^2 z \right]^\frac{1}{2} \eqend{.}
\end{splitequation}
Using the bound~\eqref{eq:mink_conv_hadamard_bound} for the Hadamard parametrix and the estimate~\eqref{eq:mink_conv_num_2_int}, we obtain the bound
\begin{splitequation}
\abs{ \int H^\text{F}(z,x) H^\text{F}(z,y) \partial_u \partial_v f(z) \total^2 z } &\leq C \ln\left( 2 + \mu \abs{u(x)} \right) \ln\left( 2 + \mu \abs{v(x)} \right) \\
&\quad\times \ln\left( 2 + \mu \abs{u(y)} \right) \ln\left( 2 + \mu \abs{v(y)} \right) \eqend{,}
\end{splitequation}
where the constant $C$ depends on $f$, and we can again absorb the logarithms in the test functions $g$. Repeating the derivation of the denominator estimates and taking into account that there are $m^2 + m(m-1) < 2 m^2$ terms with two derivatives $H_\mu$ of the Hadamard parametrix at different points, it follows that also the contribution of the second term in equation~\eqref{eq:mink_conv_hu_hv_ibp} to the numerator~\eqref{eq:mink_conv_num} is bounded by a sum of the form~\eqref{eq:mink_conv_num_1_bound}. On the other hand, the first term in equation~\eqref{eq:mink_conv_hu_hv_ibp} is directly bounded using the bound~\eqref{eq:mink_conv_hadamard_bound} for the Hadamard parametrix, which is however logarithmically divergent for small $u(x,y)$ or $v(x,y)$. We can then almost repeat the derivation of the denominator estimates, except that we need to bound equation~\eqref{eq:mink_conv_boundyoung} in the case that an additional logarithm is present, i.e., we need to bound
\begin{equation}
\label{eq:mink_conv_boundyoung_log}
\iint \Big[ \mu \abs{u(x_j,y_j)} \Big]^{- \rho \frac{\beta^2}{4 \pi}} \abs{\ln\abs{\mu u(x_j,y_k)}} \frac{\total u(x_j)}{[ 1 + \mu^2 u(x_j)^2 ]^\rho} \frac{\total u(y_j)}{[ 1 + \mu^2 u(y_j)^2 ]^\rho}
\end{equation}
in the two cases $k = j$ and $k \neq j$. We start with the case $k = j$, and use the well-known bound\footnote{It can be proven in the standard way, noting that for $f(z) = \ln z - \frac{1}{r} ( z^r - 1 )$ we have $f(1) = 0$ and $z f'(z) = 1 - z^r \leq 0$ for $z \geq 1$, $z f'(z) \geq 0$ for $z \leq 1$.}
\begin{equation}
\ln z \leq \frac{1}{r} ( z^r - 1 ) \leq \frac{1}{r} z^r
\end{equation}
for $z,r \geq 0$ to estimate
\begin{equation}
\Big[ \mu \abs{u(x_j,y_j)} \Big]^{- \rho \frac{\beta^2}{4 \pi}} \abs{\ln\abs{\mu u(x_j,y_j)}} \leq \frac{8 \pi}{\rho \beta^2} \Big[ \mu \abs{u(x_j,y_j)} \Big]^{- \rho \frac{\beta^2}{8 \pi}} \eqend{.}
\end{equation}
In the region $\mu \abs{u(x_j,y_j)} > 1$, we then estimate this term by $8 \pi/(\rho \beta^2)$, and bound equation~\eqref{eq:mink_conv_boundyoung_log} by $8 \pi/(\rho \beta^2) \left[ \int ( 1 + \mu^2 u^2 )^{-\rho} \total u \right]^2 \leq 8 \pi^3/(\rho \beta^2 \mu^2)$. In the region where $\mu \abs{ u(x_j,y_j) } \leq 1$, we use again Young's inequality in the form~\eqref{eq:young3} with the exponents~\eqref{eq:young_qpr_choice}, but now in one dimension and with $\beta^2$ replaced by $\rho \beta^2/2$. This gives
\begin{splitequation}
&\norm{ ( 1 + \mu^2 \blank^2 )^{-\rho} }_p^2 \norm{ \Theta(1 - \mu \abs{\blank}) (\mu \abs{\blank})^{- \rho \frac{\beta^2}{8 \pi}} }_q \\
&\quad= \mu^{-2} \left( \frac{\sqrt{\pi} \Gamma\left( \frac{p \rho - 1}{2} \right)}{\Gamma\left( \frac{p \rho}{2} \right)} \right)^\frac{2}{p} \left( \frac{16 \pi}{8 \pi - \rho \beta^2 q} \right)^\frac{1}{q} \eqend{,}
\end{splitequation}
which is finite with the choice we made for $p$, $q$ and $\rho$ as before. For $k \neq j$, we use again Hölder's inequality~\eqref{eq:hoelder} with $r = ( 4 \pi - \rho \beta^2 )/( 2 \rho \beta^2 )$ to obtain
\begin{splitequation}
&\iint \Big[ \mu \abs{u(x_j,y_j)} \Big]^{- \rho \frac{\beta^2}{4 \pi}} \abs{\ln\abs{\mu u(x_j,y_k)}} \frac{\total u(x_j)}{[ 1 + \mu^2 u(x_j)^2 ]^\rho} \frac{\total u(y_j)}{[ 1 + \mu^2 u(y_j)^2 ]^\rho} \\
&\quad\leq \left[ \iint \Big[ \mu \abs{u(x_j,y_j)} \Big]^{- r \rho \frac{\beta^2}{4 \pi}} \frac{\total u(x_j)}{[ 1 + \mu^2 u(x_j)^2 ]^\rho} \frac{\total u(y_j)}{[ 1 + \mu^2 u(y_j)^2 ]^\rho} \right]^\frac{1}{r} \\
&\qquad\times \left[ \iint \abs{\ln\abs{\mu u(x_j,y_k)}}^\frac{r}{r-1} \frac{\total u(x_j)}{[ 1 + \mu^2 u(x_j)^2 ]^\rho} \frac{\total u(y_j)}{[ 1 + \mu^2 u(y_j)^2 ]^\rho} \right]^\frac{r-1}{r} \eqend{.}
\end{splitequation}
With this choice of $r$, we have $r > 1$ and $r \rho \beta^2/(4\pi) < 1$ such that both integrals are convergent, and we can bound each of them by repeating the estimates used to bound~\eqref{eq:mink_conv_boundyoung_log}. Since there are $m^2 + m(m-1) < 2 m^2$ terms with two derivatives $H_\mu$ of the Hadamard parametrix at different points, it follows that also the contribution of the first term in equation~\eqref{eq:mink_conv_hu_hv_ibp} to the numerator~\eqref{eq:mink_conv_num} is bounded by a sum of the form~\eqref{eq:mink_conv_num_1_bound}.

We thus consider the remaining terms with two derivatives of the Hadamard parametrix at different points for $\mu = \nu = u$; the case $\mu = \nu = v$ is completely analogous. Using the definition of $H_\mu$~\eqref{eq:mink_renorm_hmu_def}, we compute
\begin{splitequation}
&H_u(z,x) H_u(z,y) = \sum_{a,b=\pm} \Theta(a(u+v)(z,x)) \Theta(b(u+v)(z,y)) \\
&\hspace{6em}\times \frac{\partial}{\partial u(z)} \frac{\ln[(u(z,x) - a \mathi \epsilon)^2] - \ln[(u(z,y) - b \mathi \epsilon)^2]}{2 u(x,y) + 2 (a-b) \mathi \epsilon} \\
&= \left[ \frac{\partial}{\partial u(z)} - \frac{\partial}{\partial v(z)} \right] \sum_{a,b=\pm} \Theta(a(u+v)(z,x)) \Theta(b(u+v)(z,y)) \\
&\qquad\times \frac{\ln[(u(z,x) - a \mathi \epsilon)^2] - \ln[(u(z,y) - b \mathi \epsilon)^2]}{2 u(x,y) + 2 (a-b) \mathi \epsilon} \\
&= \left[ \frac{\partial}{\partial u(z)} - \frac{\partial}{\partial v(z)} \right]^2 \sum_{a,b=\pm} \Theta(a(u+v)(z,x)) \Theta(b(u+v)(z,y)) \\
&\quad\times \frac{(u(z,x) - a \mathi \epsilon) \ln[(u(z,x) - a \mathi \epsilon)^2] - (u(z,y) - b \mathi \epsilon) \ln[(u(z,y) - b \mathi \epsilon)^2]}{2 u(x,y) + 2 (a-b) \mathi \epsilon} \eqend{.}
\end{splitequation}
We would like to take the limit $\epsilon \to 0$ of the last fraction in the distributional sense. Clearly, it is an absolutely integrable function for all $\epsilon$ including $\epsilon = 0$ except for a possible singularity at $u(x,y) = 0$. However, this singularity is integrable: in the limit $u(x,y) \to 0$ we have
\begin{splitequation}
&\frac{(u(z,x) - a \mathi \epsilon) \ln[(u(z,x) - a \mathi \epsilon)^2] - (u(z,y) - b \mathi \epsilon) \ln[(u(z,y) - b \mathi \epsilon)^2]}{2 u(x,y) + 2 (a-b) \mathi \epsilon} \\
&= - 1 - \frac{1}{2} \ln[(u(z,x) - a \mathi \epsilon)^2] + \bigo{ u(x,y) } \eqend{,}
\end{splitequation}
which is an absolutely integrable function of $u(z,x)$ for all $\epsilon$. The fraction is therefore bounded by an absolutely integrable function, and using dominated convergence we may thus even take the pointwise limit $\epsilon \to 0$, which results in
\begin{equation}
H_u(z,x) H_u(z,y) = \frac{\partial^2}{\partial u(z)^2} \left[ \frac{u(z,x) \ln[u(z,x)^2] - u(z,y) \ln[u(z,y)^2]}{2 u(x,y)} \right] \eqend{.}
\end{equation}
Using the mean value theorem for $f(u(x)) = u(z,x) \ln[ u(z,x)^2 ]$, we have
\begin{equation}
\frac{u(z,x) \ln[u(z,x)^2] - u(z,y) \ln[u(z,y)^2]}{u(x,y)} = f'(u(a)) = - 2 - \ln[ u(z,a)^2 ]
\end{equation}
for some point $a$ such that $\min(u(x),u(y)) \leq u(a) \leq \max(u(x),u(y))$, and thus
\begin{equation}
H_u(z,x) H_u(z,y) = - \frac{1}{2} \frac{\partial^2}{\partial u(z)^2} \ln\left[ \mu^2 u(z,a)^2 \right] \eqend{.}
\end{equation}
Introducing as before factors $[ 1 + \mu^2 u(z)^2 ]$, it follows that we can estimate
\begin{splitequation}
\abs{ \int H_u(z,x) H_u(z,y) f(z) \total^2 z } &\leq \frac{1}{2} \norm{ [ 1 + \mu^2 u(\blank)^2 ] [ 1 + \mu^2 v(\blank)^2 ] \partial_u^2 f(\blank) }_\infty \\
&\quad\times \int \abs{\ln \abs{u(z,a)}} \frac{\total u(z)}{1 + \mu^2 u(z)^2} \int \frac{\total v(z)}{1 + \mu^2 v(z)^2} \eqend{,}
\end{splitequation}
and the integrals are estimated using equation~\eqref{eq:mink_conv_num_2_int}. Finally, we use
\begin{equation}
\ln\left( 2 + \mu \abs{u(a)} \right) \leq \ln\left( 2 + \mu \abs{u(x)} \right) + \ln\left( 2 + \mu \abs{u(y)} \right) \eqend{,}
\end{equation}
absorb the logarithms in the test functions $g$ and repeat the derivation of the denominator estimates. It follows that also the contribution of the terms with two derivatives of the Hadamard parametrix at different points for $\mu = \nu = u$ and $\mu = \nu = v$ to the numerator~\eqref{eq:mink_conv_num} is bounded by a sum of the form~\eqref{eq:mink_conv_num_1_bound}, using again that we have $m^2 + m(m-1) < 2 m^2$ of this type.

For the first two terms of the stress tensor $T_{\mu\nu} = \op_{\mu\nu} - \frac{1}{2} \eta_{\mu\nu} \op_\rho{}^\rho + g \eta_{\mu\nu} ( V_\beta + V_{-\beta} )$ we can take over the above bounds. For the third term, we use the result~\eqref{eq:mink_renorm_expect_vertex} and thus have to bound
\begin{splitequation}
\label{eq:mink_conv_expect_vertex}
&\sum_{n=0}^\infty \frac{\mathi^n}{n!} \int\dotsi\int \sum_{\sigma_i = \pm 1} \omega^{0,0}\left( \mathcal{T}\left[ V_\beta(g f) \otimes \bigotimes_{j=1}^n V_{\sigma_j \beta}(x_j) \right] \right) \prod_{i=1}^n g(x_i) \total^2 x_i \\
&= \mathi \sum_{m=0}^\infty \frac{(-1)^m}{m! (m+1)!} \mu^{-(m+1) \frac{\beta^2}{2 \pi}} \int f(z) g(z) \int\dotsi\int \\
&\quad\times \left[ \frac{\prod_{1 \leq j < k \leq m} [ u(x_j,x_k) v(x_j,x_k) ]_- \prod_{1 \leq j < k \leq m+1} [ u(y_j,y_k) v(y_j,y_k) ]_-}{(-1)^m \prod_{j=1}^m \prod_{k=1}^{m+1} [ u(x_j,y_k) v(x_j,y_k) ]_-} \right]^\frac{\beta^2}{4 \pi} \\
&\quad\times \left[ \frac{1}{- [ u(y_{m+1},z) v(y_{m+1},z) ]_-} \prod_{j=1}^m \frac{[ u(x_j,z) v(x_j,z) ]_-}{[ u(y_j,z) v(y_j,z) ]_-} \right]^\frac{\beta^2}{4 \pi} \\
&\quad\times \exp\left[ - \frac{\beta^2}{2} \sum_{i=1}^m \sum_{j=1}^m [ W(x_i,x_j) - W(y_i,x_j) - W(x_i,y_j) + W(y_i,y_j) ] \right] \\
&\quad\times \exp\left[ - \beta^2 \sum_{i=1}^m [ W(x_i,z) - W(y_i,z) - W(x_i,y_{m+1}) + W(y_i,y_{m+1}) ] \right] \\
&\quad\times \exp\left[ - \frac{\beta^2}{2} [ W(y_{m+1},y_{m+1}) - 2 W(y_{m+1},z) + W(z,z) ] \right] \\
&\quad\times \total^2 y_{m+1} \prod_{i=1}^m g(x_i) g(y_i) \total^2 x_i \total^2 y_i \total^2 z \eqend{,}
\end{splitequation}
where we used that because of the neutrality condition only odd terms $n = 2m+1$ give a non-vanishing contribution. Of these, $m$ have a positive $\sigma_j$ and $m+1$ have a negative one, such that the sum over the $\sigma_j$ resulted in a factor of $\binom{2m+1}{m} = (2m+1)!/(m!(m+1)!)$, and as before we renamed the integration variables with a negative $\sigma_j$ to $y_j$.
The second assumption on $W$ shows that the exponentials can be bounded by $1$, and setting $x_{m+1} \equiv z$ the terms in brackets combine to
\begin{equation}
\left[ \frac{\prod_{1 \leq j < k \leq {m+1}} [ u(x_j,x_k) v(x_j,x_k) ]_- [ u(y_j,y_k) v(y_j,y_k) ]_-}{(-1)^{m+1} \prod_{j,k=1}^{m+1} [ u(x_j,y_k) v(x_j,y_k) ]_-} \right]^\frac{\beta^2}{4 \pi} \eqend{.}
\end{equation}
We can then use the same steps as in bounding the denominator~\eqref{eq:mink_conv_denom}, with the result that the series~\eqref{eq:mink_conv_expect_vertex} is bounded by
\begin{equation}
\norm{ f }_\infty \sum_{m=0}^\infty \frac{1}{m! (m+1)!} [ (m+1)! ]^{1+\frac{\beta^2}{4 \pi}} K^{m+1} \leq C \sum_{m=0}^\infty (m!)^{\frac{\beta^2}{4 \pi}-1} m^2 K^m \eqend{,}
\end{equation}
with the constant $C$ now also depending on $K$ and thus on $g$. The same bound is obtained analogously for the fourth term in the stress tensor involving $V_{-\beta}$, which switches $x_i$ with $y_i$ in equation~\eqref{eq:mink_conv_expect_vertex}.

Since we have shown that the denominator of the Gell-Mann--Low formula~\eqref{eq:thm_mink_conv_series} is a convergent series in $g$~\eqref{eq:mink_conv_denombound} starting with $1$, and is thus bounded by $\frac{1}{2}$ from below for sufficiently small $\norm{ g }_\infty$, the required bounds for the expectation values of $\op_{\mu\nu}(f)$ and $T_{\mu\nu}(f)$ follow. \hfill\squareforqed

\subsection{Proof of theorem~\ref{thm_mink_cons} (Conservation)}
\label{sec_mink_cons}

As in the Euclidean case, to show that the expectation value of the interacting stress tensor is conserved it is enough to show that the numerator of the Gell-Mann--Low formula vanishes when smeared with a test function of the form $\partial^\mu f$ with $f \istest$. Consider thus the numerator for $\op_{\mu\nu} - \frac{1}{2} \eta_{\mu\nu} \eta^{\rho\sigma} \op_{\rho\sigma}$, obtained as a linear combination of the numerator for $\op_{\mu\nu}$~\eqref{eq:mink_conv_num}, and smear it with $\partial^\mu f$. The result contains now three different types of terms: the ones with the renormalised $H_{\mu\nu}^\text{ren}(x,y)$~\eqref{eq:hmunu_ren}, the ones involving double sums of $H_\mu$, and the ones involving $W$. We start with the latter type, which contains double sums involving $W$, a coincidence limit of derivatives of $W$, and a mixed double sum involving $W$ and $H_\mu$. For the double sums involving $W$, we compute
\begin{splitequation}
&\int \partial^\mu f(z) \left[ \partial_{(\mu} W(z,x) \partial_{\nu)} W(z,y) - \frac{1}{2} \eta_{\mu\nu} \partial^\rho W(z,x) \partial_\rho W(z,y) \right] \total^2 z \\
&\quad= - \frac{1}{2} \int f(z) \left[ \partial^2 W(z,x) \partial_\nu W(z,y) + \partial_\nu W(z,x) \partial^2 W(z,y) \right] \total^2 z \eqend{,}
\end{splitequation}
where $x$ and $y$ do not need to be distinct. Since $W$ is a bisolution of the Klein--Gordon equation $\partial^2_x W(x,y) = \partial^2_y W(x,y) = 0$, these terms vanish. For the terms involving mixed derivatives acting on $W$, we use Synge's rule~\cite{poissonpoundvega2011}
\begin{equation}
\label{eq:synge_rule}
\partial_\mu^z f(z,z) = \left[ \partial_\mu^z f(z,z') + \partial_\mu^{z'} f(z,z') \right]_{z' = z} \eqend{,}
\end{equation}
and compute
\begin{splitequation}
&\int \partial^\mu f(z) \left[ \partial^z_\mu \partial^{z'}_\nu W(z,z') - \frac{1}{2} \eta_{\mu\nu} \partial^z_\rho \partial_{z'}^\rho W(z,z') \right]_{z' = z} \total^2 z \\
&\quad= - \int f(z) \left[ \partial_z^2 \partial^{z'}_\nu W(z,z') - \frac{1}{2} \partial^z_\rho \left( \partial^z_\nu - \partial^{z'}_\nu \right) \partial_{z'}^\rho W(z,z') \right]_{z' = z} \total^2 z \eqend{.}
\end{splitequation}
The first term again vanishes since $W$ is a bisolution, while the second one vanishes because $W(x,y) = W(y,x)$ is symmetric in $x$ and $y$, exchanging $z$ and $z'$ in one of the two parts. Lastly, for the mixed terms involving both $W$ and $H_\mu$, we obtain
\begin{splitequation}
&\int \partial^\mu f(z) \left[ H_{(\mu}(z,x) \partial_{\nu)} W(z,y) - \frac{1}{2} \eta_{\mu\nu} H^\rho(z,x) \partial_\rho W(z,y) \right] \total^2 z \\
&\quad= - \int f(z) \left[ \frac{1}{2} \partial^\mu H_\mu(z,x) \partial_\nu W(z,y) + \frac{1}{2} H_\nu(z,x) \partial^2 W(z,y) \right] \total^2 z \eqend{,}
\end{splitequation}
and since $W$ is a bisolution, the last term vanishes. However, for $\partial^\mu H_\mu$ we obtain instead using the definition of $H_\mu$~\eqref{eq:mink_renorm_hmu_def}
\begin{splitequation}
\partial^\mu H_\mu(z,x) = - 4 \pi \mathi \, \partial^2 H^\text{F}(z,x) = - 4 \pi \mathi \, \delta(z-x) \eqend{,}
\end{splitequation}
since the Hadamard parametrix $H^\text{F}$ is not a bisolution of the (massless) Klein--Gordon equation, but instead a fundamental solution~\eqref{eq:hf_fundamental_solution}, and hence
\begin{splitequation}
&\int \partial^\mu f(z) \left[ H_{(\mu}(z,x) \partial_{\nu)} W(z,y) - \frac{1}{2} \eta_{\mu\nu} H^\rho(z,x) \partial_\rho W(z,y) \right] \total^2 z \\
&\quad= 2 \pi \mathi f(x) \, \partial_\nu W(x,y) \eqend{.}
\end{splitequation}
Similarly, for the terms involving double sums containing $H_{(\mu}(z,x_i) H_{\nu)}(z,x_j)$ with $i \neq j$, we obtain
\begin{splitequation}
&\int \partial^\mu f(z) \left[ H_{(\mu}(z,x) H_{\nu)}(z,y) - \frac{1}{2} \eta_{\mu\nu} H^\rho(z,x) H_\rho(z,y) \right] \total^2 z \\
&\quad= 2 \pi \mathi \, \left[ f(x) \, H_\nu(x,y) + f(y) \, H_\nu(y,x) \right] = 2 \pi \mathi \, \left[ f(x) - f(y) \right] H_\nu(x,y) \eqend{,}
\end{splitequation}
and for the terms involving the renormalised $H_{\mu\nu}^\text{ren}(x,y)$ it follows that
\begin{splitequation}
&\int \partial^\mu f(z) \left[ H_{\mu\nu}^\text{ren}(z,x) - \frac{1}{2} \eta_{\mu\nu} H_\rho^\rho{}^\text{ren}(z,x) \right] \total^2 z \\
&\quad= 2 \pi \mathi \int \partial_\nu \partial^2 f(z) H^\text{F}(z,x) \total^2 z = 2 \pi \mathi \, \partial_\nu f(x) \eqend{,}
\end{splitequation}
where we inserted the explicit form of $H_{\mu\nu}^\text{ren}(x,y)$~\eqref{eq:hmunu_ren}, integrated by parts, and used that $H^\text{F}$ is a fundamental solution~\eqref{eq:hf_fundamental_solution} of the Klein--Gordon equation.

Analogous to the Euclidean case~\eqref{eq:euclid_cons_det_derivative}, we compute
\begin{splitequation}
&\partial_\nu^{x_\ell} \ln \left( \left[ \frac{\prod_{1 \leq j < k \leq m} [ u(x_j,x_k) v(x_j,x_k) ]_- [ u(y_j,y_k) v(y_j,y_k) ]_-}{(-1)^m \prod_{j,k=1}^m [ u(x_j,y_k) v(x_j,y_k) ]_-} \right]^\frac{\beta^2}{4 \pi} \right) \\
&\quad= \delta_\nu^u \frac{\beta^2}{4 \pi} \left[ \sum_{k\neq\ell} \frac{1}{[ u(x_\ell,x_k) ]_{-\sgn (u+v)}} - \sum_{k=1}^m \frac{1}{[ u(x_\ell,y_k) ]_{-\sgn (u+v)}} \right] \\
&\qquad+ \delta_\nu^v \frac{\beta^2}{4 \pi} \left[ \sum_{k\neq\ell} \frac{1}{[ v(x_\ell,x_k) ]_{-\sgn (u+v)}} - \sum_{k=1}^m \frac{1}{[ v(x_\ell,y_k) ]_{-\sgn (u+v)}} \right] \eqend{,}
\end{splitequation}
where we recall that
\begin{equations}[]
[ u(x_\ell,x_k) ]_{- \sgn (u+v)} &= \lim_{\epsilon \to 0} [ u(x_\ell,x_k) - \mathi \epsilon \sgn( (u+v)(x_\ell,x_k) ) ] \eqend{,} \\
[ v(x_\ell,x_k) ]_{- \sgn (u+v)} &= \lim_{\epsilon \to 0} [ v(x_\ell,x_k) - \mathi \epsilon \sgn( (u+v)(x_\ell,x_k) ) ]
\end{equations}
are the distributional boundary values obtained in the physical limit. We multiply by $f(x_\ell)$, sum over $\ell$ and rename summation indices to obtain
\begin{splitequation}
&\sum_{k=1}^m f(x_k) \partial_\nu^{x_k} \ln \left( \left[ \frac{\prod_{1 \leq j < k \leq m} [ u(x_j,x_k) v(x_j,x_k) ]_- [ u(y_j,y_k) v(y_j,y_k) ]_-}{(-1)^m \prod_{j,k=1}^m [ u(x_j,y_k) v(x_j,y_k) ]_-} \right]^\frac{\beta^2}{4 \pi} \right) \\
&\quad= \delta_\nu^u \frac{\beta^2}{4 \pi} \left[ \sum_{1 \leq j < k \leq m}^m \frac{f(x_j) - f(x_k)}{[ u(x_j,x_k) ]_{-\sgn (u+v)}} - \sum_{j,k=1}^m \frac{f(x_j)}{[ u(x_j,y_k) ]_{-\sgn (u+v)}} \right] \\
&\qquad+ \delta_\nu^v \frac{\beta^2}{4 \pi} \left[ \sum_{j=1}^m \sum_{k=j+1}^m \frac{f(x_j) - f(x_k)}{[ v(x_j,x_k) ]_{-\sgn (u+v)}} - \sum_{j,k=1}^m \frac{f(x_j)}{[ v(x_j,y_k) ]_{-\sgn (u+v)}} \right] \\
&\quad= \frac{\beta^2}{4 \pi} \left[ \sum_{1 \leq j < k \leq m} [ f(x_j) - f(x_k) ] H_\nu(x_j,x_k) - \sum_{j,k=1}^m f(x_j) H_\nu(x_j,y_k) \right] \eqend{,}
\end{splitequation}
where in the last equality we used equation~\eqref{eq:mink_renorm_hmu_def} in the limit $\epsilon \to 0$, which can be written as
\begin{equation}
H_u(x,y) = \frac{1}{[ u(x,y) ]_{-\sgn (u+v)}} \eqend{,} \quad H_v(x,y) = \frac{1}{[ v(x,y) ]_{-\sgn (u+v)}} \eqend{.}
\end{equation}
By the same procedure, we also obtain the analogous equation with $x$ and $y$ exchanged. Similarly, we compute
\begin{splitequation}
&\sum_{k=1}^m f(x_k) \partial_\nu^{x_k} \exp\left[ - \frac{\beta^2}{2} \sum_{i,j=1}^m [ W(x_i,x_j) - W(y_i,x_j) - W(x_i,y_j) + W(y_i,y_j) ] \right] \\
&\quad= - \beta^2 \exp\left[ - \frac{\beta^2}{2} \sum_{i,j=1}^m [ W(x_i,x_j) - W(y_i,x_j) - W(x_i,y_j) + W(y_i,y_j) ] \right] \\
&\qquad\times \sum_{j,k=1}^m f(x_k) \partial^{x_k}_\nu [ W(x_k,x_j) - W(x_k,y_j) ] \eqend{,}
\end{splitequation}
where the derivative acts on the first argument of $W$, as well as the analogous equation with $x$ and $y$ exchanged. It follows that the numerator of the Gell-Mann--Low formula for the stress tensor $T_{\mu\nu} = \op_{\mu\nu} - \frac{1}{2} \eta_{\mu\nu} \eta^{\rho\sigma} \op_{\rho\sigma} + g \eta_{\mu\nu} ( V_\beta + V_{-\beta} )$, smeared with $\partial^\mu f$, reduces to
\begin{splitequation}
\label{eq:mink_cons_tmunu_1}
&\sum_{n=0}^\infty \frac{1}{n!} \int\dotsi\int \sum_{\sigma_i = \pm 1} \omega^{0,0}\left( \mathcal{T}\left[ T_{\mu\nu}(\partial^\mu f) \otimes \bigotimes_{j=1}^n V_{\sigma_j \beta}(x_j) \right] \right) \prod_{i=1}^n g(x_i) \total^2 x_i \\
&= - \mathi \sum_{m=0}^\infty \frac{(-1)^m}{(m!)^2} \int\dotsi\int \left[ \frac{\prod_{1 \leq j < k \leq m} [ u(x_j,x_k) v(x_j,x_k) ]_- [ u(y_j,y_k) v(y_j,y_k) ]_-}{(-1)^m \prod_{j,k=1}^m [ u(x_j,y_k) v(x_j,y_k) ]_-} \right]^\frac{\beta^2}{4 \pi} \\
&\quad\times \Bigg[ \sum_{i=1}^m [ \partial_\nu f(x_i) + \partial_\nu f(y_i) ] - \frac{\beta^2}{8 \pi} \sum_{i=1}^m [ \partial_\nu f(x_i) + \partial_\nu f(y_i) ] \Bigg] \\
&\quad\times \mu^{- m \frac{\beta^2}{2 \pi}} \exp\left[ - \frac{\beta^2}{2} \sum_{i,j=1}^m [ W(x_i,x_j) - W(y_i,x_j) - W(x_i,y_j) + W(y_i,y_j) ] \right] \\
&\quad\times \prod_{i=1}^m g(x_i) g(y_i) \total^2 x_i \total^2 y_i \\
&+ \mathi \sum_{m=0}^\infty \frac{(-1)^m}{m! (m+1)!} \int\dotsi\int \left[ \frac{\prod_{1 \leq j < k \leq m+1} [ u(x_j,x_k) v(x_j,x_k) ]_- [ u(y_j,y_k) v(y_j,y_k) ]_-}{(-1)^{m+1} \prod_{j,k=1}^{m+1} [ u(x_j,y_k) v(x_j,y_k) ]_-} \right]^\frac{\beta^2}{4 \pi} \\
&\quad\times \mu^{-(m+1) \frac{\beta^2}{2 \pi}} \exp\left[ - \frac{\beta^2}{2} \sum_{i,j=1}^{m+1} [ W(x_i,x_j) - W(y_i,x_j) - W(x_i,y_j) + W(y_i,y_j) ] \right] \\
&\quad\times [ \partial_\nu f(x_{m+1}) + \partial_\nu f(y_{m+1}) ] \prod_{i=1}^{m+1} g(x_i) g(y_i) \total^2 x_i \total^2 y_i \eqend{,}
\end{splitequation}
where we integrated some derivatives by parts, used that by assumption $g$ is constant on the support of $f$, and for the last two terms in the stress tensor involving the vertex operators used the result~\eqref{eq:mink_conv_expect_vertex} with $z$ renamed to $x_{m+1}$ and the analogous result for $V_{-\beta}$ with $z$ renamed to $y_{m+1}$. Shifting in the last sum the summation index $m \to m-1$, and using that because of the symmetry of the integrand we can replace
\begin{equation}
\partial_\nu f(x_m) + \partial_\nu f(y_m) \to \frac{1}{m} \sum_{i=1}^m [ \partial_\nu f(x_i) + \partial_\nu f(y_i) ] \eqend{,}
\end{equation}
the last sum in equation~\eqref{eq:mink_cons_tmunu_1} coming from the vertex operators in the stress tensor cancels the first sum in brackets in the first sum in equation~\eqref{eq:mink_cons_tmunu_1}, completely analogous to the Euclidean case.

The remaining term is of the same form as the contribution of $V_{\pm \beta}$~\eqref{eq:mink_conv_expect_vertex}, and as in the Euclidean case it follows that a modified stress tensor is conserved in the quantum theory: we have
\begin{equation}
\sum_{n=0}^\infty \frac{1}{n!} \int\dotsi\int \sum_{\sigma_i = \pm 1} \omega^{0,0}\left( \mathcal{T}\left[ \hat{T}_{\mu\nu}(\partial^\mu f) \otimes \bigotimes_{j=1}^n V_{\sigma_j \beta}(x_j) \right] \right) \prod_{i=1}^n g(x_i) \total^2 x_i = 0
\end{equation}
with~\eqref{eq:thm_mink_cons_stresstensor}
\begin{equation}
\label{eq:mink_cons_modified_stress}
\hat{T}_{\mu\nu} \equiv T_{\mu\nu} - \frac{\beta^2}{8 \pi} g \eta_{\mu\nu} ( V_\beta + V_{-\beta} ) \eqend{.}
\end{equation}
One might ask if a further redefinition of time-ordered products could be used to get rid of the extra term in equation~\eqref{eq:mink_cons_modified_stress}, such that the classical stress tensor would also be conserved in the quantum theory. However, this is impossible since the term in question is proportional to $\eta_{\mu\nu}$, and modifying $\op_{\mu\nu}$ by any such term has no effect on the stress tensor. Moreover, redefinitions of time-ordered products only involving $V_\beta$ are not allowed by dimensional reasons in the finite regime $\beta^2 < 4 \pi$, so the modified stress tensor~\eqref{eq:mink_cons_modified_stress} is unique. \hfill\squareforqed

\begin{acknowledgements}
This work is supported by the Deutsche Forschungsgemeinschaft (DFG, German Research Foundation) --- project no. 396692871 within the Emmy Noether grant CA1850/1-1 and project no. 406116891 within the Research Training Group RTG 2522/1. We thank the anonymous referees for useful comments and suggestions.
\end{acknowledgements}

\bibliography{literature}

\end{document}